\newtheorem{theorem}{Theorem}[section]
\newtheorem{lemma}[theorem]{Lemma}
\newtheorem{definition}[theorem]{Definition}
\newtheorem{observation}[theorem]{Observation}
\newtheorem{prop}[theorem]{Proposition}
\def\QED{{\phantom{x}} \hfill \ensuremath{\rule{1.3ex}{1.3ex}}}
\numberwithin{equation}{section}
\renewcommand \thesection {\@arabic\c@section}      
\renewcommand \thetable {\@arabic\c@table}
\renewcommand \thefigure {\@arabic\c@figure}
\def \TitleAndAuthors{
\title{Mixture Selection, Mechanism Design, and Signaling \thanks{This work was supported by NSF grants CCF-1350900, CCF-1423618, CCF-0964481, CCF-1111270, and the last author's Simons Investigator Award from the Simons Foundation.}}
\author{
Yu Cheng
\and
Ho Yee Cheung
\and
Shaddin Dughmi
\and
Ehsan Emamjomeh-Zadeh
\and 
Li Han
\and
Shang-Hua Teng
\thanks{ University of Southern California,
\{yu.cheng.1, hoyeeche, shaddin, emamjome, li.han, shanghua\}@usc.edu.}
}}
\newcommand{\cross}{\times}
\newcommand{\set}[1]{\left\{ #1 \right\}}
\newcommand{\floor}[1]{\lfloor {#1} \rfloor}
\newcommand{\ceil}[1]{\lceil {#1} \rceil}
\renewcommand{\tilde}{\widetilde}
\renewcommand{\bar}{\overline}
\DeclareMathOperator{\poly}{poly}
\def\ex{\qopname\relax n{E}}
\def\min{\qopname\relax n{min}}
\def\max{\qopname\relax n{max}}
\def\maxtwo{\qopname\relax n{max2}}
\def\Pr{\Prob}
\def\Ex{\Exp}
\newcommand{\RR}{\mathbb{R}}
\def\A{\mathcal{A}}
\def\B{\mathcal{B}}
\def\D{\mathcal{D}}
\def\E{\mathcal{E}}
\def\F{\mathcal{F}}
\def\G{\mathcal{G}}
\def\I{\mathcal{I}}
\def\T{\mathcal{T}}
\def\V{\mathcal{V}}
\def\eps{\epsilon}
\def\sse{\subseteq}
\newcommand{\eat}[1]{}
\newcommand{\maxi}[1]{\mbox{maximize} & {#1 } & \\}
\newcommand{\st}{\mbox{subject to} }
\newcommand{\con}[1]{&#1 & \\}
\newenvironment{lp}{\begin{equation}  \begin{array}{lll}}{\end{array}\end{equation}}
\newenvironment{lp*}{\begin{equation*}  \begin{array}{lll}}{\end{array}\end{equation*}}
\newtheorem{assumption}[theorem]{Assumption}
\def\Half{\frac{1}{2}}
\def\gclique{\ensuremath{g^{\text{(clique)}}}}
\def\T{\intercal}
\newcommand{\numact}{n}
\newcommand{\numppl}{k}
\newcommand{\qedsameline}{\belowdisplayskip=-11pt}
\newcommand{\gRev}{g^{\text{(rev)}}}
\newcommand{\corr}[1]{\stackrel[#1]{}{\approx}}
\DeclareMathOperator{\midd}{mid}
\newcommand{\gmid}{g^{(\midd)}}
\newcommand{\prob}{mixture selection }
\newcommand{\probnospace}{mixture selection}
\def\gVotesum{\ensuremath{g^{\text{(vote-sum)}}}}
\def\gVotethresh{\ensuremath{g^{\text{(vote-thresh)}}}}
\def\gVotesmooth{\ensuremath{g^{\text{(vote-smooth-thresh)}}}}
\def\FVotesum{\ensuremath{F^{\text{(vote-sum)}}}}
\def\FVotethresh{\ensuremath{F^{\text{(vote-thresh)}}}}
\def\OVotesum{\ensuremath{OPT^{\text{(vote-sum)}}}}
\def\OVotethresh{\ensuremath{OPT^{\text{(vote-thresh)}}}}
\newcommand{\BBB}{\ensuremath{\mathcal{B}}\xspace}
\newcommand{\CCC}{\ensuremath{\mathcal{C}}\xspace}
\newcommand{\DDD}{\ensuremath{\mathcal{D}}\xspace}
\newcommand{\EEE}{\ensuremath{\mathcal{E}}\xspace}
\newcommand{\FFF}{\ensuremath{\mathcal{F}}\xspace}
\newcommand{\GGG}{\ensuremath{\mathcal{G}}\xspace}
\newcommand{\III}{\ensuremath{\mathcal{I}}\xspace}
\DeclareMathOperator*{\Exp}{E}
\DeclareMathOperator*{\Prob}{Pr}
\newcommand{\Eat}[1]{}
\newcommand{\Norm}[2]{\ensuremath{{||#1||}_{#2}}\xspace}
\newcommand{\infNorm}[1]{\Norm{#1}{\infty}}
\newcommand{\ABS}[1]{\ensuremath{|#1|}\xspace}
\newcommand{\Ordinal}[1]{\ensuremath{{#1}^{\rm th}}}
\newcommand{\Set}[1]{\ensuremath{\{ #1 \}}}
\newcommand{\SpecSet}[2]{\ensuremath{\Set{#1 : #2}}}
\newcommand{\Ceil}[1]{\ensuremath{\left\lceil{#1}\right\rceil}}
\newcommand{\Size}[1]{\ensuremath{| #1 |}}
\def\numrow{n}
\def\numcol{m}
\def\ginp{\ensuremath{t}}
\def\gdom{\ensuremath{[-1, 1]^\numrow}\xspace}
\def\gindices{\ensuremath{[\numrow]}\xspace}
\def\lightSym{\alpha}
\newcommand{\Light}[1]%
{\ensuremath{#1\text{-Light}}\xspace}
\def\aLight{\Light{\lightSym}}
\def\stableSym{\beta}
\newcommand{\gStable}[1]%
{\ensuremath{#1\text{-stable}}\xspace}
\newcommand{\gstable}[1]%
{\ensuremath{#1\text{-stable}}\xspace}
\def\abStable{\gstable{\stableSym}}
\def\lipSym{c}
\newcommand{\Lip}[1]%
{\ensuremath{#1\text{-Lipschitz}}\xspace}
\def\cLip{\Lip{\lipSym}}
\newcommand{\Rel}[1]%
{\ensuremath{#1\text{-relaxation}}\xspace}
\def\xsparse{\ensuremath{\tilde{x}}\xspace}
\def\gSlope{\ensuremath{g^{\text{(slope)}}}}
\newcommand{\gLot}[1]{g^{\text{(lottery)}}_{#1}}
\def\Rev{\text{Rev}\xspace}
\def\Weight{w}
\begin{document}

\TitleAndAuthors

\Eat{
\begin{titlepage}
\maketitle
\thispagestyle{empty}
\begin{abstract}
We pose and study a fundamental algorithmic problem
which we term \emph{\probnospace},
arising as a building block in a number of game-theoretic applications:
Given a function $g$ from the $\numrow$-dimensional hypercube
to the bounded interval $[-1,1]$,
and an $\numrow \times \numcol$ matrix $A$
with bounded entries, maximize $g(Ax)$ over $x$
in the $\numcol$-dimensional simplex.
This problem arises naturally when one seeks to design a lottery over items
for sale in an auction,
or craft the posterior beliefs for agents in a Bayesian game
through the provision of information (a.k.a. signaling).

We present an approximation algorithm for this problem
when $g$ simultaneously satisfies two ``smoothness'' properties:
Lipschitz continuity with respect to the $L^\infty$ norm,
and \emph{noise stability}.
The latter notion, which we define and cater to our setting,
controls the degree to which low-probability --- and possibly correlated --- errors
in the inputs of $g$ can impact its output.
The approximation guarantee of our algorithm degrades gracefully
as a function of the Lipschitz continuity and noise stability of $g$.
In particular, when $g$ is both $O(1)$-Lipschitz continuous and $O(1)$-stable,
we obtain an (additive) polynomial-time approximation scheme (PTAS) for \probnospace.
We also show that neither assumption suffices by itself for an additive PTAS, and both assumptions together do not suffice for an additive fully polynomial-time approximation scheme (FPTAS).

We apply our algorithm for mixture selection
to a number of different game-theoretic applications,
focusing on problems from mechanism design and optimal signaling.
In particular, we make progress on a number of open problems
suggested in prior work by easily reducing them to \probnospace:
we resolve an important special case of the \emph{small-menu lottery design} problem
posed by Dughmi, Han, and Nisan~\cite{DHN14};
we resolve the problem of
\emph{revenue-maximizing signaling in Bayesian second-price auctions}
posed by Emek et al.~\cite{emeksignaling}
and Miltersen and Sheffet~\cite{miltersensignals};
we design a quasipolynomial-time approximation scheme
for the \emph{optimal signaling problem in normal form games}
suggested by Dughmi~\cite{D14};
and we design an approximation algorithm for the optimal signaling problem
in the voting model of Alonso and C\^{a}mara~\cite{Alonso14}.


\end{abstract}
\end{titlepage}
}
\maketitle
\begin{abstract}

\end{abstract}
\pagebreak


\section{Introduction}\label{sec:introduction}

Lotteries, beliefs, mixed strategies ---
all are distributions arising as important objects in game theory.
It is unsurprising, therefore, that algorithmic game theory is rife
with algorithmic problems which --- implicitly or explicitly ---
optimize over the space of distributions,
or equivalently the simplex.
In this paper, we identify a family of algorithmic problems over the simplex
which arise over and over in game theory.
We term problems in this class \emph{\probnospace},
and examine their computational complexity. 

Given a function $g$ from the solid $\numrow$-dimensional hypercube
to the bounded interval $[-1,1]$, and an integer $\numcol$,
we define the \emph{$\numcol$-dimensional \prob problem} for $g$ as follows.
The input to this problem is an  $\numrow \times \numcol$ matrix $A$ with bounded entries,
and the objective is to compute $x \in \Delta_\numcol$ maximizing $g(Ax)$. 
It is natural to expect that the computational complexity of \prob
depends crucially on the ``complexity'' of the function $g$.
We therefore identify two ``smoothness'' parameters of the function $g$
which control the extent to which \prob is tractable,
and derive a simple approximation algorithm with  guarantees degrading gracefully
in those parameters. Moreover, we present evidence --- in the form of hardness results ---
that smoothness in both senses is necessary for the kind of general results we obtain.

The first smoothness quantity is a familiar one,
namely Lipschitz continuity in the $L^\infty$ metric.
The second quantity, which we define and term \emph{noise stability},
borrows ideas from related definitions of stability in other contexts
(e.g. \cite{kkl88, moo10}), though is importantly different.
Informally, a function $g$ from the solid $\numrow$-dimensional hypercube
to the real numbers is \emph{$\beta$-noise stable}
(or $\beta$-stable for short)
if the random corruption of an $\alpha$-fraction of the $\numrow$ inputs to $g$,
with no individual input disproportionately likely to be corrupted,
does not decrease the output of $g$ by more than $\alpha \beta$. 
We note that a Fourier-analytic notion of stability is closely-related to ours ---
we elaborate on this connection in Section~\ref{sec:fourier}.

This paper lays out a framework for tackling \prob problems,
and presents a number of applications in mechanism design and optimal signaling in games.
Notably, we find that we resolve or make progress on a number of known open problems,
and some new ones, using our framework.

\subsection*{Our Results}

Our results for \prob can be viewed
as generalizing the main insights of Lipton et al.~\cite{lmm03}.
First, we show that when $g$ is noise stable and Lipschitz continuous,
and $x \in \Delta_\numcol$ is arbitrary, there is a sparse vector $\xsparse$
for which $g(A \xsparse)$ is not much smaller than  $g(A x)$.
The proof of this fact proceeds by sampling from $x$
and letting $\xsparse$ be the empirical distribution, as in \cite{lmm03}.
However, when $g$ is sufficiently noise stable and Lipschitz continuous,
we obtain a better tradeoff between the number of samples required
and the error introduced into the objective than does \cite{lmm03},
and this is crucial for our applications.
Our analysis bounds the expected difference between $g(A x)$  and $g(A \xsparse)$
as the sum of two terms:
The first term represents the error in the output of $g$
caused by the low-probability ``large errors'' in its $\numrow$ inputs,
and the second term represents the error in the output of $g$
introduced by the higher-probability ``small errors'' in its $\numrow$ inputs.
The first term is bounded using noise stability,
and the second is bounded using Lipschitz continuity. 

Second, we instantiate the above insight algorithmically, as does \cite{lmm03}.
Specifically, our algorithm enumerates vectors $\xsparse$ of the desired sparsity
in order to find an approximately optimal solution to our \prob problem.
We note that our guarantees are all parametrized by the Lipschitz continuity $c$
and the noise stability $\beta$ of the function $g$.
Most notably, we obtain an additive polynomial-time approximation scheme (PTAS)
whenever both $\beta$ and $c$ are constants.

Third, we rule out certain natural extensions of our results assuming well-believed complexity-theoretic conjectures. We show that neither Lipschitz continuity nor noise stability alone suffices for an additive PTAS for mixture selection, and both together do not suffice for an additive fully polynomial-time approximation scheme (FPTAS).  For a function which is $O(1)$-stable
yet $O(1)$-Lipschitz continuous only in the $L^1$ metric,
we show approximation hardness by a reduction from the NP-hard maximum independent set problem.
For a function which is $O(1)$-Lipschitz in $L^\infty$
yet not $O(1)$-stable, we show approximation hardness
by a reduction from the planted clique problem. Finally, for a function which is both $O(1)$-Lipschitz in $L^\infty$ and $O(1)$-stable, we rule out an additive FPTAS via a reduction from the maximum independent set problem.


Despite the simplicity of our framework,
we find that it has powerful implications for problems in mechanism design
and optimal signaling in games.
We feature four natural applications in this paper,
three of which resolve or partially resolve outstanding open problems from prior work:

\begin{enumerate}
\item \textbf{Lottery design}:
Dughmi, Han, and Nisan~\cite{DHN14}
examined one of the most basic problems in mechanism design:
that of designing the revenue-maximizing multi-item auction
for a single unit-demand buyer with valuation represented implicitly via a sampling oracle.
They reduced this problem to a regularized variant of itself,
namely optimally designing a small number of lottery-price pairs ---
a \emph{small menu} --- from which the buyer is allowed to choose.
We apply our framework to resolve the special case of this problem
with a single lottery --- i.e., a menu of size $1$.
This follows from the Lipschitz continuity and  noise stability
of the function
$\gLot{\Weight}(t) := \max\limits_{p}
\Set{p \cdot \sum_{i = 1}^{n} \Weight_i \cdot I[t_i \geq p]}$ for an arbitrary weight vector $w \in \Delta_n$,
where $I[\EEE]$ is the indicator function for the event $\EEE$.
\label{problist:lotterydesign}

\item \textbf{Revenue-maximizing signaling in probabilistic second-price auctions}:
Emek et al.~\cite{emeksignaling} and Miltersen and Sheffet~\cite{miltersensignals}
considered signaling in the context of a probabilistic second-price auction.
In particular, the attributes of the item for sale are unknown,
and the auctioneer must decide what information to reveal
in order to maximize his revenue in this auction.
This is particularly relevant in advertising auctions,
where items are impressions associated with demographics that are a-priori unknown
to the advertisers bidding in the auction.
Whereas both papers presented a polynomial-time algorithm for this problem
when bidder types are fixed, the general problem was shown to be NP-hard
and its approximability was left largely open.
Using our framework, a PTAS for the general problem follows easily.
We use the fact that the function $\maxtwo$,
which simply returns the second largest entry of a vector,
is Lipschitz continuous and noise stable.
\label{problist:auctionsignaling}

\item \textbf{Persuasion in voting}:
Alonso and C\^{a}mara~\cite{Alonso14} examine a simple election
for selecting a binary outcome --- say whether a ballot measure is passed ---
when voters are not fully informed of the consequences of the measure,
and hence of their utilities. Each voter casts a Yes/No vote,
and the measure passes if the fraction of Yes votes exceeds a certain pre-specified threshold.
A principal --- say a moderator of a political debate ---
can determine the protocol --- or signaling scheme ---
through which information regarding the measure is gathered and shared with voters.
We consider a principal concerned with maximizing the probability of
the measure passing.
\cite{Alonso14} characterize the optimal signaling scheme
and a number of its properties, though stop short of deriving an algorithm
for optimal signaling.
We design a multi-criteria PTAS for this problem using our framework. Along the way, we also design a bi-criteria PTAS for the related problem of maximizing the expected number of Yes votes in the election.
For both results, we use the fact that the function
$\gVotesum(t) =  \frac{1}{n} \Size{\set{i: t_i \ge 0}}$ 
is noise stable and Lipschitz continuous in a bi-criteria sense.
\label{problist:voting}

\item \textbf{Optimal signaling in normal form games}:
Dughmi~\cite{D14} examined the problem of optimal signaling in abstract normal form games,
and ruled out an FPTAS even for two-player zero-sum games.
The possibility for a PTAS for two-player zero-sum games,
and a QPTAS for general games with a constant number of players, were left open.
We show that a bi-criteria QPTAS for normal-form games with a constant number of players
follows from our framework, and applies to a large and natural class of objective functions.
We use the fact that every function is $O(\numrow)$-stable,
and the fact that the function measuring the quality of equilibria satisfies a bi-criteria notion of Lipschitz continuity which we define.
\label{problist:gamesignaling}

\end{enumerate}

\subsection*{Additional Discussion of Related Work}

As previously described, our framework generalizes and refines the main insight of \cite{lmm03}. The recent work of Barman~\cite{barman15caratheodory} is also similar in spirit; in particular, the approximate variant of Caratheodory's theorem employed in that paper can be viewed as a mixture selection problem with  $g(Ax) = - ||Ax - Ax^*||_p$ for a fixed vector $x^*$ and norm $p \geq 2$. Even though this function $g$ is neither Lipschitz continuous in $L^\infty$ nor noise stable, Barman exhibits a PTAS under the assumption that the columns of $A$ have small (i.e. constant) $p$-norm.




\section{Framework}\label{sec:framework}
For a function $g: \gdom \rightarrow [-1, 1]$ and a positive integer $\numcol$,
we define the following optimization problem
which we term $\numcol$-dimensional \emph{\prob} for $g$:
given an $\numrow \times \numcol$ matrix $A$ with entries in $[-1, 1]$,
find $x$ in the $\numcol$-dimensional simplex $\Delta_\numcol$
maximizing $f(x) := g(A x)$.
In this section, we present our notion of \emph{noise stability},
and derive approximation algorithms for this problem
when the function $g$ is simultaneously noise stable
and Lipschitz continuous with respect to the $L^\infty$ metric.
Moreover, we show that neither requirement alone suffices for our results. 

Our approximation guarantees will be additive
--- i.e., an $\epsilon$-approximation algorithm for \prob
outputs $x \in \Delta_\numcol$
with $f(x) \geq \max_{y \in \Delta_\numcol} f(y) - \eps$.
To illustrate our techniques, we use the following function
$\gmid: \gdom \to [-1, 1]$,
which averages all but the top and bottom quartiles of its inputs,
as a running example.
$$\gmid(t) = \frac{1}{\Ceil{3n/4} - \floor{n/4}} 
\sum\limits_{i=\floor{n/4} + 1}^{\Ceil{3n/4}} t_{[i]},$$
where $t_{[i]}$ denote the \Ordinal{i} largest entry of $t$.
Throughout the paper, we use $t_i$ to denote the \Ordinal{i} entry of $t$,
and use $t_{[i]}$ to denote the \Ordinal{i} largest entry of $t$.

Though we present our framework for functions $g:[-1,1]^n \to [-1,1]$,
we define mixture selection similarly for functions $g:[0,1]^n \to [0,1]$. The two definitions are equivalent up to normalization,
and it is easy to verify that all our results and bounds for mixture selection
carry through unchanged to either definition.


Our main result applies to functions $g$ which are
  both noise stable and Lipschitz continuous with respect to the $L^\infty$ metric. 
We now formalize these two conditions.

\subsection*{Lipschitz Continuity}

A function $g : \gdom \rightarrow [-1, 1]$
is {\em  \cLip continuous in $L^\infty$} --- or \cLip for short ---
if and only if for all $t, t'$ in the domain of $g$,
$\ABS{g(t) - g(t')} \leq c \infNorm{t - t'}$.
To illustrate, our example function $\gmid$ is $1$-Lipschitz.
We note that Lipschitz continuity in $L^\infty$
is a stronger assumption than in any other $L^p$ norm.

\subsection*{Noise Stability}

Our notion of noise stability captures the following desirable property
of a function $g: \gdom \to [-1, 1]$:
if a random process corrupts (i.e., modifies arbitrarily)
some of the inputs to $g$,
with no individual input disproportionately likely to be corrupted,
then the output of $g$ does not decrease by much in expectation.
Such random corruption patterns are captured
by our notion of a \emph{light distribution}
over subsets of $[\numrow]$, defined below.

\begin{definition}[Light Distribution]\label{def:light}
Let \DDD be a distribution supported on subsets of $[\numrow]$.
For $\alpha \in (0,1]$,
we say \DDD is \aLight if and only if
the following holds for all $i \in [\numrow]$:
$\Prob_{R \sim \DDD} [i \in R] \leq \alpha.$
\end{definition}

In other words, a light distribution
bounds the \emph{marginal probability}
of any individual element of $[\numrow]$.
When corrupted inputs follow a light distribution,
no individual input is too likely to be corrupted.
However, we note that our notion of light distribution allows
arbitrary correlations between the corruption events of various inputs.
We define a noise stable function as one which is robust,
in an average sense, to corrupting a subset $R$ of its $\numrow$ inputs
when $R$ follows a light distribution \DDD.
Our notion of robustness is one-sided:
we only require that our function's output not decrease substantially in expectation.
This one-sided guarantee suffices for all our applications,
and is necessitated by some.
We note that the light distribution \DDD,
as well as the (corrupted) inputs,
are chosen adversarially.
We make use of the following notation in our definition:
Given  vectors $t, t' \in \gdom$ and a set $R \sse [\numrow]$,
we say $t' \corr{R}  t$ if $t_i = t'_i$ for all $i \not\in R$.
In other words, if $t' \corr{R} t$,
then $t'$ is a result of corrupting the entries of $t$ corresponding to $R$. 

\begin{definition}[Noise Stability]\label{def:strong-stability}
Given a function $g : \gdom \rightarrow [-1, 1]$
and a real number $\beta \geq 0$, we say $g$ is $\beta$-stable
if and only if the following holds for all $t \in \gdom$,
$\alpha \in (0,1]$, and \aLight distributions \DDD over subsets of $[\numrow]$:
\[\Exp\limits_{R \sim \DDD} \left[
\min\SpecSet{g(t')}{t' \corr{R} t} \right] \geq g(t) - \alpha \beta.\]
\end{definition}

To illustrate this definition,
we show that our example function $\gmid$ is \gStable{4}.
To see this, observe that changing $k$ entries of the input to $\gmid$
can decrease its output by at most $\frac{4k}{n}$.
When $R$ is drawn from an \aLight distribution
and $t$ is an arbitrary input,
$4$-stability therefore follows from the linearity of expectations:
\[
  \Exp_{R \sim \DDD} \left[\min\{\gmid(t') : t' \corr{R} t\}\right] \geq
  \Exp_{R \sim \DDD} \left[ \gmid(t) - \frac{4|R|}{n}\right] \geq \gmid(t) - 4 \alpha.
\]
We note that every function $g:\gdom \to[-1,1]$
  is $2n$-stable, which follows from the union bound.

As a useful building block for proving some of our functions stable,
we show that stable functions can be combined to yield other stable functions
if composed with a convex, nondecreasing, and Lipschitz continuous function.

\begin{prop}\label{prop:close}
Fix $\beta,c \geq 0$,
and let $g_1, g_2,\ldots, g_k: \gdom \rightarrow [-1, 1]$
be $\beta$-stable functions.
For every convex function $h: [-1,1]^k \to [-1,1]$
which is nondecreasing in each of its arguments
and $c$-Lipschitz continuous in $L^\infty$,
the function $g(t) := h(g_1(t),\ldots,g_k(t))$ is $(\beta c)$-stable.
\end{prop}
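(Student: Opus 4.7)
The plan is to prove the inequality defining $(\beta c)$-stability by a clean three-step chain, using each hypothesis on $h$ exactly once. Fix any $t \in \gdom$, $\alpha \in (0,1]$, and \aLight distribution \DDD; the goal is to show $\Exp_{R \sim \DDD}[\min_{t' \corr{R} t} g(t')] \geq g(t) - c\alpha\beta$. For each realization of $R$, I will first pull the inner $\min$ inside $h$ using monotonicity: since $h$ is nondecreasing in each argument, evaluating $h$ at the coordinatewise minima of $g_1(t'), \ldots, g_k(t')$ taken over $t' \corr{R} t$ can only make it smaller, yielding
\[ \min_{t' \corr{R} t} g(t') \;\geq\; h\!\left(\min_{t' \corr{R} t} g_1(t'),\, \ldots,\, \min_{t' \corr{R} t} g_k(t')\right). \]

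Next, I will take expectations over $R$ and apply the multivariate Jensen inequality, using convexity of $h$, to pull the expectation inside $h$. Denote $m_i := \Exp_{R}[\min_{t' \corr{R} t} g_i(t')]$. By $\beta$-stability of $g_i$, $m_i \geq g_i(t) - \alpha\beta$; and since $t$ itself is a valid choice of $t'$, also $m_i \leq g_i(t)$. Hence $|m_i - g_i(t)| \leq \alpha\beta$ and $(m_1, \ldots, m_k) \in [-1,1]^k$, so I may appeal to the $c$-Lipschitz continuity of $h$ in $L^\infty$ to conclude
\[ h(m_1, \ldots, m_k) \;\geq\; h(g_1(t), \ldots, g_k(t)) - c \max_i |m_i - g_i(t)| \;\geq\; g(t) - c\alpha\beta. \]

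I do not expect a genuine obstacle here; the only thing requiring care is bookkeeping --- ensuring each inequality points the right way. Monotonicity is used to lower-bound the minimum by $h$ at coordinatewise minima; convexity of $h$ combined with Jensen's inequality yields a lower bound on the expectation of $h$; and Lipschitz continuity converts $|m_i - g_i(t)| \leq \alpha\beta$ into the target additive loss of $c\alpha\beta$. The two-sided bound $g_i(t) - \alpha\beta \leq m_i \leq g_i(t)$ additionally guarantees that $h$ is never evaluated outside its domain $[-1,1]^k$, so no clipping or extension argument is needed.
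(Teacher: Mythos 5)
Your proof is correct and takes essentially the same three-step chain as the paper: monotonicity of $h$ to pull the minimum inside, Jensen's inequality (via convexity) to pull the expectation inside, and then stability plus Lipschitz continuity to finish. The one small refinement you make is in the last step: the paper applies monotonicity of $h$ once more to pass to $h\bigl(g_1(t)-\alpha\beta,\ldots,g_k(t)-\alpha\beta\bigr)$ before invoking Lipschitz continuity, whereas you apply Lipschitz continuity directly between $(m_1,\ldots,m_k)$ and $(g_1(t),\ldots,g_k(t))$ using the two-sided bound $g_i(t)-\alpha\beta \le m_i \le g_i(t)$; as you note, this keeps $h$ evaluated strictly within its stated domain $[-1,1]^k$, avoiding the edge case where $g_i(t)-\alpha\beta < -1$ that the paper's phrasing glosses over.
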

\begin{proof}\qedsameline
For all $t \in \gdom$ and all \aLight distributions \DDD,
\begin{align*}
\Exp\limits_{R \sim \DDD} [{\min\limits_{t' \corr{R} t} g(t')}]  &=
\Exp\limits_{R \sim \DDD}\ [{\min\limits_{t ' \corr{R} t} h(g_1(t'),\ldots,g_k(t'))}]  \\
&\geq \Exp\limits_{R \sim \DDD}\     
[h(\min\limits_{t' \corr{R} t} g_1(t'), \ldots, \min\limits_{t' \corr{R} t} g_k(t') )]  &
\mbox{(Since $h$ is nondecreasing)} \\
&\geq    h(\Exp\limits_{R \sim \DDD}
[\min\limits_{t' \corr{R} t} g_1(t')], \ldots, 
\Exp\limits_{R \sim \DDD}[\min\limits_{t' \corr{R} t} g_k(t')] )
& \mbox{(Jensen's inequality)} \\
&\geq    h(g_1(t) - \alpha \beta, \ldots, g_k(t) - \alpha \beta)  & 
\mbox{(Stability of each $g_i$)}\\
&\geq   h(g_1(t), \ldots, g_k(t) ) - \alpha \beta c
&\mbox{(Lipschitz continuity of $h$)}\\
&= g(t) - \alpha \beta c.
\end{align*}
\end{proof}

\noindent As a consequence of the above proposition,
a convex combination of $\beta$-stable functions is $\beta$-stable,
and the point-wise maximum of $\beta$-stable functions is $\beta$-stable. 

\subsection*{Consequences of Noise Stability and Lipschitz Continuity}

We now state the two main results of our framework.
Both results apply to functions $g: \gdom \to [-1, 1]$
which are simultaneously Lipschitz continuous and noise stable,
and $\numrow \times \numcol$ matrices $A$ with entries in $[-1, 1]$.
Given a vector $x \in \Delta_\numcol$ and integer $s > 0$,
we view $x$ as a probability distribution over $[\numcol]$,
and use the random variable $\tilde{x} \in \Delta_\numcol$
to denote the empirical distribution of $s$ i.i.d.\ samples from $x$.
Formally, $\tilde{x} = \frac{1}{s} \sum_{i = 1}^s e_{k_i}$,
where $k_1,\ldots,k_s \in [\numcol]$ are drawn i.i.d.\ according to $x$,
and $e_j \in \Delta_\numcol$
denotes the \Ordinal{j} standard basis vector.
Since $\tilde{x}$ is the average of $s$ standard basis vectors,
we say it is \emph{$s$-uniform}.

\begin{definition}\label{def:uniform}
We refer to a distribution $y \in \Delta_\numcol$
as \emph{$s$-uniform} if and only if it is the average of a multiset of
$s$ standard basis vectors in $\numcol$-dimensional space.
\end{definition} 

Our first result shows that when the number of samples $s$
is chosen as a suitable function of
the Lipschitz continuity and noise stability parameters,
$g(A \tilde{x})$ is not much smaller than $g(A x)$
in expectation over $\tilde{x}$.
At a high level, we bound this difference as a sum of two error terms:
one accounts for the effect of
low-probability large errors in the inputs $\tilde{t} = A \tilde{x}$ to $g$,
and the other accounts for
effect of higher-probability small errors in the inputs $\tilde{t}$.
The former error term is bounded using noise stability,
and the latter error term is bounded using Lipschitz continuity.

\begin{theorem}
\label{thm:main}
Let $g:\gdom \rightarrow [-1, 1]$ be \abStable and \cLip in $L^\infty$,
let $A$ be an $\numrow \times \numcol$ matrix with entries in $[-1, 1]$,
let $\alpha, \delta > 0$,
and let  $s \geq 2 \ln({\frac{2}{\alpha}}) / \delta^2$ be an integer.
Fix a vector $x \in \Delta_\numcol$,
and let the random variable $\tilde{x}$
denote the empirical distribution of $s$ i.i.d.\ samples
from probability distribution $x$.
The following then holds:
$\Exp[g(A\tilde{x}))] \geq g(A x) - \alpha \beta - c \delta.$
\end{theorem}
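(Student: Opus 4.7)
\medskip

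\noindent\textbf{Proof plan.} Write $t = Ax$ and $\tilde t = A\tilde x$. Since $\tilde x$ is the empirical distribution of $s$ i.i.d.\ samples $k_1,\ldots,k_s$ drawn from $x$, each coordinate $\tilde t_i = \tfrac{1}{s}\sum_{j=1}^{s} A_{i,k_j}$ is an average of $s$ i.i.d.\ random variables in $[-1,1]$ with mean $t_i$. The plan is to decompose the error $g(t) - g(\tilde t)$ into two pieces: a ``large error on few coordinates'' piece controlled by noise stability, and a ``small error on all coordinates'' piece controlled by Lipschitz continuity. The bridge between these two regimes is Hoeffding's inequality, which tells us that each coordinate is ``accurate to within $\delta$'' except with low probability.

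\medskip

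\noindent\textbf{Step 1 (Hoeffding tail).} For each $i \in [\numrow]$, Hoeffding's inequality gives
\[
\Prob\bigl[|\tilde t_i - t_i| > \delta\bigr] \;\leq\; 2 \exp\!\bigl(-s \delta^2 / 2\bigr) \;\leq\; \alpha,
\]
using $s \geq 2\ln(2/\alpha)/\delta^2$. Let $R := \{i \in [\numrow] : |\tilde t_i - t_i| > \delta\}$ be the (random) set of coordinates where the sample mean deviates significantly from its mean. The previous display implies $\Prob[i \in R] \leq \alpha$ for every $i$, so the distribution $\DDD$ of $R$ is \aLight in the sense of Definition~\ref{def:light}.

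\medskip

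\noindent\textbf{Step 2 (bridge vector $t''$).} Given $\tilde t$ and $R$, define the intermediate vector $t'' \in \gdom$ by $t''_i := t_i$ if $i \notin R$ and $t''_i := \tilde t_i$ if $i \in R$. Two properties hold by construction: (i) $t'' \corr{R} t$, since $t''$ agrees with $t$ outside $R$; and (ii) $\infNorm{t'' - \tilde t} \leq \delta$, since on $R$ the two vectors coincide, while on the complement of $R$ we have $|t_i - \tilde t_i| \leq \delta$ by definition of $R$.

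\medskip

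\noindent\textbf{Step 3 (combine Lipschitz and stability).} From (ii) and \cLip continuity of $g$, pointwise $g(\tilde t) \geq g(t'') - c\delta$. From (i), pointwise $g(t'') \geq \min\{g(t') : t' \corr{R} t\}$. Taking expectation over the samples (which induces the joint law of $R$ and $\tilde t$, with the marginal of $R$ being \aLight by Step~1), and then invoking \abStable-ness of $g$ (Definition~\ref{def:strong-stability}), we conclude
\[
\Exp[g(\tilde t)] \;\geq\; \Exp\bigl[\min\{g(t') : t' \corr{R} t\}\bigr] - c\delta \;\geq\; g(t) - \alpha \beta - c\delta,
\]
which is exactly the desired bound.

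\medskip

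\noindent\textbf{Main obstacle.} The calculations are routine; the real content is the choice of the intermediate vector $t''$ in Step~2, which cleanly separates the ``bad'' coordinates (handled adversarially by stability, at cost $\alpha\beta$) from the ``good'' coordinates (handled uniformly by Lipschitzness, at cost $c\delta$). A minor subtlety to be careful about is that the randomness in $R$ and in the residual noise on the good coordinates come from the same underlying samples; this is handled by noting that Definition~\ref{def:strong-stability} only constrains the marginal distribution of $R$, so joint correlations with $\tilde t$ cause no trouble.
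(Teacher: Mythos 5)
Your proposal is correct and follows essentially the same route as the paper's proof: same intermediate vector (your $t''$ is the paper's $t'$), same Hoeffding argument to show the set of corrupted coordinates follows an $\alpha$-light distribution, and the same two-term decomposition bounded by stability and Lipschitz continuity respectively. Your explicit remark that Definition~\ref{def:strong-stability} constrains only the marginal law of $R$, so correlation between $R$ and the residual noise is harmless, is a nice clarification that the paper leaves implicit.
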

\begin{proof}\qedsameline
Denote $t = Ax$ and $\tilde{t} = A \tilde{x}$.
Note that $\tilde{t}$ is a random variable.
Also note that  $t_i$ and $\tilde{t}_i$ can be viewed
as the mean and empirical mean, respectively,
of a distribution supported on $A_{i, 1},\ldots,A_{i, \numcol} \in [-1, 1]$.
We say the \Ordinal{i} entry of $t$ is \emph{approximately preserved}
if $\ABS{t_i - \tilde{t}_i} \leq \delta$,
and we say it is \emph{corrupted} otherwise. Let $R \sse [n]$ denote the set of corrupted entries.
Hoeffding's inequality, and our choice of the number of samples $s$,
imply that $R$ follows an \aLight distribution.

Let $t'$ be such that
  (1) $t'_i = \tilde{t}_i$ for $i \in R$, and 
  (2) $t'_i = t_i$ otherwise. 
Observe that $t' \corr{R} t$, and $\infNorm{t' - \tilde{t}} \leq \delta$.
We can now bound the expected difference between $g(t)$ and $g(\tilde{t})$
as a sum of the error introduced by corrupted entries
and the error introduced by the approximately preserved entries of $t$:
\begin{align*}
  g(t) - \Ex[ g(\tilde{t}) ] =
  \Exp[ g(t) - g(t')] + \Exp[g(t') - g(\tilde{t})] 
  \leq \alpha \beta + c \delta.
\end{align*}
\end{proof}

Notice that if we fix the desired approximation error $\eps$,
the minimum required number of samples~$s$ in Theorem~\ref{thm:main}
to guarantee that $\Exp[g(A\tilde{x}))] \geq g(A x) - \eps$
is obtained by minimizing $\ceil{2 \ln({\frac{2}{\alpha}}) / \delta^2}$
over $\alpha, \delta > 0$ satisfying $\alpha \beta + \delta c \leq \eps$.
Therefore, the required number of samples depends only on the error term $\eps$,
the noise stability parameter~$\beta$,
and the Lipschitz continuity parameter~$c$; in particular,
it is independent of $\numrow$ and $\numcol$.

As a corollary of Theorem \ref{thm:main}, we derive the following algorithmic result.

\begin{theorem}\label{thm:PTAS}
Let  $g:\gdom \to [-1, 1]$ be \abStable and \cLip,
and let $\numcol > 0$ be an integer.
For every $\delta, \alpha > 0$,
the $\numcol$-dimensional \prob problem for $g$
admits an $(\alpha \beta + c \delta)$-approximation algorithm in the additive sense,
with runtime $\numrow \cdot \numcol ^ {O(\log(1/\alpha) / \delta^2)} \cdot T$,
where $T$ denotes the time needed to evaluate $g$ on a single input.
\end{theorem}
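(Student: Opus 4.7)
The plan is to derandomize Theorem~\ref{thm:main} by brute-force enumeration. Set $s = \lceil 2 \ln(2/\alpha)/\delta^2 \rceil$, and let $\mathcal{U}_s \subseteq \Delta_\numcol$ denote the set of all $s$-uniform distributions in the sense of Definition~\ref{def:uniform}, i.e., averages of multisets of $s$ standard basis vectors. The algorithm simply iterates over every $y \in \mathcal{U}_s$, computes $g(Ay)$, and outputs the $y$ with the largest value.

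For correctness, I would apply Theorem~\ref{thm:main} to an optimal solution $\xopt \in \argmax_{x \in \Delta_\numcol} g(Ax)$. Letting $\xsparse$ denote the empirical distribution of $s$ i.i.d.\ samples from $\xopt$, the theorem gives
\[
\Exp[g(A\xsparse)] \;\geq\; g(A\xopt) - \alpha\beta - c\delta.
\]
Since $\xsparse$ is always supported on $\mathcal{U}_s$, the left-hand side is a convex combination of values $\{g(Ay) : y \in \mathcal{U}_s\}$, so at least one $y \in \mathcal{U}_s$ satisfies $g(Ay) \geq g(A\xopt) - \alpha\beta - c\delta$. Because the algorithm inspects every element of $\mathcal{U}_s$, it returns a solution at least this good.

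For the runtime, the key counting estimate is $|\mathcal{U}_s| = \binom{\numcol + s - 1}{s} \leq \numcol^{O(s)}$ in the relevant regime $s \leq \numcol$ (if $s > \numcol$, the enumeration collapses onto the much smaller family of distributions supported on $[\numcol]$ with common denominator $s$, and in particular one can just take $s := \numcol$). Each candidate $y \in \mathcal{U}_s$ is encoded by a multiset of $s$ indices in $[\numcol]$, so $Ay$ is the average of the $s$ corresponding columns of $A$ and can be formed in $O(\numrow s)$ time; evaluating $g$ on $Ay$ then costs $T$. Multiplying gives total runtime $\numcol^{O(s)} \cdot O(\numrow s + T) = \numrow \cdot \numcol^{O(\log(1/\alpha)/\delta^2)} \cdot T$, matching the stated bound.

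I do not anticipate a genuine technical obstacle: Theorem~\ref{thm:main} already absorbs the probabilistic content, and the remainder is a standard probabilistic-to-deterministic lift together with a routine bound on the number of size-$s$ multisets of $[\numcol]$. The only care required is to confirm that the per-candidate cost is dominated by $O(\numrow s + T)$, and to handle (trivially) the degenerate case $s > \numcol$.
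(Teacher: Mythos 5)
Your proposal is correct and follows exactly the paper's proof: enumerate all $s$-uniform distributions for $s = \lceil 2\ln(2/\alpha)/\delta^2\rceil$, apply Theorem~\ref{thm:main} to an optimal $x^\ast$, and invoke the probabilistic method to conclude that some enumerated candidate achieves the claimed additive guarantee. The extra bookkeeping you supply (counting multisets, per-candidate evaluation cost, and the degenerate $s > m$ case) is a faithful elaboration of the same argument rather than a different route.
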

\begin{proof}
Let $s \ge 2 \ln(2/\alpha) / \delta^2$ be an integer.
Our algorithm simply enumerates all $s$-uniform distributions,
and outputs the one maximizing $g(A x)$.
This takes time $\numrow \cdot \numcol^{O(s)} \cdot T$.
The approximation guarantee follows from Theorem~\ref{thm:main}
and the probabilistic method.
\end{proof}

As a consequence of Theorem~\ref{thm:PTAS},
the \prob problem for $\gmid$ admits a polynomial-time approximation scheme (PTAS)
in the additive sense.
The same holds for every function $g$
which is $O(1)$-stable and $O(1)$-Lipschitz continuous.
Specifically, by setting
$\alpha = \frac{\eps}{2\beta}$ and $\delta=\frac{\eps}{2c}$,
an $\epsilon$-approximation algorithm runs in time
$n \cdot m ^ {O(c^2 \log(\beta/\eps) / \eps^2)} \cdot T$.
Interestingly, neither noise stability nor Lipschitz continuity alone suffices for such a PTAS,
as we argue in the next subsection.

\subsection*{The Necessity of \emph{Both} Noise Stability and Lipschitz Continuity}

We now present evidence that both our assumptions ---
Noise stability and Lipschitz continuity ---
appear necessary for general positive results
along the lines of those in Theorem~\ref{thm:PTAS}. 

\begin{enumerate}

\item \emph{Stability alone is not sufficient.}
In Section~\ref{sec:hardness:nolip},
we define a function
$\gSlope : [0,1]^\numrow \rightarrow [0, 1]$
which is \gStable{1}.
Furthermore, $\gSlope$ is \Lip{O(1)} with respect to the $L^1$ metric,
which is a weaker property than Lipschitz continuity with respect to $L^\infty$.
We show in Theorem~\ref{thm:gSlope:hard} that
there is a polynomial-time reduction from the maximum independent set problem
on $n$-node graphs
to the $n$-dimensional \prob for $\gSlope$.
Moreover, the reduction precludes a polynomial-time $\epsilon$-approximation algorithm
in the additive sense for some constant $\eps > 0$.

\item \emph{Lipschitz continuity alone is not sufficient.}
One might hope to prove NP-hardness of \prob in the absence of stability.
However, we are out of luck in this regard:
since every function $g: \gdom \to [-1, 1]$ is $2\numrow$-stable,
Theorem~\ref{thm:PTAS} implies a quasipolynomial-time approximation scheme
in the additive sense whenever $g$ is $O(1)$-Lipschitz.
Nevertheless, we prove hardness of approximation
assuming the \emph{planted clique conjecture}
(\cite{jerrum92} and \cite{kucera95}).

More specifically, in Section~\ref{sec:hardness:nostable}
we exhibit a reduction from the planted $k$-clique problem
to \prob for the $3$-Lipschitz function $\gclique_k(t) = t_{[k]} - t_{[k+1]} + t_{[n]}$.
When $k = \omega(\log^2n)$  and
$A$ is the adjacency matrix of an $n$-node undirected graph $G$,
we show that $\max_{x} \gclique_k(A x) \approx 1$ with high probability if $G$ contains a $k$-clique,
and $\max_{x} \gclique_k(A x) \approx \frac{1}{2}$ with high probability
if $G$ is the Erd\"os-R\'enyi random graph $G(n,\frac{1}{2})$.
\end{enumerate}

\subsection*{A Bi-criteria Extension of the Framework}

We have already showed that in the absence of Lipschitz continuity,
one can not hope for a PTAS in general.
Motivated by two of our applications, namely \emph{Optimal signaling in normal form games}
and \emph{Persuasion in voting},
we extend our framework to the design of approximation algorithms for mixture selection
with a \emph{bi-criteria guarantee} when the function in question is stable
but not Lipschitz continuous.
We first define a $(\delta,\rho)$-relaxation of a function.

\begin{definition}\label{def:relaxation}
Given two functions $g,h : \gdom \rightarrow [-1, 1]$ and parameters $\delta,\rho \geq 0$,
we say $h$ is a $(\delta,\rho)$-relaxation of $g$ if for all $t_1, t_2 \in \gdom$ with $\infNorm{t_1 - t_2} \leq \delta$,
$h(t_2) \geq g(t_1) - \rho$.
\end{definition}

In lieu of the Lipschitz continuity condition, we prove our bounds for a relaxation of the function. 

\begin{theorem}
\label{thm:main:bi}
Let $g:\gdom \rightarrow [-1, 1]$ be \abStable,
let $A$ be an $\numrow \times \numcol$ matrix with entries in $[-1, 1]$,
let $\alpha > 0$ and $\delta,\rho \geq 0$,
and let $s \geq 2 \ln({\frac{2}{\alpha}}) / \delta^2$ be an integer.
Fix a vector $x \in \Delta_\numcol$,
and let the random variable $\tilde{x}$
denote the empirical distribution of $s$ i.i.d.\ samples
from probability distribution $x$.
The following then holds for any $(\delta,\rho)$-relaxation $h$ of $g$,
\[\Exp[h(A\tilde{x}))] \geq g(A x) - \alpha \beta - \rho.\]
\end{theorem}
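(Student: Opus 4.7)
The plan is to follow the proof of Theorem~\ref{thm:main} almost verbatim, with the $(\delta,\rho)$-relaxation property of $h$ taking over the role that Lipschitz continuity of $g$ played there. Setting $t = Ax$ and $\tilde{t} = A\tilde{x}$, I would again view each $t_i$ and $\tilde{t}_i$ as the mean and empirical mean (over $s$ samples) of a distribution supported on the bounded entries $A_{i,1},\ldots,A_{i,\numcol} \in [-1,1]$. Call coordinate $i$ \emph{corrupted} if $\ABS{t_i - \tilde{t}_i} > \delta$ and let $R \sse [\numrow]$ denote the (random) set of corrupted coordinates. Hoeffding's inequality together with the choice $s \geq 2\ln(2/\alpha)/\delta^2$ guarantees $\Pr[i \in R] \leq \alpha$ for every $i$, so the marginal distribution of $R$ is \aLight, exactly as in Theorem~\ref{thm:main}.

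Next I would introduce the hybrid vector $t'$ that agrees with $\tilde{t}$ on the corrupted coordinates $R$ and with $t$ elsewhere. By construction $t' \corr{R} t$ and $\infNorm{t' - \tilde{t}} \leq \delta$. The first property, combined with the pointwise bound $g(t') \geq \min\SpecSet{g(t'')}{t'' \corr{R} t}$ and the $\beta$-stability of $g$ applied to the \aLight marginal distribution of $R$, yields $\Exp[g(t')] \geq g(t) - \alpha\beta$. The second property is precisely the hypothesis needed to invoke the $(\delta,\rho)$-relaxation definition, giving $h(\tilde{t}) \geq g(t') - \rho$ for every realization. Taking expectations and chaining the two estimates produces the desired inequality
\[\Exp[h(A\tilde{x})] \;\geq\; \Exp[g(t')] - \rho \;\geq\; g(Ax) - \alpha\beta - \rho.\]

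I do not anticipate any substantive obstacle here: the hybrid vector $t'$ is engineered precisely so that the stability hypothesis absorbs the ``large error'' coordinates (of which there are few in expectation, by Hoeffding), while the relaxation property absorbs the ``small error'' coordinates (where $\tilde{t}$ differs from $t'$ by at most $\delta$ coordinatewise). The only point requiring minor care is that both the stability guarantee and the relaxation guarantee are one-sided lower bounds on the objective, but these directions align with the one-sided lower bound we want to prove, so no additional assumption on $h$ or $g$ is needed.
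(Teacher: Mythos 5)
Your proof is correct and follows exactly the same route as the paper: the paper's proof of this theorem explicitly says it mirrors Theorem~\ref{thm:main}, using the same hybrid vector $t'$ (agreeing with $\tilde{t}$ on the corrupted set $R$ and with $t$ elsewhere), bounding $\Exp[g(t)-g(t')]$ by $\alpha\beta$ via stability and $\Exp[g(t')-h(\tilde{t})]$ by $\rho$ via the $(\delta,\rho)$-relaxation property. You have simply spelled out the Hoeffding/lightness step and the chaining of one-sided bounds that the paper leaves implicit.
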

\begin{proof}
Because the proof is almost identical to the proof of Theorem~\ref{thm:main},
  we just mention the necessary modifications.
Again, let $t=Ax$, let $\tilde{t}=A\tilde{x}$, let $R \sse [n]$ denote the set of corrupted inputs, and let $t'$ be such that $t'_i = \tilde{t}_i$ for $i \in R$ 
  and $t'_i = t_i$ otherwise. Then
\begin{align*}
 g(t) - \Ex[ h(\tilde{t}) ]  &=\Exp[ g(t) - g(t')] + \Exp[g(t') - h(\tilde{t})] \\
&\leq \alpha \beta + \Exp[g(t') - h(\tilde{t})]\\
& \leq \alpha \beta + \rho,
\end{align*}
where the first inequality follows by noise stability of $g$, and the last inequality follows from the fact that $h$ is a $(\delta,\rho)$-relaxation of $g$.
\end{proof}

Having replaced Theorem~\ref{thm:main} by Theorem~\ref{thm:main:bi},
a similar computational result as Theorem~\ref{thm:PTAS} can be inferred
in the bi-criteria sense.




\section{Lottery Design for Revenue Maximization}\label{sec:lottery}

To illustrate the utility of our framework, we start with a simple but basic open problem in Bayesian mechanism design posed by 
Dughmi, Han, and Nisan \cite{DHN14}.
An instance of the \emph{lottery design} problem is given by a valuation matrix
$A \in [0, 1]^{\numrow \times \numcol}$
and $\numrow$ non-negative weights $\Weight_1, \ldots, \Weight_\numrow$
with $\sum_{i = 1}^{n} w_i = 1$.
Here $\numrow$ denotes the number of buyer types, $\numcol$ denotes the number of items, and $\Weight$ represents a probability distribution over types.
Each $0 \le A_{i, j} \le 1$ is the value of item $j$ to a buyer of type $i$. The goal is to design a single lottery-price pair $(x, p)$, with $x \in \Delta_\numcol$ and $p \geq 0$, so that the expected revenue of the auction which offers the lottery $x$ over items at price $p$ to a buyer with type drawn according to $\Weight$ is maximized.
We assume the buyer is risk neutral, and therefore accepts the offer precisely if his type $i$ satisfies $A_i x \geq p$. Consequently, our goal is to choose $(x,p)$ maximizing
$p \cdot \sum_{i = 1}^{n} (\Weight_i \cdot I[A_i x \geq p])$, where $I[\EEE]$ denotes the indicator function for the event~$\EEE$.

The lottery design problem is closely related to the general unit-demand single-buyer 
mechanism design problem considered in \cite{DHN14},
where the buyer's type is drawn from
a common knowledge prior distribution \BBB
given by a sampling oracle, and the buyer is to be presented with a \emph{menu} consisting of several lottery-price pairs from which to choose.
\cite{DHN14} frame this mechanism design problem
as a computational task of ``learning'' a good mechanism
by sampling from \BBB, and use the
size of the menu as a regularization constraint in order to  prevent over-fitting the mechanism to the sampled data. 
The problem of maximizing the expected revenue by using a menu of at most a given size is the main algorithmic question in \cite{DHN14}, and its computational complexity is left largely open.
When constrained to a menu with a single lottery, the goal is to choose $(x,p)$ maximizing the expected revenue
$p \cdot \Prob\limits_{a \sim \BBB}[a x \ge p]$.

Using our mixture selection framework, we first give an additive PTAS
for the lottery design problem
when the value distribution \BBB is given explicitly by the matrix $A$ and weights $w$ as described above.
We then extend our result to cases in which \BBB can only be accessed through sampling, using fairly standard uniform convergence arguments.
In Section~\ref{sec:hardness:lottery} (Theorem~\ref{thm:lottery:hard}),
we rule out an additive FPTAS for this problem, and in doing so provide complexity-theoretic evidence that our PTAS --- for both lottery design in particular and mixture selection for stable and Lipschitz-continuous functions more generally --- is  essentially the best we can hope for.

\subsection{Lottery Design in the Explicit Input Model}

Given the number of buyer types $n$ and weights $w \in \Delta_n$,  the lottery design problem is simply mixture selection for the function $\gLot{\Weight}(t): [0,1]^n \to [0,1]$, defined below.

\begin{equation}
  \label{eq:gLot}
\gLot{\Weight}(t) = \max\limits_{p \geq 0}\Set{p \cdot \sum_{i = 1}^{n} (\Weight_i \cdot I[t_i \geq p])}  
\end{equation}

%
%
%
As the first step to applying our framework, we show that $\gLot{\Weight}$ is noise stable. The high level idea is the following:  if a subset of the  inputs to $\gLot{\Weight}$ is corrupted, then in the worst case each such input exceeded the price $p$ before corruption but not after corruption. This reduces the output of $\gLot{\Weight}$ by at most the total weight of corrupted inputs. When corrupted inputs are chosen according to an \aLight distribution, their expected total weight is bounded by $\lightSym$. 

\begin{lemma}\label{lem:gLot_stable} 
The function $\gLot{\Weight}$ is \gStable{1}.
\end{lemma}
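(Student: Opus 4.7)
The plan is to show that corrupting any subset $R$ of entries can decrease $\gLot{w}(t)$ by at most $\sum_{i \in R} w_i$ in the worst case, and then take expectation over $R \sim \D$ to finish.

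First, I would fix $t \in [0,1]^n$ and let $p^\ast = \argmax_{p \ge 0}\{p \sum_i w_i I[t_i \ge p]\}$ be an optimal price for $t$, so $\gLot{w}(t) = p^\ast \sum_i w_i I[t_i \ge p^\ast]$. Without loss of generality $p^\ast \le 1$, since no $t_i$ exceeds $1$. For any $R \subseteq [n]$ and any $t' \corr{R} t$, I would lower bound $\gLot{w}(t')$ by simply reusing the price $p^\ast$:
\[
\gLot{w}(t') \;\ge\; p^\ast \sum_{i=1}^{n} w_i I[t'_i \ge p^\ast] \;\ge\; p^\ast \sum_{i \notin R} w_i I[t_i \ge p^\ast],
\]
where the last step uses $t'_i = t_i$ for $i \notin R$ and drops the nonnegative terms for $i \in R$. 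Subtracting from $\gLot{w}(t)$ and using $p^\ast \le 1$ and $I[\cdot] \le 1$ yields
\[
\gLot{w}(t) - \min_{t' \corr{R} t}\gLot{w}(t') \;\le\; p^\ast \sum_{i \in R} w_i I[t_i \ge p^\ast] \;\le\; \sum_{i \in R} w_i.
\]

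Next, I would take expectation over $R \sim \D$ where $\D$ is $\alpha$-light. By linearity of expectation and the marginal bound $\Pr_{R \sim \D}[i \in R] \le \alpha$,
\[
\Exp_{R \sim \D}\Big[\sum_{i \in R} w_i\Big] \;=\; \sum_{i=1}^n w_i \Pr_{R \sim \D}[i \in R] \;\le\; \alpha \sum_{i=1}^n w_i \;=\; \alpha,
\]
since $w \in \Delta_n$. Combining with the previous bound gives $\Exp_{R \sim \D}[\min_{t' \corr{R} t} \gLot{w}(t')] \ge \gLot{w}(t) - \alpha$, which is exactly the definition of $1$-stability.

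There is essentially no obstacle here; the argument is short because the worst-case effect of corruption can be charged to dropping the buyers in $R$ that previously bought, and the expected total weight of $R$ is controlled directly by the light distribution condition. The only subtle point worth being careful about is keeping the inequality in the right direction: we must use the feasibility of $p^\ast$ for the corrupted vector (which gives a lower bound on $\gLot{w}(t')$), rather than attempting to compare the two maximizing prices against each other.
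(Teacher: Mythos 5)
Your proof is correct and follows essentially the same approach as the paper: you bound the per-corruption loss by $\sum_{i\in R} w_i$ by reusing the optimal price $p^\ast$ on the corrupted vector, then bound the expected loss via linearity of expectation and the $\alpha$-light condition. The only difference is that you spell out the reuse-the-price argument that the paper labels ``easy to see,'' which is a welcome clarification but not a distinct route.
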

\begin{proof}
Let $t \in [0,1]^n$ be an arbitrary input to $\gLot{\Weight}$. When $t'$ is obtained from $t$
by corrupting the entries corresponding to $R \subseteq \gindices$, an event we denote by $t' \corr{R} t$, it is easy to see that $\gLot{\Weight}(t') \geq \gLot{\Weight}(t) - \Weight(R)$
where $\Weight(R) = \sum_{i \in R} w_i$
denotes the total weight of corrupted entries. Moreover, when $R$ is a random variable  drawn from an \aLight distribution $\D$, we can bound the expected loss: $\Exp[\Weight(R)] = \sum_{i=1}^n \Pr[i \in R] \cdot w_i \leq \sum_i \lightSym w_i = \lightSym$. It follows that $\gLot{\Weight}$ is \gStable{1}.
\begin{align*}
\Exp\limits_{R \sim \DDD}
\left[ \min \SpecSet{\gLot{\Weight}(t')}{t' \corr{R} t} \right] &\ge \gLot{\Weight}(t) - \Ex_{R \sim \D}[\Weight(R)] \\ 
&\ge \gLot{\Weight}(t) - \lightSym. \qedhere 
\end{align*}
\end{proof}



Next, we prove  Lipschitz continuity. The high-level idea is the following: if all inputs to $\gLot{\Weight}$ decrease by $\delta$, then we need only decrease the price $p$ from expression \eqref{eq:gLot} by $\delta$.
The (weighted) fraction of inputs exceeding the price is at least the same as before.

\begin{lemma}\label{lem:gLot_lipschitz}
The function $\gLot{\Weight}$ is \Lip{1} continuous. 
\end{lemma}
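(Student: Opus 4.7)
The plan is to prove the statement directly from the definition of Lipschitz continuity in $L^\infty$. Let $\delta = \infNorm{t - t'}$. By symmetry between $t$ and $t'$ it suffices to show $\gLot{\Weight}(t') \geq \gLot{\Weight}(t) - \delta$. Let $p^\ast \geq 0$ be a price achieving the maximum in \eqref{eq:gLot} for $t$, so that $\gLot{\Weight}(t) = p^\ast \cdot S$, where $S := \sum_{i=1}^n \Weight_i \cdot I[t_i \geq p^\ast] \leq 1$.

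The key observation will be this: if we lower the price by $\delta$, the buyer-type indices that were above threshold under $t$ remain above threshold under $t'$. Formally, set $p' = \max(p^\ast - \delta, 0)$. Then for every $i$ with $t_i \geq p^\ast$, we have $t'_i \geq t_i - \delta \geq p^\ast - \delta \geq p'$, so
\[
\sum_{i=1}^n \Weight_i \cdot I[t'_i \geq p'] \;\geq\; \sum_{i=1}^n \Weight_i \cdot I[t_i \geq p^\ast] \;=\; S.
\]
Because $p'$ is a feasible price for the maximization defining $\gLot{\Weight}(t')$, we conclude $\gLot{\Weight}(t') \geq p' \cdot S$.

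It remains to compare $p^\ast \cdot S$ to $p' \cdot S$. In the generic case $p^\ast \geq \delta$ we have $p' = p^\ast - \delta$, so $\gLot{\Weight}(t) - \gLot{\Weight}(t') \leq (p^\ast - p') \cdot S = \delta \cdot S \leq \delta$, using $S \leq 1$. In the edge case $p^\ast < \delta$, we instead note $\gLot{\Weight}(t) = p^\ast \cdot S \leq p^\ast \leq \delta$, while $\gLot{\Weight}(t') \geq 0$, so again $\gLot{\Weight}(t) - \gLot{\Weight}(t') \leq \delta$. Swapping the roles of $t$ and $t'$ gives the reverse inequality.

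I do not anticipate a real obstacle: the argument is just the standard ``lower the threshold by $\delta$ to absorb the perturbation'' trick. The only subtlety is remembering to truncate $p^\ast - \delta$ at $0$ so that the candidate price remains feasible, and handling the truncated case separately via the trivial bound $p^\ast \cdot S \leq \delta$.
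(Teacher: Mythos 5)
Your proof is correct and follows essentially the same route as the paper: define $p' = \max\{0, p^\ast - \delta\}$, observe that every type above threshold for $(t, p^\ast)$ remains above threshold for $(t', p')$, and compare revenues. You are slightly more explicit than the paper in separating out the truncated case $p^\ast < \delta$, but the underlying argument is identical.
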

\begin{proof}
Consider $t,t' \in [0,1]^n$ with $\infNorm{t' - t} \leq \delta$. We prove that $\gLot{\Weight}(t') \geq \gLot{\Weight}(t) - \delta$,
and by symmetry it follows that $\gLot{\Weight}(t) \geq \gLot{\Weight}(t') - \delta$. Let $p$ be the optimal price for $t$ --- i.e. the maximizer of expression \eqref{eq:gLot} --- and define $p' = \max\Set{0, p - \delta}$. Whenever $t_i \geq p$ we have $t'_i \geq p'$, and   therefore $\sum\limits_{i = 1}^{n} (\Weight_i \cdot I[t'_i \geq p']) \geq
\sum\limits_{i = 1}^{n} (\Weight_i \cdot I[t_i \geq p])$. It follows that $\gLot{\Weight}(t') \geq \gLot{\Weight}(t) - \delta$.
\end{proof}

Combining Lemmas \ref{lem:gLot_stable} and \ref{lem:gLot_lipschitz} with 
Theorem~\ref{thm:PTAS} yields an additive PTAS
for lottery design.
\begin{theorem}\label{thm:lottery:PTAS}
There is an additive PTAS for the lottery design problem when the valuation distribution is given explicitly by a matrix $A \in [0,1]^{n \times m}$ and a weight vector $\Weight \in \Delta_n$. 
\end{theorem}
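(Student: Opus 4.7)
The plan is essentially to assemble the three ingredients already on the table: the stability bound from Lemma~\ref{lem:gLot_stable}, the Lipschitz bound from Lemma~\ref{lem:gLot_lipschitz}, and the algorithmic consequence of Theorem~\ref{thm:PTAS}. Since Lemmas~\ref{lem:gLot_stable} and~\ref{lem:gLot_lipschitz} establish that $\gLot{w}$ is simultaneously $1$-stable and $1$-Lipschitz in $L^\infty$, Theorem~\ref{thm:PTAS} is directly applicable to the $m$-dimensional mixture selection problem for $\gLot{w}$ on input matrix $A \in [0,1]^{n\times m}$.

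First I would set the parameters in Theorem~\ref{thm:PTAS} to turn it into a genuine PTAS for a target additive error $\eps > 0$. With $\beta = c = 1$, choosing $\alpha = \delta = \eps/2$ gives an approximation guarantee of $\alpha\beta + c\delta = \eps$, and the corresponding runtime is $n \cdot m^{O(\log(1/\eps)/\eps^2)} \cdot T$, where $T$ is the time to evaluate $\gLot{w}$ on a single input $t \in [0,1]^n$.

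Second, I would verify that $T = \poly(n)$, which is the only non-automatic step. The key observation is that the optimum in the definition $\gLot{w}(t) = \max_{p \geq 0}\{p \cdot \sum_i w_i\, I[t_i \geq p]\}$ is always attained at $p = t_i$ for some $i$, since for any $p$ strictly between two consecutive values of $t$, raising $p$ up to the next value of $t_i$ weakly increases the objective without changing the set of accepting types. Thus it suffices to sort the entries of $t$ and scan through the $n$ candidate prices, computing the weighted count of accepting types with a running sum. This gives $T = O(n \log n)$.

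Plugging $T = O(n \log n)$ back into the runtime from Theorem~\ref{thm:PTAS} yields an overall runtime of $n^{O(1)} \cdot m^{O(\log(1/\eps)/\eps^2)}$, which is polynomial in $n$ and $m$ for every fixed $\eps > 0$. Since mixture selection for $\gLot{w}$ is, by construction, exactly the lottery design problem with explicit input $(A, w)$, this establishes the claimed additive PTAS. There is no real obstacle here; the statement is essentially a bookkeeping corollary of the two lemmas plus the framework theorem, with the only substantive verification being the $\poly(n)$ evaluation of $\gLot{w}$ described above.
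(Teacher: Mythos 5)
Your proof is correct and follows exactly the paper's approach: the paper's entire argument is the one-line remark that combining Lemmas~\ref{lem:gLot_stable} and~\ref{lem:gLot_lipschitz} with Theorem~\ref{thm:PTAS} yields the PTAS. Your additional verification that $\gLot{w}$ can be evaluated in $O(n\log n)$ time by restricting to candidate prices $p \in \{t_1,\ldots,t_n\}$ is a correct detail that the paper leaves implicit.
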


\subsection{Lottery Design in the Sample Oracle Model}\label{sec:lottery:oracle}

So far, we have assumed that the buyer's type distribution is given explicitly.
We now show how to extend our results to the \emph{sample oracle model}.
Specifically, we assume the buyer's type $a \in [0,1]^\numcol$
is drawn from a distribution \BBB given by a sampling oracle, and seek a randomized approximation scheme with runtime (and number of samples) polynomial in $m$ for each desired approximation guarantee $\eps$.
To simplify exposition
we assume \BBB has finite support, though our results hold more generally.
As usual, our goal is to choose a lottery $x \in \Delta_\numcol$
and a price $p\geq 0$
to maximize $\Rev_\BBB(x,p)= p \cdot \Prob\limits_{a \sim \B} [a x \geq p]$.


\begin{theorem}
\label{thm:lottery_pras}
There is an additive polynomial-time randomized approximation scheme (PRAS)
for the lottery design problem in the sample oracle model.
\end{theorem}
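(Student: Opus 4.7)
The plan is to reduce the sample-oracle case to the explicit-input case of Theorem~\ref{thm:lottery:PTAS} by sampling $\B$ to form an empirical distribution, then control the gap between empirical and true revenue via a uniform convergence argument. Concretely, I would draw $N$ i.i.d.\ samples $a^1, \ldots, a^N \sim \B$, form the $N \times m$ matrix $\hat A$ whose rows are the sampled types together with the uniform weight vector $\hat w = (1/N, \ldots, 1/N) \in \Delta_N$, and invoke Theorem~\ref{thm:lottery:PTAS} on the explicit instance $(\hat A, \hat w)$ with accuracy $\eps/2$ to obtain a lottery $\hat x \in \Delta_m$ together with the optimal empirical price $\hat p \geq 0$ satisfying $\hat\Rev(\hat x, \hat p) \geq \max_{x, p} \hat\Rev(x, p) - \eps/2$, where $\hat\Rev(x, p) := p \cdot \frac{1}{N}|\{i : a^i x \geq p\}|$ denotes revenue under the empirical distribution.

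The crux is a uniform convergence claim: with high probability over the sample, $|\hat\Rev(x, p) - \Rev_\B(x, p)| \leq \eps/4$ simultaneously for every $(x, p) \in \Delta_m \times [0, 1]$. Because $p \leq 1$, it suffices to bound $|\hat\Pr_{a \sim \B}[a x \geq p] - \Pr_{a \sim \B}[a x \geq p]|$ uniformly in $(x, p)$, i.e., to estimate the $\B$-measure of every halfspace $\{a : \langle a, x \rangle \geq p\}$ uniformly. The class of such halfspaces in $\mathbb{R}^m$ has VC dimension $m + 1$, so by the standard VC uniform convergence theorem, $N = O((m + \log(1/\delta))/\eps^2)$ samples suffice for this event to hold with probability at least $1 - \delta$.

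Combining the two ingredients, let $(x^*, p^*)$ denote the optimal lottery-price pair for $\B$. Conditioning on the uniform convergence event,
\begin{align*}
\Rev_\B(\hat x, \hat p) \;\geq\; \hat\Rev(\hat x, \hat p) - \tfrac{\eps}{4} \;\geq\; \hat\Rev(x^*, p^*) - \tfrac{3\eps}{4} \;\geq\; \Rev_\B(x^*, p^*) - \eps,
\end{align*}
so $(\hat x, \hat p)$ is an $\eps$-approximate solution. Since the explicit PTAS runs in time $N \cdot m^{O(1/\eps^2)} \cdot \poly(N)$ and $N = \poly(m, 1/\eps)$, the total runtime is $\poly(m, 1/\eps)$ for any fixed $\eps$, giving the desired PRAS.

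The main obstacle is the uniform convergence step: a pointwise Chernoff bound combined with a union bound fails directly because both $x$ and $p$ range over continua. VC theory sidesteps this neatly by exploiting the restricted geometric structure (halfspaces) of the events defining the revenue. A more elementary alternative would be to union-bound over the $m^{O(1/\eps^2)}$ many $s$-uniform lotteries returned by the explicit PTAS and over a polynomial-size grid of prices, but this additionally requires applying Theorem~\ref{thm:main} to the oracle distribution to first approximate $x^*$ by a nearby $s$-uniform vector --- the VC route above is cleaner.
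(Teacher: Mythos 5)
Your proof is correct and does establish the theorem, but it takes a genuinely different route from the paper's for the uniform-convergence step. The paper does \emph{not} appeal to VC theory: it observes that the explicit-input PTAS only ever evaluates the objective at the finitely many $s$-uniform lotteries ($m^s$ of them, with $s = O(\log(1/\eps)/\eps^2)$ independent of the number of types), and so it suffices to union-bound Hoeffding over those $m^s$ lotteries together with a one-dimensional uniform-convergence argument over prices $p \in [0,1]$. This is exactly the ``more elementary alternative'' you sketch at the end --- and you are right that it implicitly requires one further ingredient, namely invoking Theorem~\ref{thm:main} \emph{on the true distribution $\BBB$} to certify that some $s$-uniform lottery is within $\eps$ of the true optimum over all of $\Delta_m$; without that, controlling revenue only on $s$-uniform lotteries says nothing about $x^*$. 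Your VC route avoids having to make that separate appeal, since uniform convergence over \emph{all} $(x,p)$ lets you compare the PTAS output directly against $(x^*,p^*)$, which is arguably cleaner.

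The trade-off is in sample complexity: the paper's union bound over $m^s$ lotteries needs $n = O\bigl((s \log m + \log(1/\gamma))/\eps^2\bigr) = \tilde O(\log m / \eps^4)$ samples, whereas the VC bound you invoke needs $N = O\bigl((m + \log(1/\gamma))/\eps^2\bigr)$, linear rather than logarithmic in the dimension $m$. Both are polynomial in $m$, $1/\eps$, and $\log(1/\gamma)$, so both give a PRAS, but the paper's bound is sharper. One small point worth flagging in your write-up: you should verify, or at least remark, that the class $\SpecSet{\set{a : \langle a, x\rangle \geq p}}{x \in \Delta_m, p \in [0,1]}$ is a subclass of halfspaces in $\RR^m$ and so has VC dimension at most $m+1$; you state this but do not justify it, and the simplex constraint on $x$ deserves a sentence.
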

\begin{proof}
Recall that the PTAS in Theorem~\ref{thm:lottery:PTAS} optimizes over all $s$-uniform $m$-dimensional lotteries, where $s$ depends only on the desired approximation guarantee $\epsilon > 0$, and is bounded by a polynomial in $\frac{1}{\eps}$. In particular, $s$ is independent of the number of types $\numrow$, implying that the same approach of enumerating all $s$-uniform lotteries $\tilde{x}$ would succeed in the sample oracle model were it not for our inability to evaluate $\max_p \Rev_\BBB(\tilde{x},p)$  exactly.

We overcome this difficulty by Monte Carlo sampling from \BBB. Given $n$ types $a_1,\ldots,a_n \in [0,1]^m$ sampled from \BBB and presented as the rows of a matrix $A \in [0,1]^{n \times m}$, we run the PTAS from Theorem~\ref{thm:lottery:PTAS} with approximation parameter $\eps$ on the empirical distribution given by $A$ and uniform weights $w_i=\frac{1}{n}$ for all $i$. Taking $n$ to be a suitable polynomial in $m$, $\frac{1}{\eps}$, and $\log(\frac{1}{\gamma})$ --- where $\gamma > 0$ is a parameter --- guarantees that $|\Rev_\BBB(\tilde{x},p) -  p\cdot \frac{1}{n} \sum_{i=1}^n I[A_i \tilde{x} \geq p]| \leq \eps$ simultaneously for all $s$-uniform lotteries $\tilde{x}$ and prices $p$ with probability at least $1-\gamma$.  This follows from standard tail bounds and the union bound, coupled with a uniform convergence argument over prices $p \in [0,1]$. Therefore, with probability $1-\gamma$ our algorithm outputs a lottery-price pair whose expected revenue for a buyer drawn from \BBB is within $O(\eps)$ from the optimal.
\end{proof}






\section{Signaling}
\label{sec:signaling}

In the next few sections, we consider a number of Bayesian games
  in which a key parameter $\theta$, the \emph{state of nature},
  in part determines the payoff structure of the game.
We use $\Theta$ to denote the set of all states of nature,
  and assume $\theta \in \Theta$ is drawn from a
  common-knowledge prior distribution which we denote by $\lambda$.
In all our applications, we assume players a-priori know nothing
  about $\theta$ other than its prior distribution $\lambda$,
  and examine policies whereby a principal with access to
  the realized value of $\theta$ may commit to a policy of
  revealing information to the players regarding $\theta$.
  This is often referred to as \emph{signaling} (see e.g. \cite{emeksignaling, miltersensignals, DIR14, D14}).
We restrict our attention to symmetric signaling schemes,
  in which the principal must reveal the same information to all players in the game. 
Thus, a \emph{symmetric signaling scheme} is given by a set $\Sigma$ of 
  \emph{signals}, and a (possibly randomized) map $\varphi$ 
  from states of nature $\Theta$ to signals $\Sigma$. 
The goal of the principal, who is privy to confidential
  state-of-nature information, is to boost her own objective
  by using a signaling scheme 
  $\varphi$ that optimally affects the outcome of the game.

In this section, in addition to providing the technical background for signaling schemes, 
  we~use our framework to define an abstract signaling problem and characterize
  its approximation complexity.
This abstract problem captures the essence of all signaling problems
   considered in this~paper.

\subsection{Background: Signaling Schemes} \label{prelim:schemes}

Let $m = |\Theta|$. Abusing notation, we use $\varphi(\theta,\sigma)$ to denote 
  the probability of announcing signal $\sigma \in \Sigma$ 
  conditioned on the state of nature is $\theta \in \Theta$.
It is well known (\cite{bpbp,D14}) that
  signaling schemes are in one-to-one correspondence with
  \emph{convex decompositions} of the prior distribution $\lambda \in
  \Delta_m$:
Formally, a signaling scheme $\varphi: \Theta \to \Sigma$ 
  corresponds to the convex decomposition  
$\lambda = \sum_{\sigma \in \Sigma} \nu_\sigma \cdot  \mu_\sigma,$
where (1) $\nu_\sigma= \Pr_{\theta\sim\Theta} [\varphi(\theta) = \sigma] = \sum_{\theta
  \in \Theta} \lambda(\theta) \varphi(\theta,\sigma)$  
 is the probability of announcing signal $\sigma$,
and
(2) $\mu_\sigma(\theta) = \Pr_{\theta\sim\Theta} [ \theta | \varphi(\theta) = \sigma] =
  \frac{\lambda(\theta) \varphi(\theta,\sigma)}{\nu_\sigma}$
is the \emph{posterior belief distribution} 
  of $\theta$ conditioned on signal $\sigma$.
The converse is also true: every convex decomposition of $\lambda\in \Delta_m$
  corresponds to a signaling scheme.
Alternatively, the reader can view a signaling scheme $\varphi$
  as the $m\times |\Sigma|$ matrix of pairwise probabilities
  $\varphi(\theta,\sigma)$ satisfying conditions (1) and (2) with respect to 
  $\lambda\in \Delta_m$.

Note that each posterior distribution $\mu \in \Delta_m$
  defines a Bayesian game, and the principal's utility depends on the  outcome of the game. Given a suitable equilibrium concept and selection rule, we let $f: \Delta_m \to \mathbb{R}$ denote the principal's utility as a function of the posterior distribution~$\mu$.
For example, in an auction game $f(\mu)$ may be the social welfare or principal's revenue 
  at the induced equilibrium, or any weighted combination of players' utilities,
  or something else entirely.
The principal's objective as a function of the signaling scheme $\varphi$  can be mathematically expressed by
$F(\varphi,\lambda) = \sum_\sigma \nu_\sigma \cdot f(\mu_\sigma)$.

In this setup, the optimal choice of a signaling scheme is related 
  to the \emph{concave envelope} $f^+$ of the function $f$ (\cite{bpbp,D14}).\footnote{$f^+$ is the point-wise lowest concave function $h$
  for which $h(x) \geq f(x)$ for all $x$ in the domain. 
  Equivalently, the hypograph of $f^+$ is
  the convex hull of the hypograph of $f$.}
Specifically, such a signaling scheme 
  achieves $\sum_\sigma \nu_\sigma \cdot f(\mu_\sigma) = f^+(\lambda)$. 
Thus, there exists a signaling scheme with
  $m+1$ signals that maximizes the principal's
  objective, by applying Caratheodory's theorem to the hypograph of $f$.

\subsection{An Abstract Signaling Problem and its Polynomial-Time Approximation}


To connect to our mixture selection framework, we consider signaling problems in which the principal's utility $f(\mu)$ from a posterior distribution $\mu \in \Delta_m$ can be written as $g(A \mu)$ for a function $g: [-1,1]^n \to [-1,1]$ and a matrix $A \in [-1,1]^{n \times m}$.
As described in Section \ref{prelim:schemes}, a signaling scheme $\varphi$ with signals $\Sigma$ corresponds to a family of probability-posterior pairs $\{(\nu_\sigma,\mu_\sigma)\}_{\sigma\in\Sigma}$ decomposing the prior $\lambda \in \Delta_m$ into a convex combination of posterior distributions (one per signal): $\lambda = \sum_{\sigma \in \Sigma} \nu_\sigma \mu_\sigma$. The objective of our signaling problem is then
\begin{equation*}
F(\varphi) = \sum_{\sigma \in \Sigma} \nu_\sigma f(\mu_\sigma)
  = \sum_{\sigma \in \Sigma} \nu_\sigma g(A \mu_\sigma).
\end {equation*}


We note that this signaling problem can alternatively be written as
  an (infinite-dimensional) linear program which searches over  probability measures supported on $\Delta_m$ with expectation $\lambda$. The separation oracle for the dual of this linear program is a \prob problem. Whereas we do not use this infinite-dimensional formulation nor its dual directly, we nevertheless show that the same conditions --- noise stability and Lipschitz continuity --- on the function $g$ which lead to an approximation scheme for \prob also lead to a similar approximation scheme for our signaling problem with $f(\mu) = g(A \mu)$.


\begin{lemma}
If $g$ is $\beta$-stable and $c$-Lipschitz,
  then for any constants $\alpha,\delta > 0$,
  and for any integer $s \ge 2 \delta^{-2} \ln(2/\alpha)$,
  there exists a signaling scheme $\tilde{\varphi}$ 
for which every posterior distribution is $s$-uniform, and $F(\tilde{\varphi}) \geq OPT - (\alpha \beta + c \delta)$ where OPT denotes the value of the optimal signaling scheme.
\label{lem:sparse_signals}
\end{lemma}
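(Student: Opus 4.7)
The plan is to start from an optimal signaling scheme $\varphi^*$ with its convex decomposition $\lambda = \sum_\sigma \nu_\sigma \mu_\sigma$, so that $\mathrm{OPT} = \sum_\sigma \nu_\sigma g(A \mu_\sigma)$, and to apply the $s$-uniform sparsification underlying Theorem~\ref{thm:main} to each posterior $\mu_\sigma$ \emph{separately}. Concretely, for each $\sigma$ I let $\tilde{X}_\sigma \in \Delta_m$ denote the (random) empirical distribution of $s$ i.i.d.\ draws from $\mu_\sigma$. Then $\tilde{X}_\sigma$ is $s$-uniform by construction, $\Exp[\tilde{X}_\sigma] = \mu_\sigma$ by linearity of expectation, and Theorem~\ref{thm:main} applied to $\mu_\sigma$ in the role of $x$ gives $\Exp[g(A \tilde{X}_\sigma)] \geq g(A \mu_\sigma) - (\alpha \beta + c \delta)$.

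The naive continuation --- pick a single realization of each $\tilde{X}_\sigma$ and take $\{(\nu_\sigma, \tilde{X}_\sigma)\}$ as the new scheme --- fails, because $\sum_\sigma \nu_\sigma \tilde{X}_\sigma$ equals $\lambda$ only in expectation and not as a deterministic identity, so it is generally not a convex decomposition of $\lambda$. This is the main obstacle. The fix is to absorb the sparsification randomness \emph{into} the signal by refining the signal space: I would define $\tilde{\varphi}$ to use pairs $(\sigma, y)$ as signals, where $y$ ranges over $s$-uniform elements of $\Delta_m$, set the signal probability to $\tilde{\nu}_{\sigma, y} := \nu_\sigma \cdot \Pr[\tilde{X}_\sigma = y]$, and let the posterior associated with $(\sigma, y)$ be $y$ itself.

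After this move, the two verifications are short. First, linearity of expectation yields $\sum_{\sigma, y} \tilde{\nu}_{\sigma, y}\, y = \sum_\sigma \nu_\sigma \Exp[\tilde{X}_\sigma] = \lambda$ and $\sum_{\sigma,y}\tilde{\nu}_{\sigma,y} = 1$, so by the signaling-scheme/convex-decomposition correspondence recalled in Section~\ref{prelim:schemes} the pair $(\{\tilde{\nu}_{\sigma,y}\},\{y\})$ defines a bona fide symmetric signaling scheme, all of whose posteriors are $s$-uniform. Second, the objective evaluates to
\[
F(\tilde{\varphi}) \;=\; \sum_{\sigma, y} \tilde{\nu}_{\sigma, y}\, g(A y) \;=\; \sum_\sigma \nu_\sigma \Exp[g(A \tilde{X}_\sigma)] \;\geq\; \sum_\sigma \nu_\sigma \bigl(g(A\mu_\sigma) - (\alpha\beta + c\delta)\bigr) \;=\; \mathrm{OPT} - (\alpha\beta + c\delta).
\]

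No further analysis is needed. The only conceptual content, beyond a single invocation of Theorem~\ref{thm:main} per $\sigma$, is the realization that the sparsification randomness can be shifted from the posteriors into the signals without disturbing either the convex-combination constraint (which holds in expectation, and hence exactly once the extra index $y$ is folded into the signal) or the objective (which is linear in the signal probabilities and depends on the posteriors only through $g(A\cdot)$).
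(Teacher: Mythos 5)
Your proof is correct and takes essentially the same approach as the paper: you decompose each posterior $\mu_\sigma$ as a convex combination over $s$-uniform vectors via the distribution of the empirical sample, invoke Theorem~\ref{thm:main} once per $\sigma$, and compose the decompositions by refining the signal space to $\Sigma \times \{s\text{-uniform posteriors}\}$ with weights $\nu_\sigma \Pr[\tilde{X}_\sigma = y]$ --- exactly the paper's $\tilde{\nu}_{\sigma,\tau}$. The only cosmetic difference is that you explicitly flag the obstacle (that a single random realization destroys the convex-decomposition identity), which the paper leaves implicit.
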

\begin{proof}
Let $s \geq 2 \delta^{-2} \ln(2/\alpha)$, and let $\tau \in [m^s]$ index all $s$-uniform posteriors, with $\tilde{\mu}_\tau$ denoting the $\tau$'th such posterior. For an arbitrary signaling scheme $\varphi = (\Sigma, \{(\nu_\sigma,\mu_\sigma)\}_{\sigma\in\Sigma})$, we show that each posterior $\mu_\sigma$ can be decomposed into $s$-uniform posteriors without degrading the objective by more than $\alpha \beta + c \delta$; more formally:
\begin{enumerate}
\item 
$\mu_\sigma$ can be expressed as a convex combination of $s$-uniform posteriors as follows.

\begin{equation}
  \label{eq:condition1}
\mu_\sigma = \sum_{\tau \in [m^s]} \tilde{\nu}_{\sigma,\tau} \tilde{\mu}_\tau
\text{\quad with \quad }  \tilde{\nu}_{\sigma} \in \Delta_{m^s}.
\end{equation}
\item
The value of objective function, i.e., $g(A\mu_\sigma)$, is decreased by no more than $\alpha\beta+c\delta$ through this decomposition,
\begin{equation}
  \label{eq:condition2}
\sum_{\tau \in [m^s]} \tilde{\nu}_{\sigma,\tau} \cdot g(A \tilde{\mu}_\tau)
\ge g(A\mu_\sigma) - (\alpha\beta+c\delta).
\end{equation}
\end{enumerate}
The existence of such a decomposition follows from Theorem~\ref{thm:main}: Fix $\sigma$, and let $\tilde{\mu} \in \Delta_m$ be the empirical distribution of $s$ i.i.d. samples from distribution $\mu_\sigma \in \Delta_m$.  The vector $\tilde{\mu}$ is itself a random variable supported on $s$-uniform posteriors, its expectation is $\mu_\sigma$, and by Theorem~\ref{thm:main} we have $\ex[g(A \tilde{\mu})] \ge g(A \mu_\sigma) - (\alpha\beta+c\delta)$.  Therefore, by taking $\tilde{\nu}_{\sigma,\tau} = \Pr [ \tilde{\mu} = \tilde{\mu}_\tau]$ for each $\tau \in [m^s]$ we get the desired decomposition of $\mu_\sigma$.

The lemma follows by composing the decomposition $\varphi$ with the decompositions of the posterior beliefs $\mu_\sigma$ to yield a signaling scheme $\tilde{\varphi}$ with only $s$-uniform posteriors and $F(\tilde{\varphi}) \geq F(\varphi) - (\alpha \beta + c \delta)$.
Specifically, the signals of $\tilde{\varphi}$ are $\Sigma \cross [m^s]$, where  signal $(\sigma, \tau)$ has probability $\nu_\sigma \cdot \tilde{\nu}_{\sigma,\tau}$ and induces the posterior $\tilde{\mu}_\tau$. \footnote{Note, however, that we can also ``merge'' all signals with the same posterior $\tilde{\mu}_\tau$ without loss.}
Using Equation \eqref{eq:condition1} and \eqref{eq:condition2}, it is easy to verify that this describes a valid signaling scheme with $F(\tilde{\varphi}) \geq F(\varphi) - (\alpha \beta + c \delta)$.
%
%
\end{proof}

Lemma \ref{lem:sparse_signals} permits us to restrict attention to $s$-uniform posteriors without much loss in our objective. Since there are only $m^s$ such posteriors,  a simple linear program with $m^s$ variables computes an approximately optimal  signaling scheme.

\begin{theorem}[Polynomial-Time Signaling]
\label{thm:signal}
If $g$ is $\beta$-stable and $c$-Lipschitz, then
  for any constant $\alpha,\delta > 0$,
  there exists a deterministic algorithm that constructs a signaling scheme
  with objective value at least $OPT - (\alpha\beta+c\delta)$,
  where $OPT$ is the value of the optimal signaling scheme.
Moreover, the algorithm runs in time
  $\poly(m^{ \delta^{-2} \ln(1/\alpha)})\cdot n \cdot T$, where $T$ denotes the time needed to evaluate $g$ on a single input.
\end{theorem}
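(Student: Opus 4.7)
The plan is to combine Lemma~\ref{lem:sparse_signals} with a straightforward linear program that enumerates all $s$-uniform posteriors. Set $s = \lceil 2 \delta^{-2} \ln(2/\alpha) \rceil$, and enumerate the collection $\{\tilde{\mu}_\tau\}_{\tau \in [m^s]}$ of all $s$-uniform distributions over $[m]$; each such distribution is determined by a size-$s$ multiset of $[m]$, so there are at most $m^s$ of them. For each $\tau$, I would precompute $v_\tau := g(A \tilde{\mu}_\tau)$, which takes $O(n \cdot m + T)$ time per posterior, for a total precomputation time of $O(m^s \cdot (n m + T))$.

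Next, I would formulate and solve the linear program in variables $\{\tilde{\nu}_\tau\}_{\tau \in [m^s]}$ that searches for the best convex decomposition of the prior $\lambda$ into $s$-uniform posteriors:
\begin{align*}
\text{maximize} \quad & \sum_{\tau \in [m^s]} \tilde{\nu}_\tau \cdot v_\tau \\
\text{subject to} \quad & \sum_{\tau \in [m^s]} \tilde{\nu}_\tau \cdot \tilde{\mu}_\tau = \lambda, \\
& \tilde{\nu}_\tau \geq 0 \text{ for all } \tau \in [m^s].
\end{align*}
Summing the equality constraints coordinate-wise yields $\sum_\tau \tilde{\nu}_\tau = 1$, since both $\tilde{\mu}_\tau$ and $\lambda$ lie in $\Delta_m$. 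Hence any feasible solution describes a probability distribution over $s$-uniform posteriors, which corresponds canonically to a valid signaling scheme whose signals are indexed by $\tau$: signal $\tau$ is emitted with probability $\tilde{\nu}_\tau$ and induces posterior $\tilde{\mu}_\tau$. The LP has $m^s$ variables and $m$ equality constraints, so standard polynomial-time LP algorithms solve it in time polynomial in $m^s$.

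For the approximation guarantee, Lemma~\ref{lem:sparse_signals} supplies a signaling scheme $\tilde{\varphi}$ using only $s$-uniform posteriors and achieving objective value at least $OPT - (\alpha\beta + c\delta)$. Viewing $\tilde{\varphi}$ as a convex decomposition of $\lambda$ yields a feasible LP solution with matching objective, so the LP optimum is at least $OPT - (\alpha\beta + c\delta)$. Conversely, the optimal LP solution induces a signaling scheme attaining the LP's objective, which the algorithm outputs. There is no real obstacle here, as Lemma~\ref{lem:sparse_signals} has already done the heavy lifting of reducing the search to $s$-uniform posteriors; what remains is only bookkeeping to verify that $m^s = m^{O(\delta^{-2} \log(1/\alpha))}$, matching the claimed runtime $\poly(m^{\delta^{-2}\ln(1/\alpha)}) \cdot n \cdot T$.
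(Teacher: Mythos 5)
Your proposal is correct and matches the paper's proof: both set $s = \Theta(\delta^{-2}\ln(1/\alpha))$, invoke Lemma~\ref{lem:sparse_signals} to restrict to $s$-uniform posteriors at an $(\alpha\beta + c\delta)$ cost, and then solve the linear program over $\{\nu_\tau\}$ with the decomposition constraint $\sum_\tau \nu_\tau \tilde{\mu}_\tau = \lambda$. Your additional observation that the sum-to-one constraint is implied by the equality constraints, and your explicit accounting of the precomputation time, are harmless refinements of the same argument.
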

\begin{proof}
Let $s$ be an integer with $s \geq (2 \delta^{-2} \ln(2/\alpha))$, denote $M=m^s$, and let $\set{\mu_1, \cdots, \mu_M}$ be the set of all $s$-uniform posteriors.
Lemma~\ref{lem:sparse_signals} shows that restricting to 
$s$-uniform posteriors only introduces an $\alpha\beta+c\delta$ additive loss in the objective.
Thus it suffices to compute the optimal signaling scheme supported only on $s$-uniform posteriors.  
This can be done using the following linear program:
\begin{lp}
\label{lp:main}
\maxi{\sum_{j\in[M]} \nu_j \cdot g(A \mu_j)}
\st \con{\sum_{j\in[M]} \nu_j \mu_j = \lambda}
    \con{\nu \in \Delta_M} 
\end{lp}
Note $\mu_j$ is the $j$'th $s$-uniform posterior ---
  the only variables in this LP are $\nu_1, \ldots, \nu_M$.
\end{proof}

Our proofs can be adapted to obtain a bi-criteria guarantee in the absence of Lipschitz continuity, as in Section \ref{sec:framework}. The following Theorem follows easily, and we omit the details.

\begin{theorem}[Polynomial-Time Signaling (Bi-criteria)]
\label{thm:signal:bicriteria}
Let $g,h: [-1,1]^n \to [-1,1]$ be such that $g$ is $\beta$-stable and $h$ is a $(\delta,\rho)$-relaxation of $g$, and let $\alpha >0$ be a parameter. There exists a deterministic algorithm which, when given as input a matrix $A \in [-1,1]^{n \times m}$ and a prior distribution $\lambda \in \Delta_m$, constructs a signaling scheme $\varphi= \{(\nu_\sigma,\mu_\sigma)\}_{\sigma\in\Sigma}$ such that \[\sum_{\sigma \in \Sigma} \nu_\sigma h(A \mu_\sigma) \geq OPT - \alpha \beta - \rho,\]
  where $OPT$ is the maximizer of $F(\varphi^*) = \sum_{\sigma \in \Sigma^*} \nu^*_\sigma g(A \mu^*_\sigma)$ over signaling schemes $\varphi^*= \{(\nu^*_\sigma,\mu^*_\sigma)\}_{\sigma\in\Sigma^*}$. 
Moreover, the algorithm runs in time
  $\poly(m^{ \delta^{-2} \ln(1/\alpha)})\cdot n \cdot T$, where $T$ denotes the time needed to evaluate $h$ on a single input.
\end{theorem}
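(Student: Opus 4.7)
The plan is to mirror the proof of Theorem~\ref{thm:signal} step by step, substituting Theorem~\ref{thm:main:bi} for Theorem~\ref{thm:main} at the one place where sampling is invoked, and substituting $h$ for $g$ in the restricted linear program. The two ingredients to assemble are (i) a sparsification lemma analogous to Lemma~\ref{lem:sparse_signals}, and (ii) a linear program over $s$-uniform posteriors.

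First I would prove the bi-criteria sparsification statement: for any signaling scheme $\varphi^* = \{(\nu^*_\sigma, \mu^*_\sigma)\}_{\sigma \in \Sigma^*}$, and for $s \geq 2\delta^{-2}\ln(2/\alpha)$, there is a signaling scheme $\tilde{\varphi}$ supported only on $s$-uniform posteriors with $\sum_\sigma \tilde{\nu}_\sigma h(A \tilde{\mu}_\sigma) \geq F(\varphi^*) - \alpha\beta - \rho$. The argument follows Lemma~\ref{lem:sparse_signals} verbatim: fix a signal $\sigma \in \Sigma^*$, let $\tilde{\mu}$ denote the random empirical distribution of $s$ i.i.d.\ samples from $\mu^*_\sigma$, and apply Theorem~\ref{thm:main:bi} to obtain
\[
\ex[h(A \tilde{\mu})] \;\geq\; g(A\mu^*_\sigma) - \alpha\beta - \rho.
\]
Setting $\tilde{\nu}_{\sigma,\tau} = \Pr[\tilde{\mu} = \tilde{\mu}_\tau]$ for each $s$-uniform posterior $\tilde{\mu}_\tau$ decomposes $\mu^*_\sigma$ as a convex combination of $s$-uniform posteriors whose expected $h$-value is within $\alpha\beta + \rho$ of $g(A\mu^*_\sigma)$. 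Composing this with the outer decomposition $\varphi^*$ gives the claimed $\tilde{\varphi}$.

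Second, I would compute the optimal $s$-uniform scheme via a linear program analogous to LP~\eqref{lp:main}, but with $g$ replaced by $h$ in the objective:
\begin{lp*}
\maxi{\sum_{j\in[M]} \nu_j \cdot h(A \mu_j)}
\st \con{\sum_{j\in[M]} \nu_j \mu_j = \lambda}
    \con{\nu \in \Delta_M}
\end{lp*}
Here $M = m^s$ and $\mu_1, \dots, \mu_M$ enumerate all $s$-uniform posteriors. By the sparsification step, the optimum of this LP is at least $OPT - \alpha\beta - \rho$, since the $\tilde{\varphi}$ above is feasible. The LP has $M$ variables and $m+1$ nontrivial constraints, so it is solvable in time polynomial in $M$; enumerating all $s$-uniform posteriors takes $m^{O(s)}$ time and each evaluation of $h(A\mu_j)$ takes $O(nT)$ time, yielding the claimed runtime $\poly(m^{\delta^{-2}\ln(1/\alpha)}) \cdot n \cdot T$.

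There is no real obstacle here beyond verifying that the relaxation condition from Definition~\ref{def:relaxation} plays precisely the role that Lipschitz continuity played in the proof of Theorem~\ref{thm:signal}. The one point worth checking carefully is that the benchmark in the sparsification inequality is $g(A\mu^*_\sigma)$ (not $h(A\mu^*_\sigma)$), which is exactly what is needed to compare against $OPT$ defined via $g$; this is automatic from the statement of Theorem~\ref{thm:main:bi}. The rest is bookkeeping, so I would simply write ``the argument is identical to Theorem~\ref{thm:signal}, using Theorem~\ref{thm:main:bi} in place of Theorem~\ref{thm:main} and replacing $g$ by $h$ in the objective of LP~\eqref{lp:main}'' and omit the repeated calculation, as the theorem statement itself anticipates.
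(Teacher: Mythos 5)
Your proposal is correct and is exactly the argument the paper intends but omits (the paper states this theorem ``follows easily'' from the uni-criteria case and gives no proof). You correctly replace Theorem~\ref{thm:main} with Theorem~\ref{thm:main:bi} in the sparsification step, use $h$ in the LP objective, and correctly observe that the benchmark $g(A\mu^*_\sigma)$ on the right of Theorem~\ref{thm:main:bi} is precisely what is needed to compare against $OPT$ as defined via $g$; no gaps.
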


\paragraph{Remarks} We note that our proof suggests an extension of the result in Theorem \ref{thm:signal} 
to cases in which $f$ is given by a ``black box'' oracle, so long as we are promised that it is of the form  $f(\mu) = g(A\mu)$. In this model the runtime of our algorithm does not depend on $n$, but instead depends on the cost of querying $f$. We also point out that
  that even though we precompute the quality of all $m^s$
  posteriors, we can guarantee that our output signaling scheme uses at most $m+1$ signals; this is because  LP \eqref{lp:main} has only $m+1$ constraints, and therefore admits  an optimal solution where at most $m+1$ variables are non-zero.



\section{Signaling in Probabilistic Second-Price Auctions}\label{sec:auction}

We examine signaling in probabilistic second-price auctions, as considered by Emek et al.~\cite{emeksignaling} and Miltersen and Sheffet \cite{miltersensignals}. In this setting, the item being auctioned is probabilistic, and the instantiation of the item is known to the auctioneer but not to the bidders. The auctioneer commits to a signaling scheme for (partially) revealing information about the item for sale before subsequently running a second-price auction. We consider a probabilistic second-price auction described by the following parameters:

\begin{itemize}
\item An integer $n$ denoting the number of bidders. We index the players by the set $[n]=\set{1,\ldots,n}$.
\item An integer $m$ denoting the number of \emph{states of nature}. We index states of nature by the set $\Theta=\set{1,\ldots,m}$. Each $\theta \in \Theta$ represents a possible instantiation of the item being sold. 
\item A common-knowledge prior distribution $\lambda \in \Delta_m$ on the states of nature.
\item A common-knowledge prior distribution  $\D$ on \emph{valuation matrices} $\V \in [0,1]^{n \times m}$, given either explicitly or as a ``black-box'' sampling oracle. For a valuation matrix $\V$, entry $\V_{ij}$ denotes the value of player $i$ for the item corresponding to state of nature $j$.  
\end{itemize}

 The game being played is the following: (a) The auctioneer first commits to a signaling scheme $\varphi: \Theta \to \Sigma$; (b) A state of nature  $\theta \in \Theta$ is drawn according to $\lambda$ and revealed to the auctioneer but not the bidders; (c) The auctioneer reveals a public signal $\sigma \sim \varphi(\theta)$ to all the bidders; (d) A valuation matrix $\V \in [0,1]^{n \times m}$ is drawn according to $\D$, and each player $i$ learns his value $\V_{i,j}$ for each potential item $j$; (e) Finally, a second-price auction for the item is run.

As an example, consider an auction for an umbrella: the state of nature $\theta$ can be the weather tomorrow, which determines the utility $\V_{i,\theta}$ of an umbrella to player $i$. 
We assume that $\lambda$ and $\D$ are independent. We also emphasize that a bidder knows nothing about $\theta$ other than its distribution $\lambda$ and the public signal $\sigma$, and the auctioneer knows nothing about $\V$ besides its distribution $\D$ prior to running the auction.


  We adopt the (unique) dominant-strategy truth-telling 
  equilibrium as our solution concept. 
Specifically, given a signaling scheme $\varphi: \Theta \to \Sigma$ and a signal $\sigma \in \Sigma$, in the subgame corresponding to $\sigma$ it is  a dominant strategy for player $i$ to bid
  $\Ex_{\theta \sim \lambda } [ \V_{i\theta} | \varphi(\theta) = \sigma ]$ --- 
  his posterior expected value for the item conditioned on the received signal~$\sigma$. Therefore the item goes to the  player with maximum posterior expected value, at a price equal to the second-highest  posterior expected value.

The algorithmic problem we consider is the one faced by the auctioneer in step (a) --- namely computing an optimal signaling scheme --- assuming the auctioneer looks to maximize  expected revenue. It was shown in \cite{emeksignaling,miltersensignals} that polynomial-time algorithms exist for several special cases of this problem. However, the general problem was shown to be NP-hard even with 3 bidders --- specifically, no additive FPTAS exists unless P = NP. In this section, we resolve the approximation complexity of this basic signaling problem
  by giving an additive PTAS. We note that variations of this problem were considered in \cite{GhoshNS07,guo}, with different constraints on the signaling scheme ---  the results in these works are not directly relevant to our model.

\subsection{Revenue is Stable}

Given a signaling scheme $\varphi$ expressed as a decomposition
  $\{\nu_\sigma,\mu_\sigma\}_{\sigma\in\Sigma}$ of the prior distribution $\lambda$, we can express the auctioneer's expected revenue as 
\begin{align*}
\sum_{\sigma \in \Sigma} \nu_\sigma \Ex_{\V\sim\D} \maxtwo (\V  \mu_\sigma),
\end{align*}
where the function $\maxtwo$ returns the second largest entry of a given vector, i.e. $\maxtwo(t) = t_{[2]}$.
To apply our main theorem, we need to show that the revenue in a subgame with posterior distribution $\mu \in \Delta_m$ --- namely $\Ex_{\V\sim\D} \maxtwo (\V  \mu)$ --- can be written in the form $g(W \mu)$ for a matrix $W$. To facilitate our discussion we assume that the valuation distribution $\D$ has finite support size $C$, though this is without loss of generality.
Imagine we form a large matrix $W$ by stacking matrices in the support of $\D$ on top of each other.
Formally, $W = [\V_1^T, \V_2^T, \cdots, \V_C^T]^T$ 
  where $\V_i$ is the $i$th matrix in the support of $\D$.
When matrix $\V_i$ is drawn from $\D$, we take the second-highest bid 
  from the rows of $W$ corresponding to $\V_i$
  (rows $(i - 1) \cdot n + 1$ to $i \cdot n$, where $n$ is the number players).
For $S \sse [nC]$ and $t \in [0,1]^{nC}$, let $\maxtwo_S(t)$ denote the second-highest value among entries of $t$ indexed by $S$. Then we can write the auctioneer's expected revenue as
\[
	\gRev(W \mu) = \Ex_{\V\sim\D} \maxtwo\nolimits_{S(\V)} (W  \mu)
\]
where $S(\V)$ is the set of rows in $W$ corresponding to $\V$.


\begin{lemma}[Smooth and Stable Revenue]
The function $\gRev(t) = \Ex_{\V\sim\D} \maxtwo\nolimits_{S(\V)} (t)$ is 1-Lipschitz and 2-stable.
\label{lem:rev_stable}
\end{lemma}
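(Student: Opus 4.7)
The plan is to reduce both properties of $\gRev$ to the corresponding properties of the simpler fixed-subset functions $\maxtwo_S : [0,1]^{nC} \to [0,1]$ for each $S \sse [nC]$, and then lift to $\gRev = \Ex_{\V \sim \D}[\maxtwo_{S(\V)}]$ by noting it is a convex combination of finitely many such functions (weighted by $\Pr[\V]$) and invoking Proposition~\ref{prop:close} with $h$ the weighted average, which is convex, nondecreasing, and $1$-Lipschitz in $L^\infty$.

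For $1$-Lipschitz continuity of each $\maxtwo_S$, I would appeal to the standard fact that order statistics are $1$-Lipschitz in $L^\infty$: if $\infNorm{t - t'} \leq \delta$, then sorting the restrictions of $t$ and $t'$ to $S$ yields two sequences whose $k$-th entries differ by at most $\delta$ for every $k$, and in particular the second-largest entries differ by at most $\delta$. Averaging over $\V$ preserves this property.

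For $2$-stability of $\maxtwo_S$, fix $t \in [0,1]^{nC}$ and let $p_1, p_2 \in S$ be the positions of the top two entries of $t$ restricted to $S$. The key observation is that if neither $p_1$ nor $p_2$ lies in the corrupted set $R$, then no adversarial corruption of entries indexed by $R$ can decrease $\maxtwo_S$: the values $t_{p_1}, t_{p_2}$ stay in place, so the second-largest of $t'$ restricted to $S$ remains at least $t_{p_2} = \maxtwo_S(t)$ regardless of what the adversary does (raising a corrupted non-top entry only raises the second-largest). When $\{p_1,p_2\} \cap R \neq \emptyset$, the decrease is trivially bounded by $1$ since $\maxtwo_S$ is $[0,1]$-valued. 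A union bound on the marginals $\Pr[p_1 \in R], \Pr[p_2 \in R] \leq \alpha$ then gives
\[
\Ex_{R \sim \DDD}\!\left[\min_{t' \corr{R} t} \maxtwo_S(t')\right] \geq \maxtwo_S(t) - \Pr[\{p_1,p_2\} \cap R \neq \emptyset] \geq \maxtwo_S(t) - 2\alpha.
\]

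The main subtlety to verify is the all-or-nothing claim that corrupting only non-top-$2$ positions cannot decrease $\maxtwo_S$; one must rule out that setting some $t'_{p_j}$ with $j \geq 3$ to a value between $t_{p_2}$ and $t_{p_1}$ could somehow lower the second-largest, but such a change only raises (or maintains) the second-largest and is therefore suboptimal for the minimizing adversary. Once $\maxtwo_S$ is established as $2$-stable for every fixed $S$, Proposition~\ref{prop:close} with $\beta = 2$ and $c = 1$ immediately yields that $\gRev$ is $2$-stable. (Alternatively, one can directly verify this by swapping min and expectation via the inequality $\min_{t'} \Ex_\V[\cdot] \geq \Ex_\V[\min_{t'} \cdot]$.)
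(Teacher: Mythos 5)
Your proof is correct and follows essentially the same route as the paper's: establish $1$-Lipschitz continuity and $2$-stability for the fixed-index functions $\maxtwo_S$ (via the order-statistics bound and the union bound over the top-two positions, respectively), then lift to $\gRev$ by noting it is a convex combination of such functions and invoking Proposition~\ref{prop:close}. Your added care in noting that corruption of non-top-two entries can only raise (never lower) the second-largest value is a small but worthwhile precision over the paper's terser phrasing, and the alternative min/expectation swap you mention is also a valid shortcut; the substance is the same.
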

\begin{proof}
Because $\maxtwo_S$ is 1-Lipschitz for a fixed set of indices $S$, it follows that $\gRev$, which is  a convex combination of
these 1-Lipschitz functions, is also 1-Lipschitz.

To show that $\gRev$ is stable, we first show that the function $\maxtwo:[0,1]^n \to [0,1]$ is
stable. Given $t \in [0,1]^n$ and a random set  $R \sse [n]$ drawn from an $\alpha$-light
distribution $\D$, the union bound implies that $R$ includes neither of the two largest entries of $t$
with probability at least $1-2\alpha$. In this case, the value of $\maxtwo$ is not
affected by corruption of the entries indexed by $R$. Hence 
\[ 
\Ex_{R\sim\D}\ \left[\min\{\maxtwo(t'):t' \corr{R} t \}\right] 
\ge (1-2\alpha) \cdot \maxtwo(t) + 2\alpha \cdot 0
\geq \maxtwo(t) - 2\alpha.
\]
Therefore $\maxtwo$ is 2-stable, which implies that $\maxtwo_S: [0,1]^{nC} \to [0,1]$ is also
2-stable for any fixed set of indices $S$. The function $\gRev$ is a convex combination of functions of the form $\maxtwo_S$, and is therefore also 
2-stable by Proposition~\ref{prop:close}.
\end{proof}

\begin{theorem}
\label{thm:auction_ptas}
The revenue-maximizing signaling problem in probabilistic second-price auctions admits an additive PTAS when the valuation distribution is given explicitly, and an additive PRAS when the valuation distribution is given by a sampling oracle.
\end{theorem}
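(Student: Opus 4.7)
The plan is to handle the two cases of Theorem~\ref{thm:auction_ptas} separately, each building on the signaling algorithm of Theorem~\ref{thm:signal}. For the explicit case, I would apply Theorem~\ref{thm:signal} directly with $g=\gRev$ and with input matrix $W\in[0,1]^{nC\times m}$ formed by vertically stacking all valuation matrices in the support of $\D$, where $C=|\mathrm{supp}(\D)|$. By Lemma~\ref{lem:rev_stable}, $\gRev$ is $2$-stable and $1$-Lipschitz in $L^\infty$, so choosing $\alpha=\eps/4$ and $\delta=\eps/2$ in Theorem~\ref{thm:signal} yields a signaling scheme whose expected revenue is within $\eps$ of the optimum in time $\poly(m^{O(\eps^{-2}\log(1/\eps))})\cdot nC$. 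Each evaluation of $\gRev$ on a single input vector in $[0,1]^{nC}$ costs only $O(nC)$, since it amounts to extracting the second-largest coordinate within each of the $C$ blocks of $n$ entries and averaging.

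For the sampling oracle case, the plan is to reduce to the explicit case via empirical estimation of $\D$. I would draw $N$ independent valuation matrices $V_1,\ldots,V_N$ from $\D$ and form the empirical stacked matrix $\hat W$ together with the empirical revenue function $\hat g(\hat W\mu)=\tfrac{1}{N}\sum_{i=1}^N \maxtwo(V_i\mu)$. The same argument as in Lemma~\ref{lem:rev_stable} shows that $\hat g$ is also $1$-Lipschitz and $2$-stable, so running the explicit PTAS from the first step on $\hat W$ with approximation parameter $\eps/2$ produces a signaling scheme $\varphi$ whose empirical expected revenue is within $\eps/2$ of $\max_\varphi \sum_\sigma \nu_\sigma \hat g(\hat W\mu_\sigma)$.

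The step I expect to be the main obstacle is establishing a uniform convergence guarantee which transfers the empirical PTAS guarantee back to the true objective: I need $|\hat g(\hat W\mu)-\gRev(W\mu)|\le \eps/4$ to hold simultaneously over every $s$-uniform posterior $\mu\in\Delta_m$ with $s=O(\eps^{-2}\log(1/\eps))$. The key observation enabling this is that, by Lemma~\ref{lem:sparse_signals}, restricting attention to $s$-uniform posteriors already achieves an $\eps/4$-approximation, and there are at most $m^s$ such posteriors. For each fixed such $\mu$, the random variables $\maxtwo(V_i\mu)$ lie in $[0,1]$ with mean $\gRev(W\mu)$, so Hoeffding's inequality combined with a union bound over all $m^s$ posteriors delivers the desired uniform bound whenever $N=\Omega\bigl(\eps^{-2}(s\log m+\log(1/\gamma))\bigr)$, which is polynomial in $m$, $1/\eps$, and $\log(1/\gamma)$.

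Combining the uniform convergence bound with the guarantees from Lemma~\ref{lem:sparse_signals} and the explicit PTAS applied to $\hat g$, the output scheme $\varphi$ achieves true expected revenue within $\eps$ of the optimum with probability at least $1-\gamma$, in total time polynomial in $m$, $n$, $1/\eps$, and $\log(1/\gamma)$, establishing the PRAS. The only subtlety worth double-checking is that the LP~\eqref{lp:main} used inside Theorem~\ref{thm:signal} evaluates the objective only at the fixed enumerated set of $s$-uniform posteriors, so pointwise concentration across this finite set suffices and no covering argument over the simplex is needed.
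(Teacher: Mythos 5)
Your proposal matches the paper's proof in both structure and substance: the explicit case is a direct instantiation of Theorem~\ref{thm:signal} via Lemma~\ref{lem:rev_stable}'s $2$-stability and $1$-Lipschitz bounds, and the sampling-oracle case proceeds by Monte-Carlo estimation of the revenue at all $s$-uniform posteriors, using Hoeffding plus a union bound over the $m^s$ such posteriors and then running the explicit algorithm on the empirical distribution. The concluding observation that pointwise concentration over the finite enumerated set of $s$-uniform posteriors suffices (so no covering argument is needed) is exactly the point the paper relies on via Lemma~\ref{lem:sparse_signals}.
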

\begin{proof}
Lemma~\ref{lem:rev_stable} shows that the function $\gRev$ is 2-stable and 1-Lipschitz. 
If the valuation distribution $\D$ is explicitly given with support size $C$,
  the function $\gRev$ can be evaluated in $\poly(n,m,C)$ time.
Then for any $\epsilon > 0$,  it follows from Theorem~\ref{thm:signal} by 
  setting $\alpha=\eps/4$ and $\delta=\eps/2$ that there is a deterministic algorithm 
  that computes a signaling scheme with expected revenue $OPT-\eps$, in time
  $\poly(n,m^{\eps^{-2}\ln(1/\eps)},C)$.

If $\D$ is given via a sampling oracle,  standard tail bounds and the union bound imply that  $C= \Theta((s \log m + \log(\gamma^{-1})) / \eps^2)$ samples from $\D$ suffice to estimate to within $O(\epsilon)$ the revenue associated with every $s$-uniform posterior in $\Delta_m$, with success probability $1-\gamma$. Since revenue is $O(1)$-stable and $O(1)$-Lipschitz, Lemma \ref{lem:sparse_signals} implies that we can restrict attention to signaling schemes with $s$-uniform posteriors for $s=\poly(\frac{1}{\eps})$. Proceeding as in Theorem~\ref{thm:signal}, using the revenue estimates from Monte-Carlo sampling in lieu of exact values, we can construct a signaling scheme with revenue $OPT-\eps$  in time $\poly(n,m^{\eps^{-2}\ln(1/\eps)},\log(\frac{1}{\gamma}))$, with success probability $1-\gamma$.
%
\end{proof}




\def\VOT{Voting problem}
\def\pDist{p}
\def\signal{\sigma}
\def\state{\theta}
\def\State{\Theta}
\def\postFunc{\mu}
\def\postSet{P}
\def\postCo{\pi}

\section{Persuading Voters}\label{sec:voting}
In this section, we apply our mixture selection framework
  to natural signaling problems encountered in the context of social choice, as introduced by Alonso and C\^{a}mara \cite{Alonso14}.
Consider an election with two possible outcomes, `Yes' and `No'. 
For example,  voters may need to choose whether to adopt a new law or social policy;  board members of
  a company may need to decide whether to invest in a new project;
 and  members of a jury must decide whether a defendant is declared guilty or not guilty.
As in \cite{Alonso14}, we focus on the scenario in which voters have uncertainty regarding their utilities for the two possible outcomes (e.g., the risks and rewards of 
 the new project). Specifically, voters' utilities are parameterized by an a-priori unknown state of nature $\theta$ drawn from a common-knowledge prior distribution. We adopt the perspective of a principal  with access to the realization of $\theta$, and looking to influence the outcome of the election by signaling. 

Formally, we consider a voting setting with $\numrow$ voters and $\numcol$ states of nature. We index the voters by the set $[n]=\set{1,\ldots,n}$, and states of nature by the set $\Theta = \set{1,\ldots, m}$. We assume voters' preferences are given by a matrix $U \in [-1,1]^{n \times m}$, where $U_{i,j}$ denotes voter $i$'s utility in the event of a `Yes' outcome in state of nature $j$. Without loss of generality, we assume utilities are normalized so that each voter's utility for a `No' outcome is $0$ in each state of nature.  A voter $i$ who believes that the state of nature follows a distribution $\mu \in \Delta_m$ has expected utility $u(i,\mu) = \sum_{j \in \Theta} U_{i,j} \mu_{j}$ for a `Yes' outcome. In most  voting systems with a binary outcome, including for example threshold voting rules, it is a dominant strategy to vote  `Yes' if the utility $u(i,\mu)$ is at least $0$ and `No' otherwise. For our approximation algorithms, we also allow implementation in approximate dominant strategies --- i.e., we sometimes assume a voter votes `Yes' if his utility $u(i,\mu)$ is at least $-\delta$ for a small parameter $\delta$.\footnote{Such relaxations seem necessary for our results. Moreover, depending on the context, modes of intervention for shifting the votes of voters who are close to being indifferent may be realistic.} We assume that the state of nature $\theta \in \Theta$ is drawn from a common prior $\lambda \in \Delta_m$, and a principal with access to $\theta$ reveals a public signal $\sigma$ prior to voters casting their votes. As usual, we adopt the perspective of a principal looking to commit to a signaling scheme $\varphi: \Theta \to \Sigma$, for some set of signals $\Sigma$. 

Alonso and C\^{a}mara \cite{Alonso14} consider a principal  interested in maximizing the probability that at least $50\%$ (or some given threshold) of the voters vote 'Yes', in expectation over states of nature. They characterize optimal signaling schemes  analytically, though stop short of prescribing an algorithm for signaling. Theirs is the natural objective when the election employs a majority (or threshold) voting rule, and the principal is interested in influencing the outcome of the vote. Approximating this objective requires nontrivial modifications to our framework, and therefore we begin this section by examining a different, yet also natural, objective: the expected number of 'Yes' votes. We design a bi-criteria approximation scheme for this objective, then describe the necessary modifications for the threshold function objective of \cite{Alonso14}.

\subsection{Maximizing Expected Number of Votes}
We now examine bi-criteria approximation algorithms for maximizing the expected number of `Yes' votes. For our benchmark, we use the function  $\gVotesum(t) := \sum_{i \in [n]} \frac{1}{n} I[t_i \ge 0]$, where $I[\E]$ denotes the indicator function for event $\E$. Assuming voters vote `Yes' precisely when their posterior expected utility for a `Yes' outcome is nonnegative, the number of `Yes' votes when voters have preferences $U \in [-1,1]^{n \times m}$ and posterior belief $\mu \in \Delta_m$ equals $\gVotesum(U \mu)$. When the state of nature is distributed according to a common prior $\lambda$, and voters are informed according to signaling $\varphi$ inducing a decomposition $\set{\mu_\sigma, \nu_\sigma}_{\sigma \in \Sigma}$ of $\lambda$, the expected number of `Yes' votes equals $\FVotesum(\varphi, U,\lambda) := \sum_{\sigma \in \Sigma} \nu_\sigma \gVotesum(U \mu_\sigma)$. We use $\OVotesum(U,\lambda)$ to denote the maximum value of $\FVotesum(\varphi,U,\lambda)$ over signaling schemes $\varphi$.

As the first step to apply our framework, we prove that $\gVotesum$ is stable.


\begin{lemma}\label{fac:voting:stable}
The function $\gVotesum$
is \gStable{1}.
\end{lemma}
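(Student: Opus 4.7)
The plan is to mirror the analysis already done in the excerpt for $\gmid$, which is shown to be $4$-stable by a linearity-of-expectations argument against an $\alpha$-light distribution. The function $\gVotesum$ is even simpler: each entry contributes independently $\frac{1}{n} I[t_i \geq 0]$ to the output, so corrupting a single coordinate can decrease $\gVotesum$ by at most $\frac{1}{n}$ in the worst case (namely, when that coordinate was nonnegative before corruption and is made negative afterward).

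Concretely, I would first fix an arbitrary $t \in [-1,1]^n$ and an arbitrary $R \subseteq [n]$, and show the deterministic inequality
\[
\min\set{\gVotesum(t') : t' \corr{R} t} \;\geq\; \gVotesum(t) - \frac{|R|}{n},
\]
by observing that for $i \notin R$ the indicator $I[t'_i \geq 0]$ is unchanged, so at most the $|R|$ corrupted coordinates can flip from contributing $\frac{1}{n}$ to contributing $0$. Next, I would take expectation over $R$ drawn from an arbitrary $\alpha$-light distribution $\D$ and use linearity of expectation together with the marginal bound $\Pr_{R \sim \D}[i \in R] \leq \alpha$:
\[
\Ex_{R \sim \D}\!\left[\tfrac{|R|}{n}\right] \;=\; \tfrac{1}{n}\sum_{i=1}^{n} \Pr_{R \sim \D}[i \in R] \;\leq\; \alpha.
\]
Combining these two steps yields
\[
\Ex_{R \sim \D}\!\left[\min\set{\gVotesum(t') : t' \corr{R} t}\right] \;\geq\; \gVotesum(t) - \alpha,
\]
which is exactly $1$-stability.

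There is no real obstacle here; the argument is essentially identical in structure to the proof sketched for $\gmid$ in Section~\ref{sec:framework}, only with the constant $4$ replaced by $1$ because each entry of $\gVotesum$ carries weight exactly $\frac{1}{n}$ rather than $\frac{4}{n}$ (on the non-trimmed coordinates). So the proof reduces to two lines and a brief justification of the deterministic per-set inequality.
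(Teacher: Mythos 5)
Your proof is correct and a fully valid route, but the paper actually proves this lemma differently: it decomposes $\gVotesum$ as the convex combination $\gVotesum(t) = \frac{1}{n}\sum_{i=1}^n g_i(t)$ of the per-voter indicator functions $g_i(t) = I[t_i \ge 0]$, observes that each $g_i$ is trivially $1$-stable (since $g_i$ depends only on coordinate $i$), and then invokes Proposition~\ref{prop:close} (closure of stability under convex combinations, proved via Jensen's inequality). You instead give the direct argument used in the paper's $\gmid$ example: establish the deterministic bound $\min\{\gVotesum(t') : t' \corr{R} t\} \ge \gVotesum(t) - |R|/n$ for fixed $R$, then take expectation and use the $\alpha$-light marginal bound with linearity of expectation. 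Both proofs are short and yield the same constant; your version is slightly more self-contained because it avoids Proposition~\ref{prop:close} and Jensen, while the paper's version is slightly more modular in that it isolates the single-coordinate stability and lets the composition lemma do the bookkeeping. For a function of this linear form the two arguments are essentially the same computation in different packaging.
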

\begin{proof}
For each voter $i \in \gindices$, let 
$g_{i} : \gdom \rightarrow \Set{0, 1}$ be the function
indicating whether voter $i$ prefers the `Yes' outcome,
i.e., $g_i(\ginp) = I[t_i \ge 0]$.
Each individual $g_i$ is \gStable{1},
  because as long as the $i$'th input $\ginp_i$
  is not corrupted the output of $g_i$ does not change.
Therefore $\gVotesum(\ginp) = 
\frac{1}{\numrow} \sum\limits_{i = 1}^{\numrow} g_i(\ginp)$, being a convex combination of  $1$-stable functions,
is \gStable{1} by Proposition~\ref{prop:close}.
\end{proof}

Unfortunately, $\gVotesum$ is not $O(1)$-Lipschitz. We therefore employ the bi-criteria extension to our framework from Definition~\ref{def:relaxation}. Specifically, for a parameter $\delta>0$, we assume a voter votes `Yes' as long as his expected utility from a `Yes' outcome is at least $-\delta$. Correspondingly, we  define the relaxed function $\gVotesum_\delta (t) := \sum_{i \in [n]} \frac{1}{n} I[t_i \ge -\delta]$; the expected number of `Yes' votes from a signaling scheme $\varphi=\set{\mu_\sigma,\nu_\sigma}_{\sigma \in \Sigma}$ can analogously be written as $\FVotesum_\delta(\varphi, U,\lambda) := \sum_{\sigma \in \Sigma} \nu_\sigma \gVotesum_\delta(U \mu_\sigma)$. 

It is easy to verify that  $\gVotesum_\delta$ is an $(\delta,0)$-relaxation of $\gVotesum$; combining this fact with Theorem \ref{thm:signal:bicriteria} yields a bi-criteria approximation scheme for the problem of maximizing the expected number of `Yes' votes.

\begin{theorem}
\label{thm:voting_sampling}
Let $\eps,\delta >0$ be parameters, let $U \in [-1,1]^{n \times m}$ describe the preferences of $n$ voters in $m$ states of nature, and let $\lambda \in \Delta_m$ be the prior of states of nature. There is an algorithm with runtime $\poly(m^{\frac{\ln(1/\eps)}{\delta^2}},n)$ for computing a signaling scheme $\varphi$ such that $\FVotesum_\delta(\varphi,U,\lambda) \geq \OVotesum(U,\lambda) - \eps$.
\end{theorem}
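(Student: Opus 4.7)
The plan is to reduce the theorem directly to the bi-criteria signaling result, Theorem~\ref{thm:signal:bicriteria}. The two inputs it needs are already in hand: $\gVotesum$ is $1$-stable by Lemma~\ref{fac:voting:stable}, and $\gVotesum_\delta$ should serve as a $(\delta, 0)$-relaxation of $\gVotesum$ in the sense of Definition~\ref{def:relaxation}.

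First I would verify the relaxation property. For any $t_1, t_2 \in [-1,1]^n$ with $\infNorm{t_1 - t_2} \leq \delta$, we have $(t_2)_i \geq (t_1)_i - \delta$ coordinate-wise, so $(t_1)_i \geq 0$ implies $(t_2)_i \geq -\delta$. Hence $I[(t_2)_i \geq -\delta] \geq I[(t_1)_i \geq 0]$ for every $i$, and averaging yields $\gVotesum_\delta(t_2) \geq \gVotesum(t_1)$, confirming that $\gVotesum_\delta$ is a $(\delta, 0)$-relaxation of $\gVotesum$.

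Next, I would invoke Theorem~\ref{thm:signal:bicriteria} with $g = \gVotesum$, $h = \gVotesum_\delta$, $\beta = 1$, $\rho = 0$, $\alpha = \eps$, matrix $A = U \in [-1,1]^{n \times m}$, and prior $\lambda \in \Delta_m$. The theorem produces a signaling scheme $\varphi = \{(\nu_\sigma, \mu_\sigma)\}_{\sigma \in \Sigma}$ satisfying
\[
\sum_{\sigma \in \Sigma} \nu_\sigma \, \gVotesum_\delta(U \mu_\sigma) \;\geq\; \OVotesum(U, \lambda) - \alpha \beta - \rho \;=\; \OVotesum(U, \lambda) - \eps,
\]
and the left-hand side is exactly $\FVotesum_\delta(\varphi, U, \lambda)$ by definition.

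Finally, the runtime bound in Theorem~\ref{thm:signal:bicriteria} is $\poly(m^{\delta^{-2}\ln(1/\alpha)}) \cdot n \cdot T$, where $T$ is the cost of evaluating $h$. Since $\gVotesum_\delta$ is computable in $O(n)$ time on any input, this gives the claimed $\poly(m^{\ln(1/\eps)/\delta^2}, n)$ runtime. There is essentially no obstacle here: all the real work is absorbed into Lemma~\ref{fac:voting:stable} (stability) and Theorem~\ref{thm:signal:bicriteria} (the general bi-criteria machinery), so this theorem falls out as a direct composition, with the only technical point being the one-line verification that tolerating a $\delta$-slack in each voter's indifference threshold is precisely what it takes to convert $\infNorm{\cdot} \leq \delta$ input perturbations into a guaranteed lower bound on the output.
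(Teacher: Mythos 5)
Your proof is correct and follows exactly the route the paper intends: the paper itself states that $\gVotesum_\delta$ is a $(\delta,0)$-relaxation of $\gVotesum$ and that combining this with Theorem~\ref{thm:signal:bicriteria} gives the result, but leaves the verification implicit. You have simply spelled out the easy check and the parameter instantiation ($\alpha=\eps$, $\beta=1$, $\rho=0$, $T=O(n)$), which matches the paper's argument.
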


Using the same techniques as in Section \ref{sec:lottery:oracle}, 
  we can extend this result to the case where
  the valuations of voters are drawn from a distribution give either explicitly or by a sampling oracle. We omit the details.

\subsection{Maximizing Probability of a Majority Vote}

We now sketch the necessary modifications when the principal is interested in maximizing the probability of a `Yes' outcome, assuming a majority voting rule. We make two relaxations, which appear necessary for our framework: we assume a voter votes `Yes' as long as his expected utility from a `Yes' outcome is at least $-\delta$, and assume that the `Yes' outcome is attained when at least a $(0.5 - \delta)$ fraction of voters vote `Yes'. Our benchmark will be the maximum probability of a 'Yes' outcome in the absence of these two relaxations.

We define our benchmark using the function $\gVotethresh(t) = I[\gVotesum(t) \geq 0.5]$  which evaluates to $1$ if at least half of its $n$ inputs are nonnegative, and to $0$ otherwise. This function is not $O(1)$-stable, so we work with a more stringent benchmark which is. Specifically, for a parameter $\delta > 0$, we use the function $\gVotesmooth_\delta$ which is pointwise greater than or equal to $\gVotethresh$, defined as follows: 
\[
\gVotesmooth_\delta(t) = \begin{cases}  
\frac{1}{\delta} \left(\gVotesum(t) -0.5 + \delta \right)  &\mbox{if } \gVotesum(t) \in [0.5 - \delta, 0.5]  \\ 
\gVotethresh(t) & \mbox{otherwise.} \end{cases}  \]
Observe that $\gVotesmooth_\delta$ applies a continuous piecewise-linear function to the output of $\gVotesum$. It is easy to verify that $\gVotesmooth_\delta$ is $\frac{1}{\delta}$-stable, and upperbounds $\gVotethresh$.

Finally, to measure the quality of our output we define the relaxed function $\gVotethresh_\delta : [-1,1]^n \to \set{0,1}$, which outputs $1$ if at least a $(0.5 - \delta)$ fraction of its inputs exceed $-\delta$, and outputs $0$ otherwise.
By Definition~\ref{def:relaxation},
$\gVotethresh_\delta$ is a $(\delta,0)$-relaxation of $\gVotesmooth_\delta$ (and, consequently, also of $\gVotethresh$).

As usual, let $\FVotethresh(\varphi,U,\lambda)$ and $\FVotethresh_\delta(\varphi,U,\lambda)$ denote the functions which evaluate the quality of a signaling $\varphi$ scheme using $\gVotethresh$ and $\gVotethresh_\delta$, respectively. Moreover, let $\OVotethresh(U,\lambda)$ be the maximum value of $\FVotethresh(\varphi,U,\lambda)$ over signaling schemes $\varphi$. We apply Theorem \ref{thm:signal:bicriteria} to $\gVotethresh_\delta$ and $\gVotesmooth$, setting $\alpha=\eps \delta$, and use the fact that $\gVotesmooth$ upperbounds our true benchmark $\gVotethresh$, to conclude the following. 

\begin{theorem}
\label{thm:voting_sampling}
Let $\eps,\delta >0$ be parameters, let $U \in [-1,1]^{n \times m}$ describe the preferences of $n$ voters in $m$ states of nature, and let $\lambda \in \Delta_m$ be the prior of states of nature. There is an algorithm with runtime $\poly(m^{\frac{\ln(1/\eps\delta)}{\delta^2}},n)$ for computing a signaling scheme $\varphi$ such that $\FVotethresh_\delta(\varphi,U,\lambda) \geq \OVotethresh(U,\lambda) - \eps$.
\end{theorem}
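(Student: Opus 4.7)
The plan is to invoke Theorem~\ref{thm:signal:bicriteria} with the smoothed stable surrogate $g = \gVotesmooth_\delta$ and the relaxed objective $h = \gVotethresh_\delta$, then pass from its guarantee to one against the true benchmark $\OVotethresh(U, \lambda)$ via the pointwise domination $\gVotesmooth_\delta \geq \gVotethresh$ asserted in the text. The three structural facts I will use are already recorded in the surrounding paragraphs: $\gVotesmooth_\delta$ is $(1/\delta)$-stable, $\gVotethresh_\delta$ is a $(\delta,0)$-relaxation of $\gVotesmooth_\delta$, and $\gVotesmooth_\delta \geq \gVotethresh$ everywhere.

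Concretely, I would instantiate Theorem~\ref{thm:signal:bicriteria} with matrix $A = U$, prior $\lambda$, stability parameter $\beta = 1/\delta$, relaxation parameters $(\delta,0)$, and $\alpha = \eps\delta$. Since $\gVotethresh_\delta$ can be evaluated in $O(n)$ time, the invocation produces, in the target runtime $\poly(m^{\delta^{-2}\ln(1/(\eps\delta))}) \cdot n$, a signaling scheme $\varphi = \{(\nu_\sigma,\mu_\sigma)\}_{\sigma \in \Sigma}$ with
\[
\sum_{\sigma \in \Sigma} \nu_\sigma\, \gVotethresh_\delta(U\mu_\sigma) \;\geq\; \sup_{\varphi^\ast}\sum_{\sigma}\nu^\ast_\sigma\, \gVotesmooth_\delta(U\mu^\ast_\sigma) \;-\; \alpha\beta \;-\; 0.
\]
The coupled choice $\alpha\beta = \eps\delta \cdot (1/\delta) = \eps$ makes the additive loss equal to $\eps$ exactly.

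To close the argument, I would observe that the surrogate benchmark dominates the true one: for any signaling scheme $\varphi^\ast$, pointwise domination gives $\sum_{\sigma} \nu^\ast_\sigma\, \gVotesmooth_\delta(U\mu^\ast_\sigma) \geq \sum_{\sigma} \nu^\ast_\sigma\, \gVotethresh(U\mu^\ast_\sigma) = \FVotethresh(\varphi^\ast, U, \lambda)$, and taking the supremum over $\varphi^\ast$ bounds the right-hand side of the display below by $\OVotethresh(U,\lambda)$. Combining yields $\FVotethresh_\delta(\varphi, U, \lambda) \geq \OVotethresh(U,\lambda) - \eps$, as required. The main obstacle is conceptual rather than technical: $\gVotethresh$ itself has unbounded sensitivity at its $1/2$-threshold, so no useful relaxation of it exists and Theorem~\ref{thm:signal:bicriteria} cannot be applied to it directly; smoothing through $\gVotesmooth_\delta$ trades a factor of $1/\delta$ in the stability parameter for a tractable surrogate, and the coupled choice $\alpha = \eps\delta$ is precisely what absorbs that factor back into a clean $\eps$ additive loss.
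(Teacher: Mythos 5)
Your proposal is correct and follows exactly the paper's intended argument: instantiate Theorem~\ref{thm:signal:bicriteria} with $g = \gVotesmooth_\delta$ ($(1/\delta)$-stable) and $h = \gVotethresh_\delta$ (a $(\delta,0)$-relaxation), choose $\alpha = \eps\delta$ to collapse $\alpha\beta + \rho$ to $\eps$, and use the pointwise domination $\gVotesmooth_\delta \geq \gVotethresh$ to pass from the surrogate benchmark to $\OVotethresh(U,\lambda)$. The runtime accounting ($T = O(n)$ to evaluate $\gVotethresh_\delta$) also matches the stated bound.
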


\subsection{Connection to Maximum Feasible Subsystem of Linear Inequalities}
Turning our attention away from signaling, we note that $\gVotesum(Ax)$ simply counts the number of satisfied inequalities in the system $Ax \succeq 0$. Mixture selection for $\gVotesum$ is therefore the problem of maximizing the number of satisfied inequalities over the simplex. Using our framework from Section \ref{sec:framework}, we obtain a bi-criteria PTAS for this problem. Moreover, using Monte-Carlo sampling, our bi-criteria PTAS extends to the model in which $A$ is given implicitly; specifically, the rows of $A$ correspond to the sample space of a distribution $\D$ over $[-1,1]^m$, and are weighted accordingly. In this implicit model, we can think of mixture selection for $\gVotesum$ as the problem of finding  $x \in \Delta_m$ which maximizes the probability that $a \cdot x \geq 0$ for $a \sim \D$.

Motivated by systems applications, Daskalakis et al.~\cite{Daskalakis:2014} consider a special case of this problem termed {\em Fault-Tolerant Distributed Storage}.
Their problem is equivalent to mixture selection for $\gVotesum$ in the implicit model, with the additional restriction that $\D$ is a product distribution over binary vectors with marginal probabilities given explicitly.
They present an additive EPTAS for this problem in a uni-criteria sense.
Our framework relaxes their restrictions on $\D$, at the cost of a bi-criteria guarantee and exponential dependence on the error parameters.


\newcommand{\eq}{\textit{EQ}}

\section{Signaling in Bayesian Normal Form Games}\label{sec:game}

We consider normal form games of incomplete information,
  in which  payoffs are parameterized by a state of nature $\theta$.
A principal has access to the  exact realization of $\theta$, whereas the players initially share a prior belief on $\theta$ and form a posterior belief based on the information revealed by the principal.
The goal of the principal is then to commit to revealing certain information about $\theta$ ---
  i.e., a signaling scheme --- to induce a favorable equilibrium
  over the resulting Bayesian subgames.

Signaling in normal form games has recently been examined
  from a complexity-theoretic perspective.
Dughmi \cite{D14} considered the special case of two-player zero-sum games,
  and examined the design of symmetric signaling schemes
  with the goal of maximizing the expected utility of one of the players.
It was shown that no FPTAS
  is possible for the signaling problem for zero sum games,
  assuming the planted clique conjecture.
In this section, we complement the impossibility result of \cite{D14}
  with a  bi-criteria  quasi-polynomial time approximation scheme (QPTAS)
  which applies to 
  normal form games with a constant number of players,
  slightly relaxing both the equilibrium definition 
  and the polynomial-time restriction.
It remains open if signaling for Bayesian zero-sum games admits a PTAS.

\subsection{Background and Notation}
We make heavy use of \emph{tensors} in describing multi-player games. Specifically, we focus on $n$-dimensional tensors of order $k$ with entries in $[-1,1]$, where $n$ is typically the number of strategies per player and $k$ is the number of players. We think of these tensors as functions $T: [n]^k \to [-1,1]$ mapping a strategy profile $s \in [n]^k$ to a  number in $[-1,1]$. Such a tensor $T$ also naturally describes a multilinear map; overloading notation, given $k$ vectors  $x_1,\ldots,x_k \in \RR^n$ we write $T(x_1\dots,x_k) = \sum_{s_1,\ldots,s_k \in [n]} \left( T(s_1,\ldots,s_k) \cdot \prod_{i=1}^k x_i(s_i) \right)$. This is most natural when $x_1,\ldots,x_k \in \Delta_n$ form a mixed strategy profile, in which case $T(x_1,\ldots,x_k)$ evaluates the expected value of $T$ over pure strategy profiles drawn from $(x_1,\ldots,x_k)$.

A Bayesian normal form game is defined by the following parameters:
\begin{itemize}
\item An integer $\numppl$ denoting the number of players,
  indexed by the set $[\numppl]=\set{1,\ldots,\numppl}$.
\item  An integer $\numact$ bounding the number of pure strategies 
  of each player. Without loss of generality, we assume each 
  player has exactly $\numact$ pure strategies, and index them by the set $[\numact]=\set{1,\ldots,\numact}$.
\item An integer $m$ denoting the number of \emph{states of nature}. We index states of nature by the set $\Theta=\set{1,\ldots,m}$, and use the variable $\theta$ to represent a state of nature.
\item A \emph{common prior distribution}
  $\lambda \in \Delta_{m}$ on states of nature.
\item A family of payoff tensors $\A_i^\theta: [\numact]^\numppl \to [-1,1]$, 
  one per player $i$ and state of nature $\theta$, 
  where $\A_i^\theta(s_1,\ldots,s_\numppl)$ is the payoff to 
  player $i$ when the state of nature is $\theta$ and 
  each player $j$ plays strategy $s_j$.
\end{itemize}

Note that a game of complete information is the special case with $m=1$ --- i.e.,
 the state of nature is fixed and known to all.
In a general Bayesian normal form game, absent any information about
  the state of nature beyond the prior $\lambda$,
  risk neutral players will behave as in the
  complete information game $\Ex_{\theta \sim \lambda} [\A^\theta]$. 
We consider signaling schemes which partially and symmetrically inform
  players by publicly announcing a signal $\sigma$, correlated with
  $\theta$; this induces a common posterior belief on the state of
  nature for each value of $\sigma$. 
When players' posterior belief over
  $\theta$ is given by $\mu \in \Delta_m$, we use $\A^\mu$ to denote the
  equivalent complete information game $\Ex_{\theta \sim \mu} [\A^\theta]$.  
As shorthand, we use $\A_i^\mu(x_1,\ldots,x_\numppl)$ to
  denote $\Ex [\A_i^\theta(s_1,\ldots,s_\numppl)]$ when $\theta \sim \mu \in
  \Delta_m$ and $s_i \sim x_i \in \Delta_\numact$.
In the event that the state
  of nature is $\theta$ and players play the pure strategy profile
  $s_1,\ldots,s_\numppl$, we refer to the tuple $(\theta,s_1,\ldots,s_\numppl)$ as
  the \emph{state of play}.
For our result, we assume that a Bayesian game $(\A,\lambda)$
  is represented explicitly as a vector $\lambda \in \Delta_m$
  and a list of tensors 
  $\{\A_i^\theta \in [-1,1]^{\numact^\numppl}: i \in [\numppl], \theta \in [m]\}$.

We adopt the approximate Nash equilibrium as our equilibrium concept.
There are two variants.
\begin{definition}
Let $\eps \geq 0$.
In a $\numppl$-player $\numact$-action normal form game with
  expected payoffs in $[-1,1]$ given by tensors $\A_1,\ldots,\A_\numppl$,
  a mixed strategy profile $x_1,\ldots,x_\numppl \in \Delta_\numact$
  is an $\eps$-Nash Equilibrium ($\eps$-\textit{NE})
  if \[\A_i(x_1,\ldots,x_\numppl) \geq \A_i(t_i, x_{-i}) - \eps \]
  for every player $i$ and alternative pure strategy $t_i \in [\numact]$.
\end{definition}

\begin{definition}
Let $\eps \geq 0$.
In a $\numppl$-player $\numact$-action normal form game with
  expected payoffs in $[-1,1]$ given by tensors $\A_1,\ldots,\A_\numppl$,
  a mixed strategy profile $x_1,\ldots,x_\numppl \in \Delta_\numact$
  is an $\eps$-well-supported Nash equilibrium ($\eps$-WSNE)
  if \[\A_i(s_i, x_{-i}) \geq \A_i(t_i, x_{-i}) - \eps\]
  for every player $i$, strategy $s_i$ in the support of $x_i$,
  and alternative pure strategy $t_i \in [\numact]$.
\end{definition}

Clearly, every $\eps$-WSNE is also an $\eps$-NE. When $\eps=0$, both correspond to the exact Nash Equilibrium.
Note that we omitted reference to the state of nature in the above definitions
  --- in a subgame corresponding to posterior beliefs $\mu \in \Delta_m$,
  we naturally use tensors $\A^\mu_1,\ldots\A^\mu_\numppl$ instead.

Fixing an equilibrium concept (NE, $\eps$-NE, or $\eps$-WSNE),
  a Bayesian game $(\A,\lambda)$,
  and a signaling scheme $\varphi: \Theta \to \Sigma$,
  an \emph{equilibrium selection rule} distinguishes
  an equilibrium strategy profile $(x_1^\sigma,\ldots,x^\sigma_\numppl)$
  to be played in each subgame $\sigma$ --- we call the tuple
  $X=\set{x^\sigma_i: \sigma \in \Sigma, i \in [\numppl]}$
  a \emph{Bayesian equilibrium} of the game $(\A,\lambda)$
  with signaling scheme $\varphi$.
Together with the prior $\lambda$, the Bayesian equilibrium $X$
  induces a distribution $\Gamma \in \Delta_{\Theta \cross [\numact]^\numppl}$
  over states of play --- we refer to $\Gamma$ as a \emph{distribution of play}.
This is analogous to implementation of allocation rules in traditional mechanism design.

Our results concern objectives which depend only on the state of play,
  and we seek to maximize the objective in expectation over the
  distribution of play.
These include, but are not restricted to, the social welfare of the players,
  as well as weighted combinations of player utilities.
Formally, our objective is described by a family of tensors $\FFF^\theta: [\numact]^\numppl \to [-1,1]$, one for each state of nature $\theta \in \Theta$. Equivalently, we may think of the objective as describing the payoffs of an additional player in the game --- namely the principal. For a distribution $\mu$ over states of nature, we use $\FFF^\mu = \Ex_{\theta \sim \mu} \FFF^\theta$ to denote the principal's expected utility in a subgame with posterior beliefs $\mu$, as a function of players' strategies.
 
For a signaling scheme $\varphi$ and associated (approximate) equilibria $X=\set{x^\sigma_i: \sigma \in \Sigma, i \in [\numppl]}$, our objective function can be written as $F(\varphi,X) = \Ex_{\theta \sim \lambda}
  \Ex_{\sigma \sim \varphi(\theta)} \Ex_{\vec{s} \sim x^\sigma} [ \F(\theta, \vec{s})]$. When $\varphi$ corresponds to a convex decomposition $\set{(\mu_\sigma, \nu_\sigma)}_{\sigma \in \Sigma}$ of the prior distribution, this can be equivalently written as $F(\varphi,X) = \sum_{\sigma \in \Sigma} \nu_\sigma \F^{\mu_\sigma} (x^\sigma)$.
%
 Let $OPT=OPT(\A,\lambda, \F)$ denote the maximizer of $F(\varphi^*,X^*)$ over signaling schemes $\varphi^*$ and (exact) Nash equilibria $X^*$. We seek  a signaling scheme $\varphi: \Theta \to \Sigma$,
  as well as a Bayesian $\eps$-NE (or $\eps$-WSNE) $X$ such that $F(\varphi,X) \geq OPT - \eps$. 

We will use the following Lemma, which follows easily from the results of Lipton et al.~\cite{lmm03}, to restrict attention to equilibria with small support.
\begin{lemma}\label{lem:lmm03}
  Let tensors $\A_1,\ldots,\A_k : [n]^k \to [-1,1]$ describe a $k$-player game of complete information with $n$ pure strategies per player, and let $\F:[n]^k \to [-1,1]$  be a tensor describing an objective function on mixed strategies. Define the function $r(\eps) =\frac{3 (k+1)^2 \ln ((k+1)^2 n)}{\eps^2}$. For each $\eps >0$, integer $s \geq r(\eps)$,  and mixed strategy profile $x=(x_1,\ldots,x_k)$, there is a profile $\tilde{x}=(\tilde{x}_1,\ldots,\tilde{x}_k)$ of $s$-uniform mixed strategies such that $|\A_i(x) - \A_i(\tilde{x})| \leq \eps$ for all players $i$, $|\F(x) - \F(\tilde{x})| \leq \eps$, and if $x$ is a Nash equilibrium of $\A$ then $\tilde{x}$ is an $\eps$-equilibrium of $\A$. This holds for both NE and WSNE.
\end{lemma}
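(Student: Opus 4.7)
The plan is to follow the sampling argument of Lipton, Markakis, and Mehta~\cite{lmm03}. Independently for each player $i \in [k]$, I will sample $s$ pure strategies $s_{i,1},\ldots,s_{i,s}$ i.i.d.\ from $x_i$ and set $\tilde{x}_i = \frac{1}{s}\sum_{j=1}^s e_{s_{i,j}}$, which is $s$-uniform by construction. The goal is then to show that, for $s \geq r(\eps)$, the random profile $\tilde{x}=(\tilde{x}_1,\ldots,\tilde{x}_k)$ simultaneously satisfies every condition in the lemma with positive probability; existence then follows from the probabilistic method.

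The key structural observation I would exploit is that every quantity we need to control --- $\A_i(\tilde{x})$ for $i \in [k]$, $\F(\tilde{x})$, and $\A_i(t_i, \tilde{x}_{-i})$ for each $i \in [k]$ and $t_i \in [n]$ --- is a multilinear function of the empirical distributions. Two consequences follow. First, because $\Ex[\tilde{x}_j] = x_j$ and the samples for distinct players are independent, each quantity is an unbiased estimator of its value at $x$. Second, each quantity is a function of at most $sk$ independent samples with bounded differences of $2/s$: changing one sample shifts a single coordinate of one empirical distribution by $1/s$, and the payoffs lie in $[-1,1]$. McDiarmid's inequality then bounds the probability of an $\eps/2$ deviation by $2\exp(-\Omega(\eps^2 s / k))$.

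Next I would union-bound over all the events of interest: $k$ events for the $\A_i(\tilde{x})$, one event for $\F(\tilde{x})$, and $kn$ events for the deviation quantities $\A_i(t_i, \tilde{x}_{-i})$, for a total of $O((k+1)^2 n)$ events. Plugging $s \geq r(\eps) = 3(k+1)^2 \ln((k+1)^2 n)/\eps^2$ into the tail bound drives the union-bound failure probability strictly below $1$, so some realization of $\tilde{x}$ satisfies every estimate simultaneously. For such a realization the first two conclusions are immediate. For the $\eps$-NE claim, given that $x$ is an exact Nash equilibrium and any pure deviation $t_i$, chaining three of the simultaneous estimates yields
\[
\A_i(\tilde{x}) \geq \A_i(x) - \eps/2 \geq \A_i(t_i, x_{-i}) - \eps/2 \geq \A_i(t_i, \tilde{x}_{-i}) - \eps.
\]
For the $\eps$-WSNE claim, the extra point is that any pure strategy $s_i$ in the support of $\tilde{x}_i$ was drawn from $x_i$, hence lies in its support and is therefore a best response to $x_{-i}$ in the original equilibrium; the same chaining argument then gives $\A_i(s_i, \tilde{x}_{-i}) \geq \A_i(t_i, \tilde{x}_{-i}) - \eps$.

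The main obstacle is ensuring all of these estimates hold \emph{jointly} for a single profile $\tilde{x}$, which is exactly why I draw all samples in one shot and then union-bound, rather than proving separate approximation statements for each quantity. A secondary subtlety is bookkeeping constants inside $r(\eps)$: one must apply the concentration bound with $\eps/2$ rather than $\eps$, so that chaining three $\eps/2$-approximations produces an $\eps$-approximation in the equilibrium conditions; this accounts for the $(k+1)^2$ factor in $r$ once the $O(kn)$ union bound and the $1/k$ inside the McDiarmid exponent are combined.
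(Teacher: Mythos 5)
Your proof is correct, and it is essentially the argument underlying the result cited in the paper. The paper's own proof is a one-line reduction: treat the objective tensor $\F$ as the payoff of a $(k+1)$-th player with a trivial (singleton) strategy set, then invoke \cite[Theorem 2]{lmm03}, which is exactly the sampling-and-concentration argument you spell out (sample $s$ pure strategies i.i.d.\ per player, use concentration of multilinear forms with $\eps/2$, union-bound over the $O((k+1)^2 n)$ relevant events, and chain the estimates against the exact equilibrium conditions; the support-containment observation handles WSNE). The only surface difference is that you use McDiarmid where LMM use a Hoeffding-style hybrid argument; both give the required $\exp(-\Omega(\eps^2 s/k))$ tail, and either is fine up to the bookkeeping absorbed into the constant $3$ in $r(\eps)$. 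One minor imprecision worth fixing: changing one sample moves \emph{two} coordinates of an empirical distribution by $1/s$ (one up, one down), not one; the bounded-difference constant of $2/s$ you state is nevertheless correct because payoffs lie in $[-1,1]$.
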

\begin{proof}
We can think of the tensor $\F^\mu : [n]^k \to [-1,1]$ as describing the utility of an additional player in the game with a trivial strategy set. The rest follows from \cite[Theorem 2]{lmm03}.
\end{proof}

\subsection{QPTAS for Signaling in Normal Form Games}
We prove the following bi-criteria result.

\begin{theorem}
\label{thm:gamemain}
Let $\eps >0$ denote an approximation parameter,  let $(\A,\lambda)$ be a Bayesian normal form game with $\numppl=O(1)$ players, $\numact$ actions, and $m$ states of nature, and let $\F:[m] \cross [\numact]^\numppl \to [-1,1]$ be an objective function given as a tensor. 
  There is an algorithm with runtime $\poly(m^{\frac{\ln(n/\eps)}{\eps^2}}, \numact^{\frac{\ln \numact}{\eps^2}})$
  which outputs a signaling scheme $\varphi$
  and corresponding Bayesian $\eps$-equilibria $X$ satisfying
  $F(\varphi,X) \geq OPT(\A,\lambda, \F) - \eps$.
  This holds for both approximate NE and approximate WSNE.
\end{theorem}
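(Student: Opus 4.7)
The plan is to reduce the signaling problem to mixture selection and apply Theorem~\ref{thm:signal:bicriteria}. First, I would build a matrix $A \in [-1,1]^{N \times m}$ with $N = O(k n^k) = \poly(n)$ rows, recording the coefficients needed to express each of the following as a linear function of $\mu \in \Delta_m$: the principal's objective $\F^\mu(\vec{s})$ and each player's payoff $\A^\mu_i(\vec{s})$ at every pure strategy profile $\vec{s} \in [n]^k$ and every player $i$. From $A\mu$, any mixed-strategy multilinear payoff $\A^\mu_i(t_i, \tilde{x}_{-i})$ and objective $\F^\mu(\tilde{x})$ can be read off by an appropriate product-of-coordinates combination.

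Fix $s = \Theta(\log(n)/\eps^2)$ as prescribed by Lemma~\ref{lem:lmm03} with error parameter $\eps/2$. Define
\[ g(A\mu) = \max\{\F^\mu(\tilde{x}) : \tilde{x} \text{ is an $s$-uniform $(\eps/2)$-NE of } \A^\mu\}, \]
and define $h(A\mu)$ identically but with $\eps$-NE in place of $(\eps/2)$-NE; for the well-supported variant, replace NE with WSNE throughout. Lemma~\ref{lem:lmm03}, applied to any pure-strategy equilibrium of $\A^\mu$, guarantees that the set in the definition of $g$ is non-empty. Evaluating $h$ (or $g$) amounts to iterating over all $n^{O(ks)} = n^{O(\log(n)/\eps^2)}$ $s$-uniform profiles and checking approximate-equilibrium inequalities linear in the entries of $A\mu$; this costs $T = n^{O(\log(n)/\eps^2)}$ time per evaluation.

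Two properties drive the argument. First, $h$ is an $(\eps/4, \eps/4)$-relaxation of $g$: whenever $\infNorm{A\mu_1 - A\mu_2} \le \eps/4$, every relevant payoff and objective at $\mu_2$ lies within $\eps/4$ of its value at $\mu_1$, so every incentive gap of an $s$-uniform $(\eps/2)$-NE at $\mu_1$ shifts by at most $2 \cdot \eps/4$ and the profile remains an $\eps$-NE at $\mu_2$, while its objective shifts by at most $\eps/4$. Second, $g$ is $2N$-stable since every $[-1,1]$-valued function on $[-1,1]^N$ is. Moreover, by Lemma~\ref{lem:lmm03}, for the optimal signaling scheme $\varphi^* = \{(\nu^*_\sigma,\mu^*_\sigma)\}_\sigma$ with exact Bayesian Nash $X^* = \{x^{*\sigma}\}_\sigma$, each $x^{*\sigma}$ admits an $s$-uniform $(\eps/2)$-NE $\tilde{x}^\sigma$ with $\F^{\mu^*_\sigma}(\tilde{x}^\sigma) \geq \F^{\mu^*_\sigma}(x^{*\sigma}) - \eps/2$, so $\sum_\sigma \nu^*_\sigma g(A\mu^*_\sigma) \geq OPT - \eps/2$.

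Apply Theorem~\ref{thm:signal:bicriteria} with $\beta = 2N$, $\alpha = \eps/(8N)$, and $\delta = \rho = \eps/4$: the theorem outputs a signaling scheme $\varphi = \{(\nu_\sigma,\mu_\sigma)\}$ with $\sum_\sigma \nu_\sigma h(A\mu_\sigma) \geq \sum_\sigma \nu^*_\sigma g(A\mu^*_\sigma) - \alpha\beta - \rho \geq OPT - \eps$. To extract the Bayesian equilibrium, for each $\sigma$ take $x^\sigma$ to be the $s$-uniform $\eps$-NE achieving the maximum in $h(A\mu_\sigma)$; by construction $F(\varphi,X) = \sum_\sigma \nu_\sigma \F^{\mu_\sigma}(x^\sigma) = \sum_\sigma \nu_\sigma h(A\mu_\sigma) \geq OPT - \eps$ and $X$ is a valid Bayesian $\eps$-equilibrium. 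The runtime is $\poly(m^{\delta^{-2}\ln(1/\alpha)}) \cdot n \cdot T = \poly(m^{\ln(n/\eps)/\eps^2}, n^{\ln(n)/\eps^2})$, since $\ln(1/\alpha) = O(\log(n/\eps))$ and $k = O(1)$. The main subtlety I expect is the careful bookkeeping of the relaxation property, i.e., calibrating the two slack parameters so that the equilibrium slack and objective slack both fit within $\eps$ while keeping $\beta = \poly(n)$; everything else is a direct assembly from the framework and Lemma~\ref{lem:lmm03}.
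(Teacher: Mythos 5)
Your proposal is correct and follows essentially the same approach as the paper: reduce the signaling problem to the framework of Theorem~\ref{thm:signal:bicriteria} using the $O(n^k)$-stability that any bounded function inherits from its input dimension, and use Lemma~\ref{lem:lmm03} to establish the $(\delta,\rho)$-relaxation property. The only cosmetic difference is bookkeeping: the paper takes its benchmark function $g$ to be the maximum objective over \emph{exact} Nash equilibria (so that $OPT'$ equals $OPT$ directly) and puts all of the Lipton-et-al.\ sparsification into the relaxation lemma, whereas you define $g$ over $s$-uniform $(\eps/2)$-equilibria and therefore need the extra step relating $\sum_\sigma \nu^*_\sigma g(A\mu^*_\sigma)$ to $OPT$; the two accountings trade an $\eps/4$ in $\rho$ for an $\eps/2$ loss upfront and arrive at the same guarantee. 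One small slip: you invoke Lemma~\ref{lem:lmm03} on ``any pure-strategy equilibrium of $\A^\mu$,'' but pure Nash equilibria need not exist --- you should apply the lemma to any (mixed) Nash equilibrium, which exists by Nash's theorem; this does not affect the argument.
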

In other words, when the number of players is a constant we can in quasi-polynomial time
  approximate the optimal reward from signaling while losing an additive
  $\eps$ in the objective as well as in the incentive constraints, as compared to the optimal signaling scheme / Nash equilibrium combination.

Fix $\eps > 0$. To prove this theorem, we define functions $g$ and $g_\eps$ which each take as input a $k$-player $n$-action game of complete information $\B$, given as payoff tensors $\B_1 \ldots, \B_k : [n]^k \to [-1,1]$, and an objective tensor $\G: [n]^k \to [-1,1]$, and output a number in $[-1,1]$. Specifically, $g(\B,\G) = \max\{\G(x) : x \in \eq(\B) \}$ and  $g_\eps( \B, \G) = \max\{\G(x) : x \in \eq_\eps(\B) \}$, where $\eq(\B)$ denotes the set of Nash equilibria of the game $\B$,  and $\eq_\eps(\B)$ denotes the (non-empty) set of $\ceil{r(\eps/4)}$-uniform $\eps$-Nash equilibria (or $\eps$-WSNE) for $r$ as given in Lemma \ref{lem:lmm03}. Recall that $\G(x)$ denotes evaluating the multilinear map described by tensor $\G$ at the mixed strategy profile $x \in \Delta_n^k$.

Now suppose we fix a Bayesian game $(\A,\lambda)$ and objective tensor $\F$ as in the statement of Theorem \ref{thm:gamemain}. For a subgame with a posterior distribution $\mu \in \Delta_m$ over states of nature, the principal's expected utility at the ``best'' Nash equilibrium of this subgame can be written as $g(\A^\mu, \F^\mu)$. Similarly, the principal's expected utility at the ``best'' $\ceil{r(\eps/4)}$-uniform $\eps$-NE (or $\eps$-WSNE) can be written as $g_\eps(\A^\mu, \F^\mu)$. Observe that the input to both $g$ and $g_\eps$ is a linear function of $\mu$, as need to apply the results in Section \ref{sec:signaling}. For a signaling scheme $\varphi$ corresponding to a decomposition   $\lambda = \sum_{\sigma \in \Sigma} \nu_\sigma \cdot \mu_\sigma$ of the prior distribution $\lambda$ into posterior distributions (see Section~\ref{prelim:schemes}), we can write the principal's expected utility assuming the ``best'' Nash equilibrium as $F(\varphi) = \sum_{\sigma \in \Sigma} \nu_\sigma g(F^{\mu_\sigma}, \A^{\mu_\sigma})$, and assuming the ``best'' $\ceil{r(\eps/4)}$-uniform $\eps$-equilibrium as $F_\eps(\varphi) = \sum_{\sigma \in \Sigma} \nu_\sigma g_\eps(F^{\mu_\sigma}, \A^{\mu_\sigma})$. We use $OPT$ to denote the maximum value of $F$ over all signaling schemes.

We prove Theorem \ref{thm:gamemain}  by exhibiting an algorithm for
  computing a signaling scheme $\varphi$ such that
  $F_{\eps}(\varphi)
   \geq OPT - \eps$.
The proof hinges on two main lemmas.

\begin{lemma}
\label{lem:gameStable}
The function $g$ is $2 (k+1) n^k$-stable.
\end{lemma}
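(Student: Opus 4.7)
The plan is to invoke the general fact, stated earlier in Section~\ref{sec:framework}, that every function $g : [-1,1]^N \to [-1,1]$ is automatically $2N$-stable. All we need to do is identify the correct input dimension for our $g$.

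First, I would observe that the input to $g$ is a tuple $(\B_1, \ldots, \B_k, \G)$ of $k+1$ tensors, each of which is a tensor of order $k$ on $[n]^k$ with entries in $[-1,1]$; thus each tensor contributes $n^k$ coordinates, and the total input dimension is $N = (k+1) n^k$. The output $g(\B, \G) = \max\{\G(x) : x \in \eq(\B)\}$ lies in $[-1,1]$ since $\G$ is multilinear and $\G(x)$ is a convex combination of entries of $\G$ when $x$ is a mixed profile.

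Second, I would recall (or quickly re-derive) the union-bound argument for the general $2N$-stability claim. Fix any input $t \in [-1,1]^N$, any $\alpha \in (0,1]$, and any \aLight distribution $\DDD$ over subsets of $[N]$. By the union bound,
\[
  \Pr_{R \sim \DDD}[R \neq \emptyset] \;\leq\; \sum_{i=1}^{N} \Pr[i \in R] \;\leq\; \alpha N.
\]
When $R = \emptyset$, the only $t'$ with $t' \corr{R} t$ is $t$ itself, so $\min\{g(t') : t' \corr{R} t\} = g(t)$. When $R \neq \emptyset$, we simply use $\min\{g(t') : t' \corr{R} t\} \geq -1 \geq g(t) - 2$ since $g$ takes values in $[-1,1]$. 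Taking expectations,
\[
  \Ex_{R \sim \DDD}\bigl[\min\{g(t') : t' \corr{R} t\}\bigr] \;\geq\; (1 - \alpha N)\, g(t) + \alpha N \cdot (g(t) - 2) \;=\; g(t) - 2 \alpha N.
\]

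Finally, plugging in $N = (k+1) n^k$ yields $g(t) - 2(k+1)n^k \alpha$, which is exactly the required stability guarantee. There is no real obstacle here: the content of the lemma is essentially bookkeeping about the size of the representation of the game together with the objective, and stability is obtained ``for free'' from the fact that $g$ is a bounded function. The interesting stability properties will come in the companion lemma about the $\eps$-relaxed variant $g_\eps$ (where the weak dependence on $n$ is what will make Theorem~\ref{thm:gamemain} nontrivial), but for $g$ itself the above suffices.
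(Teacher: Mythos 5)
Your proof is correct and matches the paper's exactly: the paper also simply observes that $g$ maps $[-1,1]^N \to [-1,1]$ with $N = (k+1)n^k$ and invokes the general ``every bounded function is $2N$-stable'' fact from Section~\ref{sec:framework}. The only difference is that you re-derive the union-bound argument behind that general fact rather than just citing it, which is a reasonable bit of added detail but not a different approach.
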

\begin{proof}
As noted in Section \ref{sec:framework},
  any function mapping  a hypercube $[-1,1]^N$ to the interval $[-1,1]$ is $2N$ stable. The function $g$ is such a function with $N=  (k+1) n^k$. 
\end{proof}

\begin{lemma}
\label{lem:gameBicriteria}
The function $g_\eps$ is an $(\eps/4,\eps/2)$-relaxation of $g$.
\end{lemma}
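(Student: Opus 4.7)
The plan is to unfold Definition~\ref{def:relaxation} directly: I need to show that for any two input tuples $(\B_1, \G_1)$ and $(\B_2, \G_2)$ with $\infNorm{(\B_1,\G_1) - (\B_2,\G_2)} \leq \eps/4$, one has $g_\eps(\B_2, \G_2) \geq g(\B_1, \G_1) - \eps/2$. I will produce an explicit strategy profile witnessing this inequality by starting from an exact optimal Nash equilibrium of $\B_1$ and sparsifying it using the result of Lipton--Markakis--Mehta that we already recorded as Lemma~\ref{lem:lmm03}, and then transferring the guarantees from $(\B_1, \G_1)$ to $(\B_2, \G_2)$ using the fact that closeness in $L^\infty$ of the tensors implies closeness of their multilinear evaluations on mixed strategy profiles.

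More concretely, first I would pick an exact Nash equilibrium $x^\ast$ of $\B_1$ that attains $\G_1(x^\ast) = g(\B_1, \G_1)$. Applying Lemma~\ref{lem:lmm03} with approximation parameter $\eps/4$ (treating $\G_1$ as the objective tensor) yields a $\lceil r(\eps/4)\rceil$-uniform profile $\tilde{x}$ that is an $\eps/4$-NE (or $\eps/4$-WSNE, respectively) of $\B_1$ and satisfies $|\G_1(\tilde{x}) - \G_1(x^\ast)| \leq \eps/4$. Next I use the key observation that if two payoff tensors differ by at most $\eps/4$ in every coordinate, then their expected values on any mixed strategy profile $y \in \Delta_n^k$ differ by at most $\eps/4$, since such an evaluation is a convex combination of the entries. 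Applying this to $\B_{1,i}$ vs $\B_{2,i}$ at both $\tilde{x}$ and every unilateral deviation $(t_i, \tilde{x}_{-i})$, the $\eps/4$-equilibrium condition of $\tilde{x}$ in $\B_1$ becomes a $3\eps/4$-equilibrium condition in $\B_2$, which is in particular an $\eps$-equilibrium (and the same chain works for WSNE, since sparsifying by Lemma~\ref{lem:lmm03} preserves the equilibrium notion). Applying the same observation to $\G_1$ vs $\G_2$ at $\tilde{x}$ gives $\G_2(\tilde{x}) \geq \G_1(\tilde{x}) - \eps/4 \geq \G_1(x^\ast) - \eps/2 = g(\B_1, \G_1) - \eps/2$. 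Since $\tilde{x}$ belongs to $\eq_\eps(\B_2)$, this lower bound is inherited by $g_\eps(\B_2, \G_2)$, completing the argument.

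I do not expect any serious obstacle here; the proof is essentially bookkeeping of constants. The only point that deserves care is the budgeting of the $\eps/4$ slack: three sources of $\eps/4$ error (the Lipton--Markakis--Mehta sparsification, the payoff tensor perturbation in the $\tilde{x}$ term, and the payoff tensor perturbation in the deviation term) combine to give $3\eps/4 \leq \eps$ on the incentive side, while two sources (sparsification of $\G_1$ and perturbation of the objective tensor) combine to give $\eps/2$ on the objective side, exactly matching the claimed relaxation parameters. It is also worth verifying explicitly that the sparsified profile $\tilde{x}$ constructed for $\B_1$ really lies in the admissible set $\eq_\eps(\B_2)$, which it does because $\eq_\eps$ is defined in terms of $\lceil r(\eps/4)\rceil$-uniform strategies and $\eps$-equilibrium conditions, and both properties are satisfied by $\tilde{x}$ with respect to $\B_2$.
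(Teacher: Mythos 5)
Your proposal is correct and follows essentially the same argument as the paper: take an exact optimal Nash equilibrium of the original game, sparsify it with the Lipton--Markakis--Mehta lemma at parameter $\eps/4$, and transfer the resulting $\eps/4$-equilibrium and near-optimality guarantees to the perturbed game $(\B_2,\G_2)$ via the elementary fact that tensors within $\eps/4$ in $L^\infty$ have multilinear evaluations within $\eps/4$ on any mixed profile. The error budgeting ($3\eps/4 \leq \eps$ on incentives, $\eps/2$ on the objective) matches the paper exactly, and your explicit check that $\tilde{x} \in \eq_\eps(\B_2)$ is the same final step.
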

\begin{proof}
Consider tensors $\G, \tilde{\G} : [n]^k \to [-1,1]$ with $|\G(s) - \tilde{\G}(s)| \leq \eps/4$ for all $s \in [n]^k$, and two $k$-player $n$-action games $\B=(\B_1,\ldots,\B_k)$ and $\tilde{\B}=(\tilde{\B}_1,\ldots,\tilde{\B}_k)$ with $|\B_i(s) - \tilde{\B}_i(s) | \leq \eps/4$ for all $s \in [n]^k$. It suffices to show that $g_\eps(\tilde{\B},\tilde{\G}) \geq g(\B,\G) - \eps/2$. Let $x \in \Delta_n^k$ be the Bayesian equilibrium of $\B$ for which $\G(x) = g(\B,\G)$.  
By Lemma \ref{lem:lmm03}, there is a profile $\tilde{x}$ of $\ceil{r(\eps/4)}$-uniform mixed strategies such that $\tilde{x}$ is an $\eps/4$-equilibrium of $\B$, and $\G(\tilde{x})  \geq \G(x) - \eps/4$. Since $\tilde{\B}$ differs from $\B$ by at most $\eps/4$ everywhere, it follows that $\tilde{x}$ is an $\eps$-equilibrium of $\tilde{\B}$, i.e. $\tilde{x} \in \eq_\eps(\tilde{\B})$. Similarly, since $\tilde{\G}$ differs from $\G$ by at most $\eps/4$ everywhere, it follows that $\tilde{\G}(\tilde{x}) \geq \G(\tilde{x}) - \eps/4 \geq \G(x) - \eps/2$. We conclude that $g_\eps(\tilde{B},\tilde{\G}) \geq \tilde{\G}(\tilde{x}) \geq g(\B,\G) - \eps/2$.
\end{proof}

We now complete the proof of Theorem \ref{thm:gamemain} by instantiating Theorem \ref{thm:signal:bicriteria} with $g$, $h=g_\eps$, and $\alpha = \frac{\eps}{4 (k+1)n^k}$. The runtime is $\poly(m^{\frac{\ln(1/\alpha)}{\eps^2}}, (k+1)n^k, T)$, where $T$ is the time needed to evaluate $g_\eps$ (and compute the corresponding $\ceil{r(\eps/4)}$-uniform $\eps$-equilibrium) on a given input. Recall that $k=O(1)$  and $\alpha = \frac{\eps}{\poly(n)}$. Moreover, using brute-force enumeration of all $\ceil{r(\eps/4)}$-uniform mixed strategy profiles we conclude that  $T$ is bounded by a polynomial in $n^{\frac{\ln n}{\eps^2}}$. Therefore our total runtime is $\poly(m^{\frac{\ln(n/\eps)}{\eps^2}},n^{\frac{\ln n}{\eps^2}})$, as needed.

\paragraph{Remarks} In the special case of two-player zero-sum games and a principal interested in maximizing one player's utility, as studied in \cite{D14}, our techniques lead to a more efficient approximation scheme and a uni-criteria guarantee. This is because the principal's payoff tensor $\G$ equals the payoff tensor $\B$ of one of the players (say, player 1), and consequently the function $g(\B,\G) = g(\B, \B) = \max_x \min_y x^\T \B y$ is $n^2$-stable and \Lip{2}. Its Lipschitz continuity follows from the fact that an $\eps$-equilibrium of a zero-sum game leads to utilities within $\eps$ of the equilibrium utilities. Moreover, evaluating $g$ now takes time $T=\poly(m,n)$. Theorem~\ref{thm:signal} instantiated with $\alpha = \frac{\eps}{4n^2}$ and $\delta = \eps/4$, leads to an algorithm with runtime $\poly(m^{\frac{\ln(n/\eps)}{\eps^2}},n)$, which outputs a signaling scheme $\varphi$ and corresponding Bayesian (exact) Nash-equilibria $X$ satisfying $F(\varphi,X) \geq OPT(\A,\lambda, \F) - \eps$.




\newcommand{\gThresh}{\ensuremath{g^{\text{(threshold)}}}\xspace}

\def\OPTIS{\ensuremath{\text{OPT}_\text{IS}}\xspace}

\section{Hardness Results}\label{sec:hardness}
In this section, we present hardness results which justify our assumptions, and exhibit the limitations of our techniques. Specifically, we show in Sections \ref{sec:hardness:nolip} and \ref{sec:hardness:nostable} that neither stability nor Lipschitz continuity alone suffices for an additive PTAS. In Section \ref{sec:hardness:lottery}, we show that even in the presence of Lipschitz continuity and noise stability, obtaining an additive FPTAS would imply P = NP.

\subsection{NP-Hardness in the Absence of Lipschitz Continuity}
\label{sec:hardness:nolip}

We now show that stability alone does not suffice for an additive PTAS for mixture selection, in general. First, we show that mixture selection for the $1$-stable function $\gVotesum$, presented in Section \ref{sec:voting}, does not admit a (uni-criteria) additive PTAS unless P = NP. Being that $\gVotesum$ is not continuous in any metric, we drive the point home by exhibiting a ``smoothed'' function $\gSlope$ which is \gStable{1} and \Lip{O(1)} with respect to  $L^1$, but not \Lip{O(1)} with respect to $L^\infty$, and show that  \prob for $\gSlope$
still does not admit an additive PTAS unless P = NP. 

Both NP-hardness results share a similar reduction
from the \emph{maximum independent set} problem.
We use a consequence of the result by \cite{IS:hardness}, namely that there exists a constant $\epsilon$ such that it is NP-hard to approximate maximum independent set to within an additive error of $\epsilon n$, where $n$ denotes the number of vertices. 

Given an $n$-node undirected graph $G$,
let $\OPTIS=\OPTIS(G)$ be the size of its largest independent set.
We define the $n \times n$ matrix $A=A(G)$ as follows:
\begin{itemize}
\item Diagonal entries of $A$ are all $\frac{1}{2}$
($A_{i, i} = \frac{1}{2}$ for all $1 \leq i \leq n$).
\item When vertices $i$ and $j$ share an edge in $G$,
both $A_{i, j}$ and $A_{j, i}$ are $-1$.
\item All other entries of $A$,
namely $A_{i, j}$ for non-adjacent distinct vertices $i$ and $j$,
are $-\frac{1}{4 n}$.
\end{itemize}
We relate $\OPTIS$ to convex combinations of the columns of $A$ as follows.

\begin{observation}\label{obs:fromIS}
Let \III be an independent set of $G$ with $\Size{\III} = k$.
There exists $x \in \Delta_n$ such that $k$ entries of $A x$ are at least  $\frac{1}{4n}$, and all remaining entries are strictly negative.
\end{observation}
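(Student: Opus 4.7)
The plan is to construct $x$ explicitly as the uniform distribution over the independent set: set $x_j = \frac{1}{k}$ for $j \in \III$, and $x_j = 0$ otherwise. This clearly lies in $\Delta_n$. I then verify the two claimed properties by computing $(Ax)_i = \frac{1}{k}\sum_{j \in \III} A_{ij}$ in the two cases $i \in \III$ and $i \notin \III$.

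For $i \in \III$, the sum picks up exactly one diagonal contribution $A_{ii} = \frac{1}{2}$, together with $k-1$ off-diagonal contributions of the form $A_{ij}$ for $j \in \III \setminus \{i\}$. Since $\III$ is independent, no such $j$ is a neighbor of $i$, so every one of those contributions equals $-\frac{1}{4n}$. Thus
\[
(Ax)_i \;=\; \frac{1}{k}\left(\frac{1}{2} - \frac{k-1}{4n}\right).
\]
Because $k \leq n$, we have $\frac{k-1}{4n} \leq \frac{1}{4}$, so the parenthesized quantity is at least $\frac{1}{4}$, yielding $(Ax)_i \geq \frac{1}{4k} \geq \frac{1}{4n}$. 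This gives the required $k$ entries (one for each element of $\III$) with value at least $\frac{1}{4n}$.

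For $i \notin \III$, the sum $\sum_{j \in \III} A_{ij}$ contains no diagonal term (as $i \neq j$ for all $j$ in the sum). Every contribution is therefore either $-1$ (when $ij$ is an edge of $G$) or $-\frac{1}{4n}$ (otherwise). Every term is strictly negative, so $(Ax)_i < 0$, as required.

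I do not expect any real obstacle here: the construction is the natural one, and the verification is just a careful bookkeeping argument. The only point to watch is the choice of the off-diagonal value $-\frac{1}{4n}$ in the definition of $A$, which is calibrated precisely so that (i) within $\III$ the diagonal bonus $\frac{1}{2}$ dominates the accumulated penalties $(k-1)\cdot\frac{1}{4n}$, keeping $(Ax)_i$ above $\frac{1}{4n}$, and (ii) the $-1$ entries encoding edges are not needed to push entries outside $\III$ below $0$, because even without any edge the sum is $-\frac{1}{4n} < 0$. This slack is what makes the ensuing hardness reduction robust.
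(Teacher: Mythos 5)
Your proof is correct and follows exactly the same construction as the paper: take $x$ to be the uniform distribution on $\III$ and compute $(Ax)_i$ directly in the two cases. The only cosmetic difference is that for $i\notin\III$ the paper states the sharper bound $(Ax)_i\le -\tfrac{1}{4n}$, whereas you observe merely that every summand is negative, which already suffices for the claim as stated.
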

\begin{proof}
Let $x \in \Delta_n$ be the normalized indicator vector of $\I$ --- i.e., $x_i = \frac{1}{k}$ if $i \in \III$
and $x_i = 0$ otherwise. By construction $(A x)_i= \frac{1}{k} (\frac{1}{2} - (k-1) \frac{1}{4n}) \geq \frac{1}{4n}$ whenever $i \in \I$, and $(A x)_i \leq -\frac{1}{4n}$ otherwise.
\end{proof}

\begin{observation}\label{obs:toIS}
For any $x \in \Delta_n$, nonnegative entries of $A x$ correspond to an independent set of $G$.
Consequently, $A x$ can have at most \OPTIS nonnegative entries.
\end{observation}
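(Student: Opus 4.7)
The plan is to prove the contrapositive: if $i$ and $j$ are both nonnegative entries of $Ax$, then $(i,j)$ cannot be an edge of $G$. The strategy is to bound $(Ax)_i + (Ax)_j$ from above by directly computing contributions from each coordinate $k$, using the assumed edge between $i$ and $j$, and show the sum is strictly negative, yielding a contradiction.

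First, I would expand
\[
(Ax)_i + (Ax)_j = \sum_{k=1}^n (A_{i,k} + A_{j,k}) x_k
\]
and split the sum into three parts: $k=i$, $k=j$, and $k\notin\{i,j\}$. For $k=i$, using $A_{i,i}=\tfrac{1}{2}$ and (assuming $(i,j)\in E$) $A_{j,i}=-1$, we get $A_{i,i}+A_{j,i}=-\tfrac{1}{2}$; symmetrically for $k=j$. For any $k\notin\{i,j\}$, each of $A_{i,k}$ and $A_{j,k}$ is either $-1$ or $-\tfrac{1}{4n}$, hence $A_{i,k}+A_{j,k}\le -\tfrac{1}{2n}$.

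Setting $a=x_i+x_j$ and $b=\sum_{k\notin\{i,j\}} x_k$, this yields
\[
(Ax)_i + (Ax)_j \;\le\; -\tfrac{1}{2} a \;-\; \tfrac{1}{2n} b.
\]
Since $x\in\Delta_n$, we have $a+b=1$ with $a,b\ge 0$, so at least one of $a,b$ is strictly positive. Therefore $-\tfrac{1}{2}a - \tfrac{1}{2n}b < 0$, contradicting $(Ax)_i\ge 0$ and $(Ax)_j\ge 0$. Hence no two adjacent vertices $i,j$ can both index nonnegative entries of $Ax$, so the set of indices with nonnegative entries forms an independent set in $G$, which has size at most $\OPTIS$.

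There is no real obstacle here; the proof is just a direct calculation exploiting the $-1$ penalty on edge entries dominating the $\tfrac{1}{2}$ diagonal. The only thing to be careful about is handling both the case where the ``mass'' of $x$ concentrates on $\{i,j\}$ (where the $-\tfrac{1}{2}a$ term dominates) and the case where it is spread out (where the $-\tfrac{1}{2n}b$ term provides the needed strict negativity). The calibration of the off-diagonal non-edge entries to $-\tfrac{1}{4n}$ is precisely what makes this go through cleanly.
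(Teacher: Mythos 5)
Your proof is correct, and it takes a slightly different route than the paper. The paper fixes an edge $\{i,j\}$, assumes WLOG $x_i \ge x_j$, and argues by cases: if $x_i = 0$ then $t_i \le -\tfrac{1}{4n} < 0$ (since all off-diagonal entries of row $i$ are negative and $\sum_k x_k = 1$), and otherwise $t_j \le \tfrac{x_j}{2} - x_i < 0$. In contrast, you bound the sum $(Ax)_i + (Ax)_j$ directly, showing it is strictly negative without any case split on the sizes of $x_i$ and $x_j$; the only bifurcation you need is whether the probability mass sits on $\{i,j\}$ or elsewhere, and that is handled uniformly by writing $a+b=1$. Your argument is arguably a bit cleaner (no WLOG, no branching), at the cost of only showing that \emph{at least one} of the two entries is negative rather than pinpointing which one — but for the observation's conclusion that both cannot simultaneously be nonnegative, that is exactly what is needed. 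Both proofs exploit the same structural fact: the $-1$ penalty on the edge entry dominates the $\tfrac{1}{2}$ diagonal, and the $-\tfrac{1}{4n}$ off-diagonal entries supply strict negativity when the mass is spread out.
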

\begin{proof}
Let $t = A x$.
Consider an edge $\Set{i,j}$ of graph $G$, and without loss of generality assume that $x_i \ge x_j$. If $x_i=0$, then $t_i \leq - \frac{1}{4n} < 0$ by construction. Othewise, $t_j \leq  \frac{x_j}{2} - x_i < 0$. 
Therefore,  $t_i$ and $t_j$ cannot be both nonnegative.
We conclude that the nonnegative coordinates of $t$ correspond to an independent set of $G$.
\end{proof}

Observations~\ref{obs:fromIS} and \ref{obs:toIS} imply that $\max_{x \in \Delta_n} \gVotesum(A x) = \frac{\OPTIS}{n}$. Combined with the fact that obtaining an additive PTAS for the maximum independent set problem is NP-hard, we get the following theorem.
\begin{theorem}\label{thm:voting:hard}
Mixture selection for the $1$-stable function $\gVotesum$ admits no additive PTAS unless P = NP. 
\end{theorem}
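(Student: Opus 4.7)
The plan is to use the reduction already set up via the construction of $A(G)$ together with Observations~\ref{obs:fromIS} and \ref{obs:toIS}, and then appeal to the known additive hardness of maximum independent set. Concretely, I would proceed as follows.

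First, I would combine the two observations to pin down the value of the mixture selection problem on $\gVotesum$ in terms of $\OPTIS(G)$. On the one hand, Observation~\ref{obs:fromIS} shows that if $G$ has an independent set of size $k$, one can exhibit $x \in \Delta_n$ with exactly $k$ nonnegative entries in $Ax$, so $\gVotesum(Ax) \geq k/n$. On the other hand, Observation~\ref{obs:toIS} shows that for every $x \in \Delta_n$, the nonnegative entries of $Ax$ form an independent set in $G$, so $\gVotesum(Ax) \leq \OPTIS(G)/n$. Putting these together,
\[
\max_{x \in \Delta_n} \gVotesum(Ax) \;=\; \frac{\OPTIS(G)}{n}.
\]

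Second, I would invoke the hardness result cited from \cite{IS:hardness}: there exists a constant $\epsilon_0 > 0$ such that approximating $\OPTIS$ to within additive error $\epsilon_0 n$ on $n$-vertex graphs is NP-hard. If mixture selection for $\gVotesum$ admitted an additive PTAS, then for any target additive error $\epsilon < \epsilon_0$ we could run it on the matrix $A(G)$ (which is constructed in polynomial time and has entries in $[-1,1]$) to obtain a value within $\epsilon$ of $\OPTIS(G)/n$, i.e., within $\epsilon n < \epsilon_0 n$ of $\OPTIS(G)$ itself. This would contradict the assumed hardness, yielding $\mathrm{P} = \mathrm{NP}$.

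The only subtlety worth checking carefully is that the PTAS must produce a value within the required additive error, not merely a feasible mixture; since a PTAS by definition returns an $x$ with $\gVotesum(Ax) \geq \mathrm{OPT} - \epsilon$, this is immediate. One could also strengthen the statement by noting that a candidate $x$ can be rounded into an explicit independent set by reading off the nonnegative coordinates of $Ax$ and applying Observation~\ref{obs:toIS}. I do not anticipate a real obstacle here: the reduction is entry-wise in $[-1,1]$, polynomial-time computable, and the value correspondence is exact (not merely approximate), so the hardness transfers cleanly.
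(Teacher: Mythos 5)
Your proposal matches the paper's argument exactly: combine Observations~\ref{obs:fromIS} and \ref{obs:toIS} to get $\max_{x \in \Delta_n} \gVotesum(Ax) = \OPTIS(G)/n$, then invoke the NP-hardness of approximating maximum independent set within additive $\epsilon n$. The bookkeeping about the PTAS returning a value within additive $\epsilon$ and the optional rounding back to an explicit independent set are both consistent with, and slightly more explicit than, what the paper states.
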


Noting that $\gVotesum$ is a discontinuous function, for emphasis we exhibit a function $\gSlope$ which is Lipschitz continuous in $L^1$ (but not in $L^\infty$) and $1$-noise stable, but for which the same impossibility result holds by an identical reduction. Informally, $\gSlope$ ``smoothes'' the threshold behavior of $\gVotesum$ as follows: each input $t_i$ contributes $0$ to $\gSlope(t)$ when $t_i \leq 0$, contributes $\frac{1}{n}$ when $t_i \geq \frac{1}{4n}$, and the contribution is a linear function of $t_i$ increasing from $0$ to $\frac{1}{n}$ for $t_i \in [0,\frac{1}{4n}]$. Formally, we define $\gSlope(t) = \sum_{i=1}^n \min\set{4 \max\set{0,t_i}, \frac{1}{n}}$. Since each entry of $t$ contributes at most $\frac{1}{n}$ to $\gSlope(t)$, it is easy to verify that $\gSlope$ is $1$-stable. Moreover, since the partial derivatives of $\gSlope(t)$ are upper-bounded by $4$, it is $4$-Lipschitz continuous with respect to the  $L^1$ metric.  Observations~\ref{obs:fromIS} and \ref{obs:toIS} imply that $\max_{x \in \Delta_n} \gSlope(A x) = \frac{\OPTIS}{n}$, ruling out an additive PTAS for mixture selection for $\gSlope$. 

\begin{theorem}\label{thm:gSlope:hard}
The function $\gSlope$ is $1$-stable and $O(1)$-Lipschitz with respect to $L^1$, and yet mixture selection for $\gSlope$ admits no additive PTAS unless P = NP. 
\end{theorem}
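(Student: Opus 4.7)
The plan is to verify the three claims in the theorem in sequence, exploiting the structure already built up in the preceding text (particularly Observations \ref{obs:fromIS} and \ref{obs:toIS}), which make the hardness reduction essentially automatic once the stability and Lipschitz bounds are established.

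First, I would establish $1$-stability. Write $\gSlope(t) = \sum_{i=1}^n \phi(t_i)$ where $\phi(y) = \min\{4\max\{0,y\}, \tfrac{1}{n}\}$ takes values in $[0, \tfrac{1}{n}]$. For any input $t$ and any corrupted version $t' \corr{R} t$, only the coordinates $i \in R$ can have changed, and each such coordinate's contribution $\phi(t'_i)$ is nonnegative and bounded by $1/n$. Hence $\gSlope(t') \geq \gSlope(t) - |R|/n$. Taking expectation over an \aLight distribution $\DDD$ and using $\Exp_{R \sim \DDD}[|R|] = \sum_i \Pr[i \in R] \leq \alpha n$ yields $\Exp_R[\min_{t' \corr{R} t} \gSlope(t')] \geq \gSlope(t) - \alpha$, which is exactly $1$-stability.

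Next, I would verify $L^1$-Lipschitz continuity with constant $4$. The univariate function $\phi$ is piecewise linear with slopes in $\{0, 4, 0\}$, so $|\phi(y) - \phi(y')| \leq 4 |y - y'|$ for all $y, y'$. Summing over coordinates,
\[
|\gSlope(t) - \gSlope(t')| \leq \sum_{i=1}^n |\phi(t_i) - \phi(t'_i)| \leq 4 \sum_{i=1}^n |t_i - t'_i| = 4 \|t - t'\|_1.
\]

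Finally, the hardness. Given an $n$-vertex graph $G$, form the matrix $A = A(G)$ exactly as in the $\gVotesum$ reduction. I claim that $\max_{x \in \Delta_n} \gSlope(Ax) = \OPTIS(G)/n$. For the lower bound, Observation \ref{obs:fromIS} applied to a maximum independent set $\III$ yields $x \in \Delta_n$ with $|\III| = \OPTIS$ entries of $Ax$ at least $1/(4n)$; by definition each such entry contributes exactly $1/n$ to $\gSlope(Ax)$, so $\gSlope(Ax) \geq \OPTIS/n$. For the upper bound, for any $x \in \Delta_n$, each coordinate of $Ax$ contributes at most $1/n$ to $\gSlope(Ax)$, and coordinates where $(Ax)_i \leq 0$ contribute $0$; by Observation \ref{obs:toIS} at most $\OPTIS$ coordinates are nonnegative, so $\gSlope(Ax) \leq \OPTIS/n$. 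An additive $\eps$-approximation algorithm for mixture selection with $\gSlope$ running in polynomial time would therefore produce an $\eps n$-additive approximation for \OPTIS, contradicting the $\eps n$-additive inapproximability of maximum independent set from \cite{IS:hardness} for some constant $\eps > 0$.

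The main potential obstacle is purely bookkeeping: making sure the scaling between ``contribution at least $1/n$ per independent set vertex'' and ``contribution at most $1/n$ per nonnegative coordinate'' is tight enough that the value of the optimum is \emph{exactly} $\OPTIS/n$ rather than merely proportional, since otherwise an additive PTAS for mixture selection would only give a multiplicative approximation for independent set. This is why $A$ was constructed so that every coordinate of $Ax$ corresponding to $i \in \I$ lands in the saturated regime $[1/(4n), \infty)$ of $\phi$, where $\phi = 1/n$, rather than in the linear ramp; with that calibration, the two directions match and the reduction goes through with no loss.
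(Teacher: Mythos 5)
Your proposal is correct and follows essentially the same route as the paper: establish $1$-stability from the per-coordinate bound of $1/n$, establish $4$-Lipschitz in $L^1$ from the bounded slopes of the one-dimensional ramp $\phi$, and then reuse Observations~\ref{obs:fromIS} and \ref{obs:toIS} from the $\gVotesum$ reduction to show $\max_{x \in \Delta_n} \gSlope(Ax) = \OPTIS(G)/n$ exactly, which transfers the additive-$\eps n$ inapproximability of maximum independent set. The paper states these three steps more tersely, but the calibration point you highlight at the end --- that Observation~\ref{obs:fromIS} places the independent-set coordinates at $1/(4n)$ or above, i.e.\ in the saturated regime of $\phi$, so the lower bound of $\OPTIS/n$ is attained exactly and matches the upper bound --- is precisely what makes the paper's one-line ``Observations imply'' work, and spelling it out is a useful addition.
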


\subsection{Planted-Clique Hardness in the Absence of Stability}
\label{sec:hardness:nostable}

We now present evidence that Lipschitz continuity alone does not suffice for a PTAS for mixture selection. Recalling that a quasipolynomial time algorithm follows from our framework whenever a function is $O(1)$-Lipschitz, we reduce from the \emph{planted clique problem}---for which a quasipolynomial time algorithm exists, and yet a polynomial-time algorithm is conjectured not to exist---rather than from an NP-hard problem. 

In the planted clique problem, one must distinguish the $n$-node Erd\"os-R\'enyi random graph $\G(n,\frac{1}{2})$, in which each edge is included independently with probability $\frac{1}{2}$, from the graph $\G(n,\frac{1}{2},k)$ formed by ``planting'' a clique in $\G(n,\frac{1}{2})$ at a randomly (or, equivalently, adversarially) chosen set of $k$ nodes. This problem was first considered by by Jerrum~\cite{jerrum92} and Ku\u{c}era~\cite{kucera95}, and has been the subject of a large body of work since. A quasi-polynomial time algorithm exists when $k \geq 2 \log n$, and the best polynomial-time algorithms only succeed for $k = \Omega(\sqrt{n})$  (see e.g., \cite{alonclique} \cite{dekel} \cite{feigeron} \cite{coja}). Several papers suggest that the problem is hard for $k=o(\sqrt{n})$ by ruling out natural classes of algorithmic approaches (e.g. \cite{jerrum92, feigeprobable, feldmanclique}). The planted clique problem has therefore found use as a hardness assumption in a variety of applications (e.g. \cite{alontesting}, \cite{juels}, \cite{hazankrauthgamer}, \cite{minder}, \cite{D14}).
%
%
%
We use the following well-believed conjecture as our hardness assumption.
\begin{assumption}
\label{assumption:clique}
For some function $k=k(n)$ satisfying $k=\omega(\log^2 n)$ and $k=o(\sqrt{n})$,
  there is no probabilistic polynomial-time algorithm that can 
  distinguish between a random graph drawn from $\GGG(n, \Half)$
  and a random graph drawn from $\GGG(n, \Half, k)$ with success probability $1 - o(1)$.
\end{assumption}

We let $k=k(n)$ be as in Assumption \ref{assumption:clique}, and consider mixture selection for the function $\gclique_k: [0,1]^n \to [0,1]$ with $\gclique_k(\ginp) = \ginp_{[k]} - \ginp_{[k+1]} + \ginp_{[n]}$, where $t_{[i]}$ denotes the $i$'th largest entry of the vector $t$. It is easy to verify that  $\gclique_k$ is 3-Lipschitz with respect to the $L^\infty$ metric, yet is not $O(1)$-stable. We prove the following theorem.
\begin{theorem}\label{thm:hardness:nostable}
   Assumption \ref{assumption:clique} implies that there is no additive PTAS for mixture selection for $\gclique_k$.
\end{theorem}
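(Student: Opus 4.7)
The plan is to reduce the planted clique problem to mixture selection for $\gclique_k$. Given a graph $G$ on $n$ vertices with adjacency matrix $A \in \{0,1\}^{n \times n} \subset [0,1]^{n \times n}$, I run the hypothesized additive PTAS on $A$ with a fixed constant error $\eps < 1/4$ and declare ``planted'' iff its output exceeds $3/4$. The correctness of this polynomial-time distinguisher reduces to two claims, each of which must hold with probability $1-o(1)$ over $A$: (a) $OPT(A) := \max_{x \in \Delta_n} \gclique_k(Ax) \geq 1 - o(1)$ when $A \sim \GGG(n,1/2,k)$, and (b) $OPT(A) \leq 1/2 + o(1)$ when $A \sim \GGG(n,1/2)$. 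Together these would contradict Assumption~\ref{assumption:clique}.

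Claim (a) is established via the explicit witness $x = \frac{1}{k}\mathbf{1}_C$ where $C$ is the planted $k$-clique. For $i \in C$, $(Ax)_i = (k-1)/k$; for $i \notin C$, $(Ax)_i = |N(i)\cap C|/k$ is distributed as $\frac{1}{k}\mathrm{Bin}(k,1/2)$ over the non-planted edges, so by Hoeffding and a union bound over $i$, $|(Ax)_i - 1/2| \leq O(\sqrt{\log n/k}) = o(1)$ using $k = \omega(\log^2 n)$. Hence $t_{[k]} = 1 - 1/k$, $t_{[k+1]} = 1/2 + o(1)$, and $t_{[n]} = 1/2 - o(1)$, which substitute into $\gclique_k(Ax) = 1 - o(1)$.

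For claim (b), I first apply the $3$-Lipschitz continuity of $\gclique_k$ in $L^\infty$ to reduce to $s$-uniform vectors with $s = O(\delta^{-2}\log n)$: by Hoeffding on each of the $n$ coordinates of $A\tilde{x} - Ax$ and a union bound, every $x \in \Delta_n$ admits an $s$-uniform $\tilde{x}$ with $\|A\tilde{x} - Ax\|_\infty \leq \delta$, so $\gclique_k(Ax) \leq \gclique_k(A\tilde{x}) + 3\delta$. It then suffices to bound $\gclique_k(A\tilde{x}) \leq 1/2 + O(\delta)$ uniformly over the $\leq n^s$ such $\tilde{x}$. For fixed $\tilde{x}$, the entries $(A\tilde{x})_i$ for $i \notin \mathrm{supp}(\tilde{x})$ are i.i.d.\ over $A$, each a weighted sum of independent Bernoullis with mean $1/2$. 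When $\tilde{x}$ is sufficiently \emph{spread} (roughly $\|\tilde{x}\|_2 \ll \delta/\sqrt{\log n}$), Hoeffding gives $\Pr[(A\tilde{x})_i > 1/2 + \delta] \leq n^{-\Omega(1)}$; since $|\mathrm{supp}(\tilde{x})| \leq s < k$, the $k$-th largest coordinate must lie outside the support, and a Chernoff tail on the count of bad non-support coordinates yields $t_{[k]}(\tilde{x}) \leq 1/2 + \delta$ except with probability $\exp(-\Omega(k))$. Since $\gclique_k \leq t_{[k]}$ (because $t_{[n]} \leq t_{[k+1]}$), the union bound over $\leq n^s$ spread $\tilde{x}$ yields total failure probability $\exp(O(\log^2 n/\delta^2) - \Omega(k)) = o(1)$, exploiting the slack in $k = \omega(\log^2 n)$.

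The main obstacle is the \emph{concentrated} regime (at the extreme, $\tilde{x} = e_j$), where per-coordinate Hoeffding is too weak and the naive bound $t_{[k]} \leq 1$ is useless. Here I exploit the fact that $(A\tilde{x})_i$ then takes only $O(2^{|\mathrm{supp}(\tilde{x})|})$ distinct values, so the gap $t_{[k]} - t_{[k+1]}$ vanishes unless the cumulative count at some level of $A\tilde{x}$ equals exactly $k$; whenever the gap is zero, $\gclique_k(A\tilde{x}) = t_{[n]} \leq \frac{1}{n}\mathbf{1}^\T A\tilde{x}$, and this average is $1/2 + o(1)$ by degree concentration in $\GGG(n,1/2)$. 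At the extreme $\tilde{x} = e_j$ the level coincidence amounts to $\deg(j) = k$, which has probability $2^{-(1-o(1))(n-1)}$ and is absorbed by a trivial union bound over $j$. The technical heart of claim (b) is extending this ``no level-count coincidence'' argument to all concentrated $\tilde{x}$ via a stratification by effective sparsity together with local-CLT estimates for Binomial point probabilities; given the generosity of $k = \omega(\log^2 n)$, this should go through, and combining the two regimes via the Lipschitz reduction establishes (b) and hence the theorem.
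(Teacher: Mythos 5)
Your planted-case analysis (claim (a)) matches the paper's Lemma~\ref{lem:planted}, and your reduction to $s$-uniform vectors via Lipschitz continuity plus $O(n)$-stability is the same high-level move the paper makes in Lemma~\ref{lem:randomgraph}. But your argument for claim (b) --- the crucial upper bound in the random-graph case --- diverges from the paper's and contains a real gap that you yourself flag with ``this should go through.''

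Your spread/concentrated dichotomy is not clean and the concentrated branch does not actually close. In the concentrated regime you want to argue that the gap $t_{[k]} - t_{[k+1]}$ vanishes whp because it requires a level-count coincidence $N_{\geq v} = k$. But for a concentrated $\tilde{x}$ there can be levels $v$ with $\Pr_{\DDD}[r \geq v] \approx k/n$, for which the local-CLT probability that $\mathrm{Bin}(n, q) = k$ is only $\Theta(1/\sqrt{k})$, not $2^{-\Omega(k)}$. That polynomial bound is nowhere near small enough to survive the union bound over the $n^{O(\log n / \delta^2)}$ $s$-uniform vectors, nor even over $\mathrm{poly}(n)$ of them. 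Moreover when the gap is nonzero it can be as large as $(s-2)/s \approx 1$ if the multiplicities are lopsided, so you cannot simply bound the gap either. There is also a regime-coverage hole: ``spread'' as you define it via $\|\tilde{x}\|_2 \ll \delta/\sqrt{\log n}$ is only barely reachable by $s$-uniform vectors (whose minimum $2$-norm is $\Theta(\delta/\sqrt{\log n})$), while your ``concentrated'' analysis is really governed by support size, a different quantity; vectors like $\tilde{x} = \frac{s-1}{s}e_1 + \frac{1}{s}e_2$ fall in neither branch cleanly.

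The idea you are missing is the symmetry of the per-entry distribution. The paper observes that for a fixed $x$ the distribution $\DDD$ of $a \cdot x$ with $a$ uniform in $\{0,1\}^n$ is symmetric about $\frac{1}{2}$ (replace $a$ by $\mathbf{1}-a$). Hence $p := \Pr_{\DDD}[r \geq 3/4] = \Pr_{\DDD}[r \leq 1/4]$. If $\gclique_k(t) > 3/4$ then (after a routine Chernoff step placing $t_{[n]} \leq 1/2 \leq t_{[k+1]}$) exactly $k$ coordinates must lie in $[3/4, 1]$ and the remaining $n-k$ must lie in $(1/4, 3/4)$, an event of probability $\binom{n}{k} p^k (1-2p)^{n-k}$; maximizing over $p$ gives $2^{-\Omega(k)}$ \emph{uniformly in $x$}, with no case split at all. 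This is what makes the union bound over $n^{O(\log n)}$ sparse vectors go through using $k = \omega(\log^2 n)$. Also note the paper only targets the much weaker bound $3/4 + \eps$ on the random side, which is all that is needed; your attempted $1/2 + o(1)$ is an unnecessary strengthening.
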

\noindent To prove Theorem \ref{thm:hardness:nostable}, we show that $\max_{x \in \Delta_n} \gclique_k(A x)$ is arbitrarily close to $1$ with high probability when $A$ is the adjacency matrix of $G \sim \G(n,\frac{1}{2},k)$, and is bounded away from $1$ with high probability when $A$ is the adjacency matrix of $G \sim \G(n,\frac{1}{2})$. For convenience, and without loss of generality, we assume that both random graphs include each self-loop with probability $\frac{1}{2}$ --- i.e., diagonal entries of the adjacency matrix $A$ are independent uniform draws from $\set{0,1}$ in both cases. Our argument is captured by the following two lemmas.





\begin{lemma}\label{lem:planted}
Fix a constant $\epsilon > 0$. Let $G \sim \GGG(n, \Half, k)$, and let $A$ be its adjacency matrix. With probability $1-o(1)$, there
 exists an $x \in \Delta_n$ such that $\gclique_k(A x) \geq 1 - \epsilon$.
\end{lemma}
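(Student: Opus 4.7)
The plan is to use the uniform distribution on the planted clique as our witness $x$, and then show that $t := Ax$ has exactly $k$ entries very close to $1$, with all remaining entries tightly concentrated near $1/2$.

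Concretely, let $C \sse [n]$ denote the planted $k$-clique, and set $x_i = 1/k$ for $i \in C$ and $x_i = 0$ otherwise. For $i \in C$ we have $t_i = \frac{1}{k}\bigl(A_{ii} + \sum_{j \in C, j \neq i} A_{ij}\bigr) = \frac{1}{k}\bigl(A_{ii} + (k-1)\bigr) \geq \frac{k-1}{k}$, since the clique edges force $A_{ij} = 1$ for all distinct $j \in C$. For $i \notin C$ the random variable $t_i = \frac{1}{k}\sum_{j \in C} A_{ij}$ is an average of $k$ independent $\mathrm{Bernoulli}(1/2)$ variables, so it has mean $1/2$.

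The key step is concentration. Set $\eta = \sqrt{(2 \log n)/k}$; since $k = \omega(\log^2 n)$, we have $\eta = o(1)$. By Hoeffding's inequality, for any fixed $i \notin C$, $\Pr[\,|t_i - 1/2| > \eta\,] \leq 2 e^{-2 k \eta^2} = 2/n^4$. A union bound over the at most $n$ such indices shows that with probability $1 - o(1)$, every $i \notin C$ satisfies $t_i \in [1/2 - \eta, 1/2 + \eta]$. Condition on this event.

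For $n$ sufficiently large, $\frac{k-1}{k} > \frac{1}{2} + \eta$, so the top $k$ entries of $t$ are precisely the ones indexed by $C$. Consequently $t_{[k]} \geq \frac{k-1}{k}$, while $t_{[k+1]}$ and $t_{[n]}$ both lie in $[1/2 - \eta, 1/2 + \eta]$, giving $t_{[k+1]} - t_{[n]} \leq 2\eta$. Putting these together,
\[
\gclique_k(Ax) \;=\; t_{[k]} - t_{[k+1]} + t_{[n]} \;\geq\; \tfrac{k-1}{k} - 2\eta \;\geq\; 1 - \epsilon
\]
for all sufficiently large $n$, which is the desired bound. There is no real obstacle here: the argument is essentially just plugging the natural witness into the concentration estimate, and the hypothesis $k = \omega(\log^2 n)$ was tailored precisely to make $\eta \to 0$ after the union bound. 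The companion lemma (bounding $\max_x \gclique_k(Ax)$ away from $1$ for $G \sim \G(n,\tfrac12)$) will be where the real content of the reduction sits.
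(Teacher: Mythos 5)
Your proof is correct and follows essentially the same approach as the paper's: same witness $x$ (uniform over the planted clique), same Hoeffding-plus-union-bound concentration for the non-clique coordinates, and the same conclusion via the identity $\gclique_k(t) = t_{[k]} - (t_{[k+1]} - t_{[n]})$. The only cosmetic difference is that the paper fixes the deviation tolerance as the constant $\delta = \eps/3$, while you take $\eta = \sqrt{(2\log n)/k} = o(1)$; both choices are fine given $k = \omega(\log^2 n)$.
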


\begin{proof}
Let $\CCC$ denote the vertices of the planted $k$-clique.
We set $x_i = \frac{1}{k}$ if $i \in \CCC$ and $0$ otherwise.
Let $t = A x$.
For $i \in \CCC$, $t_i \geq 1 - \frac{1}{k}$.
On the other hand,
all other entries of $t$ concentrate around $\Half$
with high probability.
For $i \notin \CCC$, $t_i$ is simply the average of $k$ independent Bernoulli random variables
by definition of $\GGG(n, \Half, k)$; using Hoeffding's inequality, we bound the probability that $t_i$ deviates from its expectation by more than a constant $\delta>0$, to be chosen later:
\[
  \Pr \left[\left|t_i - \Half \right| > \delta \right]
  \leq 2 e^{-2 \delta^2 k}.
\]

By the union bound,
 $t_i \in [\Half - \delta, \Half + \delta]$
simultaneously for all $i \notin \CCC$
with probability at least $1 -  2^{\log n - \Omega(k)} = 1-o(1)$.
Thus $t_{[k+1]} - t_{[n]} \leq 2 \delta$ and
$\gclique_k(t) = t_{[k]} - (t_{[k+1]} - t_{[n]})
\geq 1 - \frac{1}{k} - 2 \delta$ with probability $1-o(1)$. Choosing $\delta = \eps/3$, we conclude that  $\gclique_k(t) \ge 1 - \eps$ with probability $1 - o(1)$.
\end{proof}


\begin{lemma}\label{lem:randomgraph}
Fix a constant $\epsilon > 0$. Let $G \sim \GGG(n, \Half)$, and let $A$ be its adjacency matrix. With probability $1-o(1)$,  
$\gclique_k(Ax) \leq \frac{3}{4} + \epsilon$ for all $x \in \Delta_n$.
\end{lemma}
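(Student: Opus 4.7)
The plan is to mirror the proof of Lemma~\ref{lem:planted}: reduce to a net of sparse mixtures, then union bound, but now uniformly in $x \in \Delta_n$. Via the Hoeffding-based sampling argument underlying Theorem~\ref{thm:main}, every $x \in \Delta_n$ admits an $s$-uniform approximant $\tilde{x}$ with $s = O(\log n / \eps^2)$ satisfying $\|A\tilde{x} - Ax\|_\infty \le \eps/6$; since $\gclique_k$ is $3$-Lipschitz in $L^\infty$, it suffices to prove $\gclique_k(A\tilde{x}) \le 3/4 + \eps/2$ simultaneously for all $s$-uniform $\tilde{x}$, of which there are at most $n^s = 2^{O(\log^2 n / \eps^2)}$. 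Because $k = \omega(\log^2 n)$, any per-$\tilde{x}$ failure probability of the form $e^{-\Omega(k)}$ beats the net size and gives an $o(1)$ union bound.

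For fixed $\tilde{x}$, set $t := A\tilde{x}$ and $\gamma := \|\tilde{x}\|_2^2$; the $t_i$'s for $i \notin \mathrm{supp}(\tilde{x})$ are mutually independent because distinct such rows of $A$ share no entries on columns of $\mathrm{supp}(\tilde{x})$. In the \emph{spread case} $\gamma \le 1/(144 \log n)$, Hoeffding gives $\Pr[t_i > 7/12] \le e^{-1/(72\gamma)} \le n^{-2}$, so the expected count of entries exceeding $7/12$ among the $\ge n - s$ independent ones is $\le 1/n$. A Chernoff bound then ensures that fewer than $k$ entries exceed $7/12$ and, symmetrically, fewer than $k$ fall below $5/12$, both with failure probability $e^{-\Omega(k)}$. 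In this event $t_{[k]} \le 7/12$, $t_{[k+1]} \ge 5/12$, and $t_{[n]} \le 7/12$ (since at least one entry lies in $[5/12, 7/12]$), so $\gclique_k(t) \le 7/12 - 5/12 + 7/12 = 3/4$.

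In the \emph{concentrated case} $\gamma > 1/(144 \log n)$, I decompose $\tilde{x} = \alpha e_{j^*} + (1-\alpha) y$ with $\alpha := \|\tilde{x}\|_\infty \ge \gamma$ and $j^* := \argmax_j \tilde{x}_j$. The rows split via $A_{\cdot, j^*}$ into $H := \{i : A_{ij^*} = 1\}$, on which $t_i = \alpha + (1-\alpha)(Ay)_i \in [\alpha, 1]$, and $L := \{i : A_{ij^*} = 0\}$, on which $t_i = (1-\alpha)(Ay)_i \in [0, 1-\alpha]$. Chernoff places $|H|, |L|$ in $n/2 \pm O(\sqrt{n \log n})$, both exceeding $k$ since $k = o(\sqrt{n})$, so $t_{[k]}, t_{[k+1]} \in H$ and $t_{[n]} \in L$. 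I then iterate the same case analysis on $y$ within the nested submatrix of $A$ (whose entries on columns $\ne j^*$ remain i.i.d.\ Bernoulli$(1/2)$ for the rows indexed by $H$ and by $L$); after at most $O(\log n)$ levels either the nested scaling factor $\prod_\ell (1 - \alpha_\ell)$ shrinks below any constant, or the residual mixture triggers the spread case at some intermediate level, yielding $\gclique_k(t) \le 3/4 + O(\eps)$.

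The main obstacle is the concentrated case. The naive single-level bound gives only $\gclique_k(t) \le (1-\alpha) + (1-\alpha) = 2(1-\alpha)$, which suffices only when $\alpha \ge 5/8$; for smaller $\alpha$ we must recurse into the submatrix indexed by $H$, using that $A$ restricted to these rows and to columns $\ne j^*$ retains its independent Bernoulli$(1/2)$ distribution. Handling this rigorously requires tracking independence through the recursion and checking that the nested ``inner'' problem still has enough rows ($|H| \gg k$) for the Chernoff bounds to fire, which is ensured by $k = o(\sqrt{n})$ combined with stopping the recursion after $O(\log n)$ levels.
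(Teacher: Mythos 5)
Your reduction to the $s$-uniform net is correct and matches the paper's. Your \emph{spread case} $\gamma = \|\tilde{x}\|_2^2 \le 1/(144\log n)$ is also sound: Hoeffding on each independent coordinate, then a Chernoff/union argument to cap the counts of entries above $7/12$ or below $5/12$ at fewer than $k$, giving $\gclique_k(t) \le 3/4$. But the \emph{concentrated case} has two genuine gaps that prevent the argument from closing.

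First, the claim ``$t_{[k]}, t_{[k+1]} \in H$ and $t_{[n]} \in L$'' is unjustified for $\alpha < 1/2$. All you know is that entries on $H$ lie in $[\alpha, 1]$ and entries on $L$ lie in $[0, 1-\alpha]$; these ranges overlap heavily when $\alpha$ is small, so nothing forces the top-$(k{+}1)$ order statistics into $H$ nor the minimum into $L$. Consequently the ``naive single-level bound $\gclique_k(t) \le 2(1-\alpha)$'' is also not established. Second, the recursion does not terminate where you need it to: in the concentrated case you only know $\alpha_\ell \ge \gamma_\ell > 1/(144\log n)$, and the row budget lets you recurse for only $O(\log(n/k)) = O(\log n)$ levels. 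That yields $\sum_\ell \alpha_\ell = O(1/144)$ and hence $\prod_\ell(1-\alpha_\ell) \ge e^{-O(1/144)}$ --- a constant bounded away from $0$, \emph{not} ``below any constant.'' Alternatively, the recursion ends in $\le s$ steps once the support of $y$ is exhausted, but you have said nothing about what bound on $\gclique_k(t)$ is extracted at the leaves.

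The paper avoids cases and recursion entirely by exploiting a symmetry that your proposal does not use. For any fixed $x \in \Delta_n$, each coordinate $t_i = (Ax)_i$ is a draw from the distribution $\DDD$ of $a \cdot x$ with $a$ uniform on $\{0,1\}^n$, and $\DDD$ is symmetric about $1/2$: $\Pr_\DDD[r] = \Pr_\DDD[1-r]$. Symmetry gives both the range inequality $t_{[n]} \le 1/2 \le t_{[k+1]}$ (whp via Chernoff, since $\Pr_\DDD[r \ge 1/2] \ge 1/2$) and, crucially, the identity $p := \Pr_\DDD[r \ge 3/4] = \Pr_\DDD[r \le 1/4]$. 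Conditional on the range inequality, $\gclique_k(t) > 3/4$ forces exactly $k$ coordinates into $[3/4,1]$ and all others into $(1/4,3/4)$, an event of probability $\binom{n}{k}p^k(1-2p)^{n-k} \le \max_p \binom{n}{k}p^k(1-2p)^{n-k} = 2^{-\Omega(k)}$ (the maximizer is $p = k/(2n)$, using $k=o(\sqrt{n})$). This bound holds for every $x$ simultaneously --- sparse or dense, spread or concentrated --- which is exactly what your case analysis was struggling to achieve, and it union-bounds cleanly against the $n^{O(\log n)}$-size net.
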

\begin{proof}
Recall that $\gclique_k$ is $O(1)$-Lipschitz and --- like any other function from the hypercube to the bounded interval --- $O(n)$-stable. If there exists $x^*$ such that $\gclique_k(Ax^*) \geq \frac{3}{4} + \eps$, then Theorem \ref{thm:main}  implies that there is an integer $s=O(\log n)$ and an $s$-uniform vector $\tilde{x}$ such that $\gclique_k(A\tilde{x}) > \frac{3}{4}$. There are $n^s$ such vectors. We next show that for an arbitrary fixed vector $x \in \Delta_n$ the probability that $\gclique_k(A x) > \frac{3}{4}$  is at most $2^{-\Omega(k)}$.  This would complete the proof by the union bound, since $1- n^s \cdot 2^{-\Omega(k)} = 1- o(1)$.

Fix $x \in \Delta_n$, and let $t=Ax$. Define $\D$ as the distribution supported on $[0,1]$ which is sampled as follows: draw $a$ uniformly from $\set{0,1}^n$, and output $a \cdot x$. Since $A$ is the adjacency matrix of $G \sim \G(n,\frac{1}{2})$, each entry $t_i$ of $t$ can be viewed as an independent draw from $\D$. We exploit a key property of $\D$ in our proof, namely the fact that $\D$ is \emph{symmetric} about $\frac{1}{2}$. Formally we mean that $\Pr_\D [ r] = \Pr_\D [1-r]$ for all $r \in [0,1]$, and this follows easily from the definition of $\D$.

Symmetry of $\D$ implies that $\Pr_{r \sim \D} [ r \geq \frac{1}{2}] = \Pr_{r \sim \D} [ r \leq \frac{1}{2}] \geq \frac{1}{2}$. Recalling that $k = o(n)$ and that entries of $t$ are independent draws from $\D$, the Chernoff bound implies that the following holds with probability  at least $1 - 2^{-\Omega(n)}$:
\begin{equation}\label{equ:range}
t_{[n]} \leq \Half \leq t_{[k + 1]}.
\end{equation}
If $\gclique_k(t) > \frac{3}{4}$,
then the following two conditions must hold:
\begin{enumerate}
\item\label{item:topk} $t_{[k]} > \frac{3}{4}$, and
\item\label{item:rest}  $t_{[k + 1]} - t_{[n]} < \frac{1}{4}$.
\end{enumerate}
Condition~\ref{item:topk}
implies that the $k$ largest entries of $t$
are all at least $\frac{3}{4}$.
Furthermore, unless 
Inequality~\eqref{equ:range} is violated --- an event with small probability $2^{-\Omega(n)}$ ---
Condition \ref{item:rest} implies that remaining entries of $t$
are all strictly between $\frac{1}{4}$ and $\frac{3}{4}$. Let $p$ denote  $\Pr_{ r \sim \D} [ r  \leq \frac{1}{4}]$, also equal to  $\Pr_{r \sim \D} [ r \geq \frac{3}{4} ]$ by symmetry of $\D$. The probability that $k$ entries of $t$ are at least $\frac{3}{4}$ and all remaining entries are in $(\frac{1}{4},\frac{3}{4})$ is given by $\binom{n}{k} p^k (1-2p)^{n-k}$, which is maximized at $p = \frac{k}{2n}$, with maximum value $2^{-\Omega(k)}$. In summary, the probability that $\gclique_k(Ax) > \frac{3}{4}$ is at most $2^{-\Omega(k)} + 2^{-\Omega(n)} = 2^{-\Omega(k)}$, as needed.
\end{proof}

\subsection{NP-hardness of an additive FPTAS}
\label{sec:hardness:lottery}

Our last hardness proof rules out an additive FPTAS for the lottery design problem, a.k.a. mixture selection for the function $\gLot{}$, as defined in Section \ref{sec:lottery}.
We restrict attention to the weight vector $w$ assigning equal weight to all inputs to $\gLot{}$, and therefore omit references to the weight vector for the remainder of this section. 

\begin{theorem}\label{thm:lottery:hard}
The lottery design problem, a.k.a. mixture selection for $\gLot{}$, admits no additive FPTAS unless P = NP.
\end{theorem}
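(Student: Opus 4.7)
The plan is to show NP-hardness of an additive FPTAS by a polynomial-time Karp reduction from the NP-complete \emph{decision} version of maximum independent set: given a graph $G$ on $n$ vertices and an integer $k$, decide whether $G$ has an independent set of size at least $k$. The strategy is to construct a lottery-design instance for which the optimal revenue encodes the independence number $k^*=k^*(G)$ in such a way that distinct integer values of $k^*$ yield OPT values separated by at least $1/\poly(n)$. An additive FPTAS with $\varepsilon<1/\poly(n)$, whose runtime remains polynomial in $n$, would then distinguish the YES/NO cases and decide IS in polynomial time, contradicting $\mathrm{P}\neq\mathrm{NP}$.

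The construction mirrors the matrix used for $\gVotesum$ in Theorem~\ref{thm:voting:hard}, adapted to the $[0,1]$-entry constraint of lottery design. Given $G=(V,E)$, I would form $A\in[0,1]^{n\times n}$ with $A_{ii}=1$, $A_{ij}=0$ whenever $\{i,j\}\in E$, and $A_{ij}=c$ for non-adjacent $i\neq j$, for a suitable constant $c\in(0,1)$; weights are uniform. For any independent set $I$ of size $k$, the uniform lottery $x=\mathbf{1}_I/k$ at price $p=(1+(k-1)c)/k$ sells to precisely the $k$ types in $I$, giving revenue $(1+(k-1)c)/n$. Hence $\mathrm{OPT}\ge (1+(k^*-1)c)/n$.

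The crux of the proof is the matching upper bound $\mathrm{OPT}\le (1+(k^*-1)c)/n$. For any candidate $(x,p)$ and $S=\{i:(Ax)_i\ge p\}$, summing the threshold inequality over $i\in S$ and using the explicit form of $A$ yields
\[
(p-c)\,|S|\;\le\;(1-c)\sum_{i\in S}x_i\;-\;c\sum_{i\in S}\sum_{j\in N(i)}x_j.
\]
For $p>c$ this gives the LP-type bound $|S|\le (1-c)/(p-c)$. A key structural observation is that when this bound is approached, the right-hand side must be tight, which forces $\mathrm{supp}(x)\subseteq S$ and $N(i)\cap\mathrm{supp}(x)=\emptyset$ for every $i\in S$. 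For any $i\in S\setminus \mathrm{supp}(x)$ one then computes $(Ax)_i=c<p$, a contradiction; hence $S=\mathrm{supp}(x)$ is an independent set in $G$, so $|S|\le k^*$. Combining this with a short case analysis over the ranges $p\le c$, moderate $p$, and $p$ near 1 yields the bound $p|S|/n \leq (1+(k^*-1)c)/n$ throughout.

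The main obstacle is neutralizing ``degenerate'' configurations that could inflate the revenue outside the regime the LP bound controls---most notably lotteries concentrated on a single column, which at low prices can produce spuriously large $|S|$ of size roughly $n-\deg(i)$. To resolve this I anticipate needing either to preprocess $G$ (for instance, by ensuring a sufficiently high minimum degree while preserving hardness of decision IS, via standard gadget-based reductions) or to augment $A$ with dummy rows/columns chosen so that single-item lotteries become suboptimal. Once the upper bound is in place, $\mathrm{OPT}=(1+(k^*-1)c)/n$ takes distinct values with gaps at least $c/n$, so an additive approximation with error $<c/(2n)$ distinguishes $k^*\ge k$ from $k^*<k$; since this error is $1/\poly(n)$, an FPTAS would run in polynomial time and decide IS, finishing the proof.
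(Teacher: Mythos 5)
Your proposal has the right high-level plan (reduce from decision IS with a $1/\poly(n)$ gap), and your matrix $A$ and lower bound are fine, but the claimed upper bound $\mathrm{OPT}\le(1+(k^*-1)c)/n$ is false, and the failure is not in the ``degenerate'' cases you flag (single-column lotteries), so your proposed fixes do not address it. Consider $G$ a perfect matching on $n$ vertices, so $k^*=n/2$. With your $A$ and the uniform lottery $x_i=1/n$, every coordinate of $Ax$ equals $\frac{1}{n}\bigl(1+(n-2)c\bigr)$, so charging that price sells to all $n$ types and gives revenue $\frac{1+(n-2)c}{n}$, which strictly exceeds $\frac{1+(k^*-1)c}{n}=\frac{1+(n/2-1)c}{n}$ for every $n>2$ (concretely, for $n=4$, $c=1/4$: the uniform lottery gives $3/8$ while your formula predicts $5/16$). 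The underlying reason is visible in your own inequality: $|S|\le(1-c)/(p-c)$ is far from tight when $p$ is only moderately above $c$, and in that regime $p\cdot|S|$ can be about twice your target even when $k^*$ is small. The tightness argument you invoke (``right-hand side must be tight'') only kicks in when $|S|$ exactly matches the LP bound, and that never needs to happen.

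The paper avoids this by using a different mechanism that your sketch does not contain: it appends $8n^2$ dummy \emph{rows}, each equal to the constant vector $(p^*,\dots,p^*)$ with $p^*=\tfrac12+\tfrac{1}{8n}$. Because a dummy row values every lottery at exactly $p^*$, these rows make dropping below price roughly $p^*$ essentially cost-free (all dummies buy anyway) and make raising the price above $p^*$ catastrophic (you lose almost all the mass). This pins the optimal price to essentially $p^*$, at which point only an independent set of the non-dummy types can buy (by the analogue of your structural observation, here stated as Observations~\ref{obs:fromIS} and \ref{obs:toIS}), giving $\mathrm{OPT}=p^*\cdot\frac{8n^2+k^*}{8n^2+n}$ exactly. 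The gap between consecutive values of $k^*$ is then $\Theta(1/n^2)$, and the hardness of additively approximating IS to within $\eps n$ finishes the argument. If you want to salvage your version, you need some analogue of this price-pinning gadget; preprocessing $G$ or making single-item lotteries suboptimal will not eliminate the uniform-lottery counterexample.
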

\begin{proof}
The proof involves a reduction
from the independent set problem which is very similar to the reduction in Section \ref{sec:hardness:nolip}, so we only detail the necessary modifications. Given an $n$-node undirected graph $G$, we define an $n \times n$ matrix $A=A(G)$ as in Section \ref{sec:hardness:nolip}, though shifted and normalized so entries lie in $[0,1]$. Specifically, we set diagonal entries of $A$  to $\frac{3}{4}$, and we set an off-diagonal entry $A_{ij}$ to $0$ if $i$ and $j$ share an edge and to $\frac{1}{2} - \frac{1}{8n}$ otherwise. Observation \ref{obs:toIS}  implies that for every $x \in \Delta_n$, entries of $A x$ which are at least $\frac{1}{2}$ correspond to an independent set of $G$. Observation \ref{obs:fromIS} implies that there is a vector $x^* \in \Delta_n$ so that $A x^*$ has exactly $\OPTIS(G)$ entries no less than $p^* := \frac{1}{2} + \frac{1}{8n}$. 

Our input to the lottery design problem will be a valuation matrix $B$ with $8n^2 + n$ rows and $n$ columns, obtained from $A$ by adding $8n^2$ ``dummy'' rows, each of which is $(p^*,\ldots,p^*)$. Setting a price of $p^*$ for the lottery $x^*$ results in expected revenue $r^* := p^* \cdot \frac{8n^2 + \OPTIS}{8n^2 + n} = (\frac{1}{2} + \frac{1}{8n}) \cdot \frac{8n^2 + \OPTIS}{8n^2 + n} > \frac{1}{2}$. Therefore, $r^*$ lower-bounds $\max_{x \in \Delta_n} \gLot{}(B x)$. We claim that $r^*$ is in fact the maximum value of this mixture selection problem, and prove this by fixing an arbitrary lottery $x \in \Delta_n$ and conducting a simple case analysis on the associated price $p$:

\begin{itemize}
\item When $p^* < p \leq 1$, none of the ``dummy'' types purchase the lottery $x$, resulting in expected revenue at most $\frac{n}{8n^2 + n} < \frac{1}{2} < r^*$.
\item When $\frac{1}{2} \leq p \leq p^*$, all dummy types purchase $x$, and the $i$'th non-dummy type purchases $x$ only if $(A x)_i \geq \frac{1}{2}$. Since entries of $A x$ which are no less than $\frac{1}{2}$ correspond to an independent set of $G$, at most $8n^2 + \OPTIS$ types purchase the lottery $x$, resulting in expected revenue at most $p \cdot \frac{8n^2 + \OPTIS}{8n^2 + n} \leq r^*$.
\item When $0 \leq p < \frac{1}{2}$, the best case scenario is that all buyer types purchase the lottery at price $p$, yielding expected revenue at most $p < \frac{1}{2} < r^*$.
\end{itemize}

Recalling that there exists a constant $\eps > 0$ for which the maximum independent set problem does not admit an additive $\eps n$-approximation algorithm in polynomial time, we conclude that mixture selection for $\gLot{}$ does not admit a polynomial-time additive $\eps'$-approximation algorithm for any $\eps'=\eps'(n) \leq (\frac{1}{2} + \frac{1}{8n}) \frac{\eps}{8n +1}$. This rules out an additive FPTAS.
\end{proof}

\newcommand{\HC}{\ensuremath{\{-1, 1\}^n}\xspace}
\newcommand{\Fstab}{Algebraic Stability\xspace}
\newcommand{\fstab}{algebraically stable\xspace}
\newcommand{\pD}{\ensuremath{p_{\DDD}}\xspace}
\newcommand{\Four}[1]{\widehat{#1}\xspace}
\newcommand{\Bone}{\bm{1}}
\newcommand{\kStab}{k}

\section{Connection to Boolean Function Analysis}
\label{sec:fourier}
In Definition \ref{def:strong-stability}, stability of a function from the solid hypercube to the bounded interval was defined as robustness to corruption patterns which follow a light distribution. In this section, we exhibit a closely-related algebraic notion of  stability, based on Fourier analysis of Boolean functions. Instead of using light distributions directly, we describe the effects of corruption on a function $g$ at input $t$ in the solid hypercube by a Boolean function $h_t$. In particular, $h_t$ takes as input a vector $z \in \HC$, interprets $z_i=1$ as forbidding corruption of input $t_i$ and $z_i=-1$ as permitting corruption of $t_i$, and considers the worst-case such corruption of $t$. Formally, denoting $\III^+(z) := \{i \in [n]: z_i = 1\}$ and $\III^-(z) := \{i \in [n] : z_i = -1\}$, we define Boolean extensions of $g$ at an input $t$.



\begin{definition}[Boolean extension]
\label{def:bool_ext}
 Let $g: [-1,1]^n \to [-1,1]$, and let $t \in [-1, 1]^n$.
A Boolean function $h_t:\HC \to [-1,1]$ is called a Boolean extension of $g$ at $t$ if it satisfies:
\begin{align}
h_t(\bm{1}) &= g(t) \\
h_t(z) &\le \min \{ g(t'): t' \corr{\I^-(z)} t \}.
\end{align}
\end{definition}
%
Next we define \Fstab for $g(t)$, using the notion of the Fourier transform of a Boolean function (e.g. \cite{odonnell_book}).
\begin{definition}[\Fstab]
\label{def:four_stab}
A function $g(t): [-1,1]^n \rightarrow [-1, 1]$ is algebraically $k$-stable if for every $t \in [-1, 1]^n$ there exists a Boolean extension $h_t(z)$ at $t$ such that:
\begin{align}
\Four{h_t}(S) &\ge 0 \text{ for all } S \sse [n] \\
\widehat{h_t}(S) &= 0 \text{ for all } S \text{ such that } |S| > k,
\end{align}
where $\widehat{h_t}(S)$ is the Fourier coefficient of $h_t$ at $S \sse [n]$.
\end{definition}

In the parlance of Boolean function analysis, $g$ is
\emph{algebraically stable}
if for all $t$, the Fourier spectrum of $h_t$ is both nonnegative and \emph{low-degree}. 
%
We prove the following  analogue of Theorem~\ref{thm:main}. 

\begin{theorem}
  \label{thm:four_main}
Let $g:\gdom \rightarrow [-1, 1]$ be algebraically $k$-stable and \cLip in $L^\infty$,
let $A$ be an $\numrow \times \numcol$ matrix with entries in $[-1, 1]$,
let $\delta,\eps > 0$,
and let  $s > \frac{1}{\delta^2} \cdot \log \frac{2k}{\eps}$ be an integer. 

Fix a vector $x \in \Delta_\numcol$,
and let the random variable $\tilde{x}$
denote the empirical distribution of $s$ i.i.d.\ samples
from probability distribution $x$.
The following then holds:
\begin{equation} 
\Exp_{\tilde{x}} [g(A  \tilde{x})] \ge g(A x) - \eps - c \delta,
\end{equation}
\end{theorem}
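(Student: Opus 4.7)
The plan is to follow the structure of the proof of Theorem~\ref{thm:main}, replacing noise stability with a Fourier-analytic estimate supplied by algebraic $k$-stability. Set $t = Ax$ and $\tilde{t} = A\tilde{x}$, and let $R := \SpecSet{i \in [n]}{|t_i - \tilde{t}_i| > \delta}$ be the random set of ``corrupted'' coordinates. Hoeffding's inequality applied to each entry of $\tilde{t}$, together with the choice of $s$, ensures $\Prob[i \in R] \le \alpha$ for each $i \in [n]$, where $\alpha$ is of order $\eps/k$. Associate to $R$ the random sign vector $z \in \HC$ defined by $z_i = -1$ iff $i \in R$, so that $\III^-(z) = R$.

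Next, I would invoke algebraic $k$-stability at the point $t$ to obtain a Boolean extension $h_t : \HC \to [-1,1]$ of $g$ whose Fourier coefficients $\Four{h_t}(S)$ are nonnegative and supported on sets with $|S| \le k$. Define $t' \in \gdom$ to agree with $\tilde{t}$ on $R$ and with $t$ elsewhere; then $t' \corr{R} t$ and $\infNorm{t' - \tilde{t}} \le \delta$. Definition~\ref{def:bool_ext} gives $h_t(z) \le g(t')$, while $c$-Lipschitz continuity of $g$ in $L^\infty$ gives $g(t') \le g(\tilde{t}) + c\delta$. Taking expectations yields $\Ex[g(\tilde{t})] \ge \Ex[h_t(z)] - c\delta$, so it remains only to show that $\Ex[h_t(z)] \ge g(t) - \eps$.

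This Fourier estimate is the heart of the argument, and is where I expect the main work to lie. Writing $h_t(z) = \sum_{S \sse [n]} \Four{h_t}(S)\,\chi_S(z)$ with $\chi_S(z) = \prod_{i \in S} z_i$, and using $h_t(\Bone) = g(t)$, one obtains
\begin{equation*}
g(t) - \Ex[h_t(z)] \;=\; \sum_{S \sse [n]} \Four{h_t}(S) \bigl(1 - \Ex[\chi_S(z)]\bigr).
\end{equation*}
The $S = \emptyset$ term vanishes, and for $|S| \ge 1$ the identity $1 - \Ex[\chi_S(z)] = 2\Prob[|S \cap R| \text{ odd}]$, combined with the union bound $\Prob[|S \cap R| \ge 1] \le \sum_{i \in S} \Prob[i \in R] \le |S|\alpha$, gives $1 - \Ex[\chi_S(z)] \le 2|S|\alpha$. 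Using nonnegativity of $\Four{h_t}(S)$ together with the degree restriction $|S| \le k$ and $\sum_S \Four{h_t}(S) = g(t) \le 1$, one concludes $g(t) - \Ex[h_t(z)] \le 2k\alpha \cdot g(t) \le \eps$ (with $\alpha$ chosen so that $2k\alpha \le \eps$, which Hoeffding guarantees for the stated value of $s$). Combining with the Lipschitz step yields $\Ex[g(A\tilde{x})] \ge g(Ax) - \eps - c\delta$ as claimed.

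The key subtleties are twofold. First, the coordinates of $R$ need not be independent—entries of $\tilde{t}$ are coupled through the common sample $\tilde{x}$—but the bound above relies only on the marginal probabilities $\Prob[i \in R]$ via the union bound, so arbitrary correlations are harmless. Second, the nonnegativity of the Fourier spectrum of $h_t$ is exactly what permits a term-by-term estimate with no cancellation, and the low-degree support $|S| \le k$ is what yields a bound depending on $k$ rather than on $n$. These two ingredients together play the role that light distributions play in the proof of Theorem~\ref{thm:main}.
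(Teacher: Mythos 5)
Your proposal is correct and follows essentially the same route as the paper's proof sketch: pick a Boolean extension $h_t$ at $t = Ax$, use the Lipschitz condition to pass from $\mathbb{E}[h_t(z)]$ to $\mathbb{E}[g(\tilde{t})]$ at a cost of $c\delta$, and then use nonnegativity together with the degree-$\le k$ support of $\widehat{h_t}$, plus a union bound over the at most $k$ coordinates in each $S$, to show $\mathbb{E}[h_t(z)] \ge g(t) - \eps$. The only difference is notational: you bound $1 - \mathbb{E}[\chi_S(z)] = 2\Pr[|S\cap R|\ \text{odd}]$ directly, while the paper phrases the same quantity as $1 - 2^n\widehat{p_{\mathcal D}}(S)$ via the Fourier coefficients of the corruption-pattern distribution; these agree since $\mathbb{E}[\chi_S(z)] = 2^n\widehat{p_{\mathcal D}}(S)$.
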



\begin{proof}[Proof sketch]
Let $t = A  x$.
Consider the random variable $\tilde{t} = A  \tilde{x}$ where $\tilde{x}$ is the (random) empirical distribution. Let $E_i$ denote the event that $|\tilde{t}_i - t_i| \le \delta$ for some $\delta \in (0, 1)$, i.e., that coordinate $t_i$ is approximately preserved. 
Define a Boolean function $\pD: \HC \rightarrow [0, 1]$,
\begin{equation}
\pD(z) := \Pr_{\tilde{x}}[(\cap_{i \in \III^+(z)} E_i)  \cap (\cap_{i \in \III^{-}(z)} \bar{E_i}) ].
\end{equation}
In particular, $\pD(z)$ is the probability that \emph{only} coordinates in $\III^+(z)$ are approximately preserved. It is easy to see that $\sum_{z \in \HC} \pD(z) = 1$ and $\Pr[E_i] = \Pr[|\tilde{t_i} - t| \le \delta] = \sum_{z: z_i = 1} \pD(z)$.

Next, we state some standard equalities for Boolean functions.
Let $\Bone$ denote the all-one $n$-dimensional vector $(1, 1, \ldots, 1)$ and $h_t$ be a Boolean extension of $g_t$
as stated in Definition~\ref{def:four_stab}.
\begin{equation}
\label{eq:g_exp}
\sum_{z \in \HC}  h_t(z) \cdot \pD(z) = 2^n \sum_{S \sse [n]} \Four{\pD}(S) \cdot \Four{h_t}(S) 
\end{equation}
\begin{equation}
\sum_{S \sse [n]} \Four{h_t}(S) = h_t(\Bone)
\end{equation}
By Definition~\ref{def:bool_ext}, $h_t(\Bone) = g(t)$.
Since $g(t)$ is $c$-Lipschitz in $L^{\infty}$, we know:
\begin{equation}
\label{eq:four_lip}
\Exp_{\tilde{x}}[g(A \tilde{x})] = \Exp_{\tilde{t}} [g(\tilde{t})] \ge \sum_{z \in \HC}  h_t(z) \cdot \pD(z) - c \delta,
\end{equation}
so it suffices to prove 
$\sum_{S \sse [n]} \Four{h_t}(S) - 2^n \sum_{S \sse [n]}  \Four{\pD}(S) \cdot \Four{h_t}(S) \le \eps$, which can be rewritten as:
\begin{equation}
\label{eq:four_error}
  \sum_{S:|S| \le k} (1 - 2^n\Four{\pD}(S)) \Four{h_t}(S) + \sum_{S:|S|>k} (1 - 2^n\Four{\pD}(S)) \Four{h_t}(S)
\end{equation}
As the latter term in \eqref{eq:four_error} is zero by definition \ref{def:four_stab}, we only need to upper bound $\sum_{S:|S| \le k} (1 - 2^n\Four{\pD}(S)) \Four{h_t}(S)$.
Using a simple union-bound argument, one can prove that with sample size $s \ge \frac{1}{\delta^2} \log \frac{2k}{\eps}$ , 
$\Four{\pD}(S) \ge \frac{1 - \eps}{2^n}$ for all $S \sse [n]$ with $|S| \le k$. As $\Four{h_t}(S) \ge 0$ for $S$ with $|S| \le \kStab$,
we have $\sum_{S:|S| \le k} (1 - 2^n\Four{\pD}(S)) \Four{h_t}(S) \le \eps \sum_{S \sse [n]} \Four{h_t}(S) \le \eps$. Combined with equation $\eqref{eq:four_lip}$
we have $\Exp[g(A\tilde{x})] \ge g(t) - \eps - c \delta$.
\end{proof}

\subsection{\Fstab for our Applications}
We observe that algebraic stability holds for the objective functions in some our applications, and leave open whether it holds for all. Specifically,  the relevant functions in lottery design, revenue maximizing signaling, and one of our two voting problems, are algebraically $O(1)$-stable, and therefore an additive PTAS for each of those applications follows from Theorem~\ref{thm:four_main} just as it did from Theorem~\ref{thm:main}. We list the Boolean extension associated with each of these applications. 

\begin{lemma}[Lottery design]
For a fixed price $p$, a Boolean extension of the objective function $\gLot{\Weight,p}(t) := p \cdot \sum_{i = 1}^{n} (\Weight_i \cdot I[t_i \geq p])$ at $t$ is $h^{\text{(lottery)}}_t(z) = p \cdot \sum_{i \in [n]} w_i  \frac{z_i + 1}{2} I[t_i \ge p]$. We enumerate over all prices $p$, up to a suitable discretization, in the algorithmic solution.
\end{lemma}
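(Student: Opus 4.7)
The plan is to verify the two defining properties of a Boolean extension from Definition~\ref{def:bool_ext} and then identify the Fourier spectrum of $h^{\text{(lottery)}}_t$ to confirm algebraic $1$-stability of $\gLot{\Weight,p}$ in the sense of Definition~\ref{def:four_stab}. The factor $\frac{z_i+1}{2}$ is engineered to evaluate to $1$ on the uncorrupted coordinates ($z_i = 1$) and to $0$ on the corrupted ones ($z_i = -1$), which is precisely what is needed to match the corruption model.

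For the extension conditions, I would first check $h^{\text{(lottery)}}_t(\Bone) = \gLot{\Weight,p}(t)$, which is immediate since every $\frac{z_i + 1}{2}$ equals $1$ when $z = \Bone$. For the inequality $h^{\text{(lottery)}}_t(z) \le \min\{\gLot{\Weight,p}(t') : t' \corr{\III^-(z)} t\}$, I would exploit the vanishing of the coefficient $\frac{z_i+1}{2}$ on $\III^-(z)$ to rewrite $h^{\text{(lottery)}}_t(z) = p \sum_{i \in \III^+(z)} w_i \, I[t_i \ge p]$. For any admissible $t'$ one has $t'_i = t_i$ on $\III^+(z)$, so $\gLot{\Weight,p}(t') = p \sum_{i \in [n]} w_i \, I[t'_i \ge p] \ge p \sum_{i \in \III^+(z)} w_i \, I[t_i \ge p] = h^{\text{(lottery)}}_t(z)$, where the last inequality uses only that the discarded terms are nonnegative.

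For the Fourier content, I would expand $\frac{z_i+1}{2} = \frac{1}{2} + \frac{1}{2} z_i$ to write $h^{\text{(lottery)}}_t$ as the sum of the nonnegative constant $\frac{p}{2} \sum_i w_i \, I[t_i \ge p]$ and a nonnegative linear combination of the coordinates $z_i$ with coefficients $\frac{p \, w_i}{2} I[t_i \ge p]$. Reading off the character basis $\chi_S(z) = \prod_{i \in S} z_i$, this immediately gives $\Four{h^{\text{(lottery)}}_t}(S) \ge 0$ for all $S$ and $\Four{h^{\text{(lottery)}}_t}(S) = 0$ whenever $|S| \ge 2$, establishing algebraic $1$-stability. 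The algorithmic remark about enumerating prices is then separate and routine: since $\gLot{\Weight}(t) = \max_p \gLot{\Weight,p}(t)$ with $p \in [0,1]$, one grids $p$ at resolution $\epsilon$ and invokes Theorem~\ref{thm:four_main} on each $\gLot{\Weight,p}$, with the discretization loss absorbed by the shift argument of Lemma~\ref{lem:gLot_lipschitz}. I do not expect any real obstacle --- the construction of $h^{\text{(lottery)}}_t$ has essentially hard-coded the Boolean-extension and low-degree nonnegative-Fourier properties into the factor $\frac{z_i+1}{2}$.
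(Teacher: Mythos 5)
Your verification is correct and is essentially the only natural route; the paper states this lemma without a proof, so there is nothing to compare against beyond the definitions themselves. You correctly check both conditions of Definition~\ref{def:bool_ext} (the factor $\frac{z_i+1}{2}$ zeros out corrupted coordinates, and dropping nonnegative terms can only decrease the sum), and the expansion $\frac{z_i+1}{2} = \frac{1}{2} + \frac{1}{2}z_i$ exhibits the nonnegative degree-$\le 1$ Fourier spectrum needed for algebraic $1$-stability; the price-enumeration remark is handled in the standard way via the Lipschitz argument of Lemma~\ref{lem:gLot_lipschitz}.
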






\begin{lemma}[Persuasion in voting]
A Boolean extension of $\gVotesum$ at $t$ is $h^{\text{(vote-sum)}}_t(z) = \sum_{i \in [n]} \frac{I[t_i \ge 0]}{n} \frac{z_i + 1}{2}$.
\end{lemma}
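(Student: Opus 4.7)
The plan is to verify directly the two defining conditions of a Boolean extension (Definition~\ref{def:bool_ext}), and then -- since the real purpose of the lemma in this section is to witness algebraic stability -- to read off the Fourier spectrum of $h_t^{\text{(vote-sum)}}$ and check that it satisfies the nonnegativity and low-degree conditions of Definition~\ref{def:four_stab}. No ingredients beyond elementary algebra are needed.

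For the first condition, I would just plug in $z = \Bone$: each factor $(z_i+1)/2$ becomes $1$, so $h_t^{\text{(vote-sum)}}(\Bone) = \frac{1}{n}\sum_i I[t_i \ge 0] = \gVotesum(t)$. For the second condition, I would fix $z \in \HC$ and any $t' \corr{\I^-(z)} t$, observing that $t'_i = t_i$ for $i \in \I^+(z)$ and $t'_i \in [-1,1]$ is arbitrary for $i \in \I^-(z)$. Then
\[
\gVotesum(t') = \frac{1}{n}\sum_{i \in \I^+(z)} I[t_i \ge 0] + \frac{1}{n}\sum_{i \in \I^-(z)} I[t'_i \ge 0] \ge \frac{1}{n}\sum_{i \in \I^+(z)} I[t_i \ge 0] = h_t^{\text{(vote-sum)}}(z),
\]
where the last equality uses that $(z_i+1)/2$ is exactly $I[i \in \I^+(z)]$. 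Minimizing the left-hand side over admissible $t'$ (achieved by taking $t'_i = -1$ for $i \in \I^-(z)$) yields the desired inequality; in fact equality holds.

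For the Fourier part, I would expand $(z_i+1)/2 = \tfrac{1}{2} + \tfrac{1}{2} z_i$, obtaining
\[
h_t^{\text{(vote-sum)}}(z) = \frac{1}{2n}\sum_{i \in [n]} I[t_i \ge 0] + \sum_{i \in [n]} \frac{I[t_i \ge 0]}{2n}\, z_i,
\]
which immediately gives $\Four{h_t^{\text{(vote-sum)}}}(\emptyset) = \tfrac{1}{2n}\sum_i I[t_i \ge 0] \ge 0$, $\Four{h_t^{\text{(vote-sum)}}}(\{i\}) = \tfrac{I[t_i \ge 0]}{2n} \ge 0$, and $\Four{h_t^{\text{(vote-sum)}}}(S) = 0$ for all $|S| \ge 2$. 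Hence the Fourier spectrum is nonnegative and supported on levels $\le 1$, so together with the previous paragraph this witnesses that $\gVotesum$ is algebraically $1$-stable.

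There is essentially no obstacle here: the only mildly delicate point is the direction of the inequality in Definition~\ref{def:bool_ext}, which has to be handled correctly because the worst corruption of an approximately-preserved $t_i \ge 0$ is to drive it below $0$, lowering the vote count; the construction of $h_t^{\text{(vote-sum)}}$ handles this by only crediting indices $i \in \I^+(z)$ that are guaranteed uncorrupted. The lemma then combines with Theorem~\ref{thm:four_main} to give an additive PTAS for the voting application via the algebraic-stability route.
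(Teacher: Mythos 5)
Your verification is correct, and it is essentially the only sensible approach: the paper states this lemma without proof, and you supply exactly the direct verification that is implicitly asked for. Both the two conditions of Definition~\ref{def:bool_ext} (using $(z_i+1)/2 = I[i \in \I^+(z)]$ and noting equality is attained by pushing $t'_i = -1$ for $i \in \I^-(z)$) and the Fourier computation (expansion into levels $0$ and $1$ with nonnegative coefficients, giving algebraic $1$-stability) are exactly right; the extra paragraph on algebraic stability goes slightly beyond the lemma's literal claim but is clearly the intended use and is a welcome completion.
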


\begin{lemma}[Revenue maximization]
A Boolean extension of the function $\maxtwo:[0,1]^n \to [0,1]$ at $t$ is $h^{\text{(max2)}}_t(z) = t_j \frac{z_i + 1}{2} \frac{z_j + 1}{2}$, where $i$ and $j$ denotes the indices of the largest and second-largest entry of $t$, respectively.
\end{lemma}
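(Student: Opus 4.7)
The plan is to verify directly that $h^{\text{(max2)}}_t$ satisfies the two defining conditions of a Boolean extension in Definition~\ref{def:bool_ext}, and then to read off its Fourier expansion to confirm algebraic $2$-stability of $\maxtwo$ (which is what justifies invoking Theorem~\ref{thm:four_main} for the revenue-maximization application). The argument reduces to a short case analysis on whether the indices $i,j$ of the top two entries of $t$ lie inside $\III^-(z)$, and I do not expect a genuine obstacle beyond some minor bookkeeping around ties.

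First I would check the base condition at $z = \bm{1}$: both factors $\tfrac{z_i+1}{2}$ and $\tfrac{z_j+1}{2}$ equal $1$, so $h_t(\bm{1}) = t_j = t_{[2]} = \maxtwo(t)$ by the definition of $j$ as the index of the second-largest coordinate of $t$.

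Next I would verify the domination condition $h_t(z) \le \min\{\maxtwo(t') : t' \corr{\III^-(z)} t\}$ by splitting on $\{i,j\} \cap \III^-(z)$. If this intersection is empty, both top-two entries $t_i, t_j$ are preserved in every $t' \corr{\III^-(z)} t$, so $\maxtwo(t') \ge t_j$; meanwhile $h_t(z) = t_j$ by construction, and the inequality holds with equality. Otherwise at least one of $z_i, z_j$ equals $-1$, forcing one of the two factors to vanish, so $h_t(z) = 0$, which is a trivial lower bound because $\maxtwo(t') \ge 0$ on $[0,1]^n$.

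Finally, to witness algebraic stability, I would expand $h_t(z) = \tfrac{t_j}{4}(1 + z_i + z_j + z_i z_j)$ and observe that every Fourier coefficient is either $0$ or $t_j/4 \ge 0$, and that the nonzero ones are supported on $S \subseteq \{i,j\}$, hence of degree at most $2$. The only subtlety I foresee is a tie-breaking convention when the top values of $t$ repeat: any choice of indices $i,j$ realizing the top two entries works, since the first-case inequality $\maxtwo(t') \ge t_j$ only uses that the chosen pair of coordinates remain present in $t'$.
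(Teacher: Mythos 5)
Your proof is correct, and since the paper states this lemma without proof, your direct verification of the two conditions in Definition~\ref{def:bool_ext} together with the explicit Fourier expansion $h_t(z) = \tfrac{t_j}{4}(1 + z_i + z_j + z_i z_j)$ is exactly the argument the paper leaves implicit. The case split on $\{i,j\}\cap\III^-(z)$, the observation that preservation of $t_i$ and $t_j$ forces $\maxtwo(t')\ge t_j$, and the nonnegativity of $\maxtwo$ on $[0,1]^n$ in the remaining case are all right, as is the degree-2 bound witnessing algebraic $2$-stability.
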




\bibliographystyle{alpha}
\bibliography{bib/agt,bib/signal,bib/general}

\end{document}